\documentclass[preprint,12pt]{elsarticle}

\usepackage[T1]{fontenc}
\usepackage[utf8]{inputenc}
\usepackage{lmodern}
\usepackage[english]{babel}
\usepackage{microtype}
\usepackage{subcaption}   

\usepackage{amsmath,amssymb,amsfonts,amsthm,mathtools,bm}
\numberwithin{equation}{section}

\usepackage{graphicx}
\usepackage{booktabs}
\usepackage{multirow}
\usepackage{array}

\usepackage{algorithm}
\usepackage{algorithmic}

\usepackage{comment}      
\usepackage{xcolor}
\usepackage{enumitem}
\usepackage{afterpage}
\usepackage{hyperref}
\hypersetup{
    colorlinks = true,
    linkcolor  = blue,
    citecolor  = blue,
    urlcolor   = blue
}

\usepackage{setspace}
\theoremstyle{plain}
\newtheorem{theorem}{Theorem}[section]
\newtheorem{lemma}[theorem]{Lemma}

\newtheorem{corollary}[theorem]{Corollary}

\theoremstyle{definition}
\newtheorem{definition}[theorem]{Definition}

\theoremstyle{remark}
\newtheorem{remark}[theorem]{Remark}

\DeclareMathOperator{\rank}{rank}
\newcommand{\R}{\mathbb{R}}
\newcommand{\Q}{\mathbb{Q}}

\journal{Signal Processing}

\begin{document}

\begin{frontmatter}

\title{Affine Rank Minimization is ER-Complete}

\author[inst1]{Angshul Majumdar}

\affiliation[inst1]{organization={Indraprastha Institute of Information Technology},
            addressline={Delhi}, 
            city={New Delhi},
            postcode={110020},
            country={India}}

\begin{abstract}
We consider the decision problem \emph{Affine Rank Minimization}, denoted $\mathrm{ARM}(k)$: given rational matrices $A_1,\dots,A_q\in\mathbb{Q}^{m\times n}$ and scalars $b_1,\dots,b_q\in\mathbb{Q}$, decide whether there exists $X\in\mathbb{R}^{m\times n}$ such that $\langle A_\ell,X\rangle=b_\ell$ for all $\ell\in[q]$ and $\mathrm{rank}(X)\le k$.
We first prove membership: for every fixed $k\ge 1$, $\mathrm{ARM}(k)\in \exists\mathbb{R}$ by giving an explicit existential encoding of the rank constraint via a constant-size factorization witness.
We then prove $\exists\mathbb{R}$-hardness by a polynomial-time many-one reduction from $\mathrm{ETR}$ to $\mathrm{ARM}(k)$ with the input restricted to affine equalities plus a single global constraint $\mathrm{rank}(X)\le k$.
The reduction compiles an $\mathrm{ETR}$ formula into an arithmetic circuit in gate-equality normal form and assigns each circuit quantity to a designated entry of $X$; affine semantics (constants, copies, addition/negation) are enforced directly by linear constraints, while multiplicative semantics are enforced by constant-size rank-forcing gadgets whose soundness is certified by a fixed-rank ``gauge'' submatrix that removes factorization ambiguity.
We prove a composition lemma showing that gadgets can be embedded without unintended interactions, yielding global soundness and completeness of the encoding and preserving polynomial bounds on dimension and bit-length.
Consequently, $\mathrm{ARM}(k)$ is $\exists\mathbb{R}$-complete (in particular, $\mathrm{ARM}(3)$ is $\exists\mathbb{R}$-complete), establishing that feasibility of purely affine constraints under a fixed constant rank bound captures the full expressive power of real algebraic feasibility.
\end{abstract}

\begin{keyword}
Affine rank minimization; low-rank feasibility; existential theory of the reals; ER-completeness; algebraic complexity; polynomial-time reductions; arithmetic circuits; rank-constrained matrix completion
\end{keyword}

\end{frontmatter}


\section{Introduction}
\label{sec:intro}

\subsection{Motivation}
Low-rank structure is a unifying modelling principle across matrix completion, inverse problems, and system identification: one posits that the unknown object lives on (or near) a low-dimensional algebraic variety, and the observations impose linear constraints. In matrix completion and related recovery problems, low rank underpins both statistical guarantees and algorithmic design; convex surrogates such as nuclear-norm minimization are by now standard, with sharp recovery results under incoherence and sampling assumptions \cite{CandesRecht2009Exact,Recht2011Simpler,CandesTao2010Power,RechtFazelParrilo2010Guaranteed, BachmayrFaldum2024}. In control and system theory, rank minimization formulations arise naturally via Hankel-structured models, realization theory, and model reduction; early formulations and relaxations appear, for example, in \cite{Fazel2002Thesis,FazelHindiBoyd2004RankMin,Markovsky2012SLRA}. Across these domains, rank constraints are the \emph{only} source of nonlinearity, while the measurement model is affine.

This paper asks for the \emph{exact} worst-case complexity of the corresponding feasibility question in its cleanest decision form. Such a classification matters for two reasons. First, it explains why essentially all known practical approaches must either relax (e.g., nuclear norm) or accept nonconvexity (factorizations, alternating minimization) \cite{Fazel2002Thesis,RechtFazelParrilo2010Guaranteed}. Second, it pins down the correct complexity class for algebraic feasibility over the reals, ruling out ``NP-hard but maybe still discrete'' misunderstandings: the obstruction here is fundamentally semi-algebraic.

\subsection{Complexity background}
The natural reference point for feasibility of polynomial constraints over $\R$ is the \emph{existential theory of the reals} (ETR) and its associated class $\exists\R$. Problems complete for $\exists\R$ capture a broad spectrum of geometric and algebraic feasibility questions; in particular, $\exists\R$ sits between NP and PSPACE in the standard bit model, and the PSPACE upper bound follows from quantifier-elimination/decision procedures for the first-order theory of the reals \cite{Canny1988}. The Blum--Shub--Smale framework and the bit-model viewpoint clarify how real computation and algebraic encodings interact \cite{BlumCuckerShubSmale1998Complexity}, \cite{Cucker1992}.

Our goal is to locate the following matrix-feasibility problem precisely within this landscape: the input is \emph{purely affine} (linear equalities with rational coefficients), and all nonlinearity is confined to a \emph{single global rank constraint}. This is a particularly disciplined target format: it avoids presenting the instance with explicit minors, determinant predicates, or any additional polynomial constraints beyond ``$\rank(X)\le k$.''

\subsection{Contributions}
We study the decision problem \emph{Affine Rank Minimization} $\mathrm{ARM}(k)$ in the standard rational input model:
given
\begin{equation}
    A_1,\dots,A_q\in\Q^{m\times n},\qquad b_1,\dots,b_q\in\Q,
\end{equation}
decide whether there exists $X\in\R^{m\times n}$ such that
\begin{equation}
\label{eq:intro-arm}
\langle A_\ell, X\rangle=b_\ell\ \ (\ell\in[q]),\qquad \rank(X)\le k.
\end{equation}
The main results proved in the paper are:
\begin{itemize}
\item \textbf{Membership.} For every fixed $k\ge 1$, $\mathrm{ARM}(k)\in \exists\R$.
\item \textbf{$\exists\R$-hardness.} $\mathrm{ARM}$ is $\exists\R$-hard under polynomial-time many-one reductions from ETR.
\item \textbf{Fixed constant rank.} The hardness persists already for the fixed constant $k=3$, yielding $\exists\R$-completeness of $\mathrm{ARM}(3)$.
\item \textbf{Polynomial size and bit-length control.} The reduction outputs instances whose dimensions and total rational encoding length are polynomial in the source ETR instance size.
\end{itemize}
A key conceptual point (the ``Option B'' discipline in this project) is that the reduction produces instances \emph{strictly in the input format \eqref{eq:intro-arm}}:
only affine equalities on entries of a single matrix variable $X$ and one global constraint $\rank(X)\le 3$.
No explicit minor constraints, determinant constraints, or other polynomial side conditions appear in the produced ARM instance.

\subsection{Technical overview}
At a high level, we reduce ETR to $\mathrm{ARM}(3)$ by compiling arithmetic reasoning into affine constraints on \emph{designated entries} of one matrix, and using the global rank bound as the sole enforcement mechanism for nonlinear semantics.

\smallskip
\noindent\textbf{(i) ETR to arithmetic circuit constraints.}
A standard reduction converts a general ETR formula into an equisatisfiable conjunction of polynomial equalities, and then into a gate-by-gate constraint system for an arithmetic circuit (one equality per gate). Circuit normal forms of this kind are ubiquitous in algebraic reductions and provide direct bit-size control \cite{BlumCuckerShubSmale1998Complexity,Canny1988, Burgisser2000}.

\smallskip
\noindent\textbf{(ii) Gate compilation to affine constraints on a single matrix.}
Each gate value is assigned a \emph{carrier} position $(r_i,c_i)$ and represented by the designated entry $X_{r_i,c_i}$. Linear relations (constants, copies, addition, negation) become immediate affine equalities between carriers. The only genuinely nonlinear gate is multiplication (and, when needed, inversion/nonzero), which is enforced by constant-size \emph{rank-forcing gadgets} built from affine constraints on a fresh set of auxiliary rows/columns.

\smallskip
\noindent\textbf{(iii) Rank as the only nonlinearity: forcing via determinantal obstructions.}
The gadgets are designed so that any violation of the intended identity (e.g., $z=xy$) induces a $4\times 4$ submatrix of full rank, contradicting $\rank(X)\le 3$. This is conceptually aligned with the geometry of determinantal varieties (rank-bounded sets are Zariski-closed and cut out by minors) \cite{Landsberg2012Tensors}. Conversely, when the identity holds, the gadget admits an explicit rank-$3$ completion.

\smallskip
\noindent\textbf{(iv) Gauge fixing and well-defined decoding.}
Since rank-$3$ factorizations are non-unique up to an invertible change of basis, we pin a constant full-rank $3\times 3$ submatrix (a ``gauge block'') so that all decoded carrier values are invariant across feasible rank-$3$ realizations. This removes the only subtle global ambiguity and lets the reduction compose gadgets without hidden coupling.

\subsection{Related work}
\paragraph{Rank minimization and relaxations.}
Convex relaxations of rank (notably the nuclear norm) form the backbone of modern low-rank recovery theory \cite{CandesRecht2009Exact,Recht2011Simpler,CandesTao2010Power,RechtFazelParrilo2010Guaranteed}. Nonconvex alternatives based on factorizations are also widely used and often effective in practice, though they do not alter worst-case feasibility complexity. In system identification and control, rank minimization appears naturally via structured low-rank approximations and Hankel-based formulations \cite{Fazel2002Thesis,FazelHindiBoyd2004RankMin,Markovsky2012SLRA}.

\paragraph{Hardness of rank problems in discrete settings.}
Several closely related rank problems are known to be computationally hard in discrete models. For example, variants of low-rank approximation and rank minimization over affine spaces exhibit NP-hardness phenomena \cite{Vavasis1991}. These results, however, do not pin down the \emph{exact} real-feasibility complexity captured by $\exists\R$.

\paragraph{$\exists\R$-completeness as the right lens.}
A large body of work in computational geometry and real algebraic complexity identifies $\exists\R$ as the correct class for continuous feasibility problems, with reductions typically funnelling through algebraic encodings and circuit normal forms \cite{BlumCuckerShubSmale1998Complexity,Canny1988}. Our contribution is to show that even under the highly restricted \emph{affine + one rank constraint} input discipline, $\mathrm{ARM}(3)$ already captures the full expressive power of $\exists\R$.

\section{Preliminaries}
\label{sec:prelim}

This section fixes formal models and notation used throughout the paper: the existential theory of the reals (ETR) and standard reduction normal forms, arithmetic circuits over $\mathbb{R}$ as compact representations of polynomial systems, the Affine Rank Minimization decision problem $\mathrm{ARM}(k)$ under rational input encoding, a rank-factorization lemma used as an existential witness device, and block-matrix/indexing conventions. Background on ETR and $\exists\mathbb{R}$ may be found in \cite{Schaefer2009,Canny1988}; real algebraic encodings and polynomial systems are treated in \cite{BasuPollackRoy2006}; and computation over $\mathbb{R}$ and circuit-style representations are classical in the Blum--Shub--Smale framework \cite{BlumShubSmale1989}.

\subsection{Existential theory of the reals (ETR)}
\label{sec:prelim:etr}

\paragraph{Syntax and encoding.}
Let $x=(x_1,\dots,x_n)$ be real variables. An \emph{atomic predicate} is one of
\begin{equation}
p(x)=0,\qquad p(x)\neq 0,\qquad p(x)\ge 0,\qquad p(x)>0,
\label{eq:etr:atoms}
\end{equation}
where $p\in\mathbb{Z}[x_1,\dots,x_n]$.
A \emph{quantifier-free formula} $\varphi(x)$ is any Boolean combination of atomic predicates using $\wedge,\vee,\neg$.
An \emph{ETR instance} is a sentence of the form
\begin{equation}
\exists x\in\mathbb{R}^n:\ \varphi(x),
\label{eq:etr:sentence}
\end{equation}
and the decision question is whether such an $x$ exists.

Input size is measured in the standard bit model: coefficients are given in binary, polynomials are given in sparse form by listing monomials and exponent vectors, and the size of $\varphi$ counts the total bit-length of all coefficients and the syntactic size of the Boolean expression. If rational coefficients appear, they are encoded in reduced form; clearing denominators yields an equivalent integer-coefficient instance with polynomially related bit-size \cite{BasuPollackRoy2006}.

\paragraph{The class $\exists\mathbb{R}$.}
A language $L$ is in $\exists\mathbb{R}$ if it is polynomial-time many-one reducible to ETR. Many geometric feasibility problems are complete for $\exists\mathbb{R}$ under such reductions \cite{Schaefer2009}. The canonical upper bound $\exists\mathbb{R}\subseteq \mathsf{PSPACE}$ follows from decision procedures for the first-order theory of the reals and from the $\mathsf{PSPACE}$ complexity of the existential fragment \cite{Canny1988}.

\paragraph{Standard normal forms.}
For reductions, it is convenient to work with constrained syntactic forms. We record the particular normal form used later: a conjunction of polynomial equalities encoding an arithmetic circuit.

\begin{definition}[ETR equality form]
\label{def:etr_equality_form}
An ETR instance is in \emph{equality form} if it is a sentence
\begin{equation}
\exists y\in\mathbb{R}^N:\ \bigwedge_{t=1}^T f_t(y)=0
\label{eq:etr:equality_form}
\end{equation}
where each $f_t\in\mathbb{Z}[y_1,\dots,y_N]$.
\end{definition}

\begin{lemma}[Eliminating inequalities and Boolean structure]
\label{lem:etr_to_equalities}
Given any ETR instance $\exists x:\varphi(x)$, one can construct in polynomial time an equisatisfiable instance in equality form (Definition~\ref{def:etr_equality_form}) with polynomial blow-up in bit-size.
\end{lemma}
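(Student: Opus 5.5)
We convert $\varphi$ in three stages: put it in negation normal form, replace each atom by a single polynomial equation using fresh variables, and then merge the Boolean structure into equations by a product trick. The output will be a collection of equalities $f_t(y)=0$ with integer coefficients. The main risk is bit-size blow-up, so the products must be kept inside small equations.

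\textbf{Step 1 (negation normal form).} Push negations down to the atoms with De Morgan's laws, which at most doubles the formula size. A negated atom becomes a positive atom of the same polynomial:
\begin{itemize}[label={}, leftmargin=1.5em]
\item $\neg(p=0)$ becomes $p\neq 0$;
\item $\neg(p\neq 0)$ becomes $p=0$;
\item $\neg(p\ge 0)$ becomes $-p>0$;
\item $\neg(p>0)$ becomes $-p\ge 0$.
\end{itemize}

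\textbf{Step 2 (atoms to equations).} Each atom gets fresh real variables, and each replacement preserves satisfiability over $\mathbb{R}$ atom by atom:
\begin{itemize}[label={}, leftmargin=1.5em]
\item $p\ge 0$ is replaced by $p-s^2=0$;
\item $p>0$ is replaced by $p\,s^2-1=0$;
\item $p\neq 0$ is replaced by $p\,s-1=0$.
\end{itemize}
After this step every leaf of the formula is an equation $g_i=0$.

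\textbf{Step 3 (sub-formulas to equations).} Proceed bottom-up over the formula tree and introduce a fresh variable $u_\psi$ for every sub-formula $\psi$, meant to satisfy ``$\psi$ holds iff $u_\psi=0$.''
\begin{itemize}[label={}, leftmargin=1.5em]
\item For a leaf, add $u_\psi-g_i=0$.
\item For $\psi=\psi_1\vee\psi_2$, add $u_\psi-u_{\psi_1}u_{\psi_2}=0$.
\item For $\psi=\psi_1\wedge\psi_2$, add $u_\psi-(u_{\psi_1}^2+u_{\psi_2}^2)=0$.
\end{itemize}
Finally assert $u_{\mathrm{root}}=0$. Each new equation has constant size, so the total size is linear in the size of the NNF formula. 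This avoids multiplying out large polynomials, which would otherwise cause exponential blow-up under nested disjunctions.

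\textbf{Correctness.} We argue each direction separately.
\begin{itemize}[label={}, leftmargin=1.5em]
\item \emph{Forward.} Suppose $x$ satisfies $\varphi$. For each atom that holds at $x$, choose its slack value ($s=\sqrt p$, $s=1/\sqrt p$, or $s=1/p$); atoms that fail at $x$ get an arbitrary slack. The $u$'s are then determined by the defining equations. By induction, $u_\psi=0$ exactly when $\psi$ holds at $x$ with these choices. The base case uses the fact that $g_i=0$ implies the atom holds.
\item \emph{Backward.} Suppose the equations have a solution. The same induction shows that $u_\psi=0$ implies $\psi$ holds at $x$, because the fresh-variable encodings are sound: $g_i=0$ forces the atom. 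Hence $\varphi(x)$ holds.
\end{itemize}

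The one subtle point is that a false atom may still have a nonzero $u$ value. This is harmless, because the disjunction and conjunction semantics only need the implication ``$u_\psi=0\Rightarrow \psi$'' for soundness, and ``$\psi$ true $\Rightarrow$ we can choose slacks making $u_\psi=0$'' for completeness.

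Rational coefficients, if present, are cleared by multiplying by denominators, which keeps the bit-size polynomially related. The main obstacle is therefore this bookkeeping of the induction, together with the size bound justifying the auxiliary-variable encoding over naive products.
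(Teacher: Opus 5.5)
Your proof is correct, but it handles the Boolean structure by a different route from the paper's.

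The paper first converts $\varphi$ to disjunctive normal form $\bigvee_{j=1}^M\bigwedge_i\alpha_{ij}$. It then adds selector variables with $\sum_j s_j-1=0$ and $s_j(1-s_j)=0$, and multiplies each equation encoding an atom of clause $j$ by $s_j$. Atoms are encoded by slacks: $p-u^2=0$ for $p\ge 0$; $p-u^2=0$ and $uv-1=0$ for $p>0$; $pv-1=0$ for $p\neq 0$. This flat encoding is simple and needs no auxiliary variables for subformulas. However, it is indexed by the DNF clauses, and $M$ can be exponential in the size of $\varphi$. The paper's parenthetical remark concedes the DNF blow-up, but the selector trick does not remove it. So that route is only polynomial when $\varphi$ is already, or nearly, a DNF.

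Your Tseitin-style encoding avoids this:
\begin{itemize}
\item Passing to negation normal form makes the formula monotone in its atoms. That is exactly what lets per-occurrence slack variables be pulled to the front soundly and completely.
\item You then add one fresh variable and one small equation per connective. The invariant ``$u_\psi=0$ iff $\psi$ holds'' is maintained by $u_{\psi_1}u_{\psi_2}$ for $\vee$ and by $u_{\psi_1}^2+u_{\psi_2}^2$ for $\wedge$.
\end{itemize}
This gives output of size linear in $|\varphi|$ for arbitrary formulas. It also handles negation explicitly, which the paper leaves implicit in its DNF step. Your one-equation encoding $ps^2-1=0$ of $p>0$ is an equivalent and slightly more economical variant of the paper's two-equation encoding.

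One small imprecision: it is not true that every new equation has constant size. The leaf equations $u_\psi-g_i=0$ are as large as the corresponding atoms. Since each atom occurrence contributes exactly one such equation, the linear total bound still holds.
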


\begin{proof}
Convert $\varphi$ to disjunctive normal form $\bigvee_{j=1}^M \bigwedge_{i=1}^{T_j} \alpha_{ij}$ with a polynomial blow-up in the formula size (measured as a Boolean circuit; if one insists on DNF explicitly, the blow-up can be exponential in the worst case, hence we use the standard selector-variable encoding below). Introduce selector variables $s_1,\dots,s_M$ and enforce that exactly one selector is active using polynomial equalities, e.g.,
\begin{equation}
\sum_{j=1}^M s_j - 1 = 0,\qquad s_j(1-s_j)=0\ \ \ (j\in[M]).
\label{eq:etr:selectors}
\end{equation}
For each atom $\alpha_{ij}$ in clause $j$, enforce its satisfaction only when $s_j=1$ by multiplying the corresponding equality constraints by $s_j$. It therefore suffices to convert each atomic predicate to polynomial equalities with auxiliary variables.

Equalities $p(x)=0$ are already in the required form. For weak inequalities, use standard slack-variable encodings:
\begin{equation}
p(x)\ge 0\ \Longleftrightarrow\ \exists u\in\mathbb{R}:\ p(x)-u^2=0.
\label{eq:etr:ge0_slack}
\end{equation}
For strict inequalities,
\begin{equation}
p(x)>0\ \Longleftrightarrow\ \exists u,v\in\mathbb{R}:\ p(x)-u^2=0\ \wedge\ uv-1=0,
\label{eq:etr:gt0_strict}
\end{equation}
which enforces $u\neq 0$ and hence $u^2>0$. Disequalities $p(x)\neq 0$ can be encoded as
\begin{equation}
p(x)\neq 0\ \Longleftrightarrow\ \exists v\in\mathbb{R}:\ p(x)v-1=0.
\label{eq:etr:neq0}
\end{equation}
All added polynomials have degree at most $2$, and the number and bit-size of introduced constraints and variables are polynomially bounded \cite{BasuPollackRoy2006}.
\end{proof}

\paragraph{Circuit normal form.}
In later reductions we will in fact use a more structured equality form in which each constraint expresses the semantics of a single arithmetic gate. This is captured by arithmetic circuits (Section~\ref{sec:prelim:circuits}).

\subsection{Arithmetic circuits over $\mathbb{R}$}
\label{sec:prelim:circuits}

Arithmetic circuits provide a compact representation of polynomial systems and a convenient normal form for $\exists\mathbb{R}$-reductions.

\begin{definition}[Arithmetic circuit]
\label{def:arith_circuit}
An \emph{arithmetic circuit} over $\mathbb{R}$ with input variables $x_1,\dots,x_n$ is a finite directed acyclic graph whose gates are of the following types:
\begin{enumerate}
\item input gates labeled by variables $x_i$;
\item constant gates labeled by rational constants $c\in\mathbb{Q}$;
\item binary gates labeled by $+$ or $\times$ (fan-in $2$);
\item unary negation gates labeled by $-$ (fan-in $1$).
\end{enumerate}
Each gate computes a polynomial in $\mathbb{Q}[x_1,\dots,x_n]$ in the natural way.
\end{definition}

\begin{definition}[Circuit size and bit-size]
\label{def:circuit_size}
The \emph{size} of a circuit is the number of gates. The \emph{bit-size} additionally counts the total bit-length of all rational constants (numerators and denominators in reduced form). Fan-in two is without loss of generality: any higher fan-in addition or multiplication can be expanded into a binary tree of gates with linear blow-up.
\end{definition}

\paragraph{Gate-variable encoding.}
Given a circuit $\mathcal{C}$ with gate set $G$, introduce a real variable $z_g$ for each gate $g\in G$. Define the constraint set $\mathrm{GateEq}(\mathcal{C})$ consisting of one polynomial equality per gate:
\begin{itemize}
\item if $g$ is an input $x_i$, impose $z_g-x_i=0$;
\item if $g$ is a constant $c\in\mathbb{Q}$, impose $z_g-c=0$;
\item if $g=g_1+g_2$, impose $z_g-z_{g_1}-z_{g_2}=0$;
\item if $g=g_1\times g_2$, impose $z_g-z_{g_1}z_{g_2}=0$;
\item if $g=-g_1$, impose $z_g+z_{g_1}=0$.
\end{itemize}

\begin{lemma}[Circuit constraints are exact]
\label{lem:circuit_constraints_exact}
Let $\mathcal{C}$ be an arithmetic circuit with designated output gate $g_{\mathrm{out}}$ and let $p_{\mathcal{C}}(x)$ denote the polynomial computed at $g_{\mathrm{out}}$. Then for every $x\in\mathbb{R}^n$,
\begin{equation}
p_{\mathcal{C}}(x)=0
\quad\Longleftrightarrow\quad
\exists (z_g)_{g\in G}\in\mathbb{R}^{|G|}:\ \mathrm{GateEq}(\mathcal{C})\ \wedge\ z_{g_{\mathrm{out}}}=0.
\label{eq:circuit:exactness}
\end{equation}
Moreover, the number of constraints is $|G|+1$ and the bit-size is polynomial in the bit-size of $\mathcal{C}$.
\end{lemma}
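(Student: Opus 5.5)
The plan is to show that, for fixed $x$, $\mathrm{GateEq}(\mathcal{C})$ has exactly one solution, namely the natural evaluation of the circuit at $x$. Both directions of \eqref{eq:circuit:exactness} then follow at once.

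\textbf{Step 1: uniqueness by structural induction.} Since $\mathcal{C}$ is a finite DAG, I fix a topological order $g^{(1)},\dots,g^{(|G|)}$ of $G$, so that every gate's predecessors appear before it. For each gate $g$, let $P_g\in\mathbb{Q}[x]$ be the polynomial it computes. I will prove by induction along this order that every solution $(z_g)_{g\in G}$ of $\mathrm{GateEq}(\mathcal{C})$ satisfies $z_g=P_g(x)$. The base cases are input gates, where the constraint $z_g-x_i=0$ forces $z_g=x_i=P_g(x)$, and constant gates, where it forces $z_g=c$. For the inductive step:
\begin{itemize}
\item if $g=g_1+g_2$, the constraint gives $z_g=z_{g_1}+z_{g_2}=P_{g_1}(x)+P_{g_2}(x)=P_g(x)$;
\item if $g=g_1\times g_2$, it gives $z_g=z_{g_1}z_{g_2}=P_{g_1}(x)P_{g_2}(x)=P_g(x)$;
\item if $g=-g_1$, it gives $z_g=-z_{g_1}=-P_{g_1}(x)=P_g(x)$.
\end{itemize}
In each case the induction hypothesis supplies the values at the predecessors.

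\textbf{Step 2: existence.} Define $z_g:=P_g(x)$ for every $g$. The same case analysis, read in reverse, shows that every constraint in $\mathrm{GateEq}(\mathcal{C})$ holds, because each gate's polynomial is defined by exactly the semantics its constraint encodes.

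\textbf{Step 3: both directions.} For ($\Rightarrow$), if $p_{\mathcal{C}}(x)=0$, the assignment from Step 2 satisfies $\mathrm{GateEq}(\mathcal{C})$ and has $z_{g_{\mathrm{out}}}=P_{g_{\mathrm{out}}}(x)=p_{\mathcal{C}}(x)=0$. For ($\Leftarrow$), if some $z$ satisfies $\mathrm{GateEq}(\mathcal{C})$ together with $z_{g_{\mathrm{out}}}=0$, Step 1 gives $p_{\mathcal{C}}(x)=z_{g_{\mathrm{out}}}=0$.

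\textbf{Step 4: counting.} There is one constraint per gate plus the output constraint, so $|G|+1$ constraints in total. Each constraint has at most three variables and coefficients $\pm1$, except the constant constraints $z_g-c=0$. If integer coefficients are required, I clear the denominator of $c$, which multiplies its bit-length by at most a constant factor. The total bit-size is therefore linear in the bit-size of $\mathcal{C}$.

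There is no real obstacle here. The only point needing care is to carry out the induction along a topological order, which exists because the circuit is acyclic, rather than along an arbitrary order of the gates.
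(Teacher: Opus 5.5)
Your proof is correct and matches the paper's argument: the evaluation assignment gives ($\Rightarrow$), and induction along a topological order shows any solution of $\mathrm{GateEq}(\mathcal{C})$ equals the circuit evaluation, which gives ($\Leftarrow$); the count of $|G|+1$ constraints is immediate. One small point: writing each variable index costs $O(\log|G|)$ bits, so your size bound is quasi-linear rather than linear in the paper's size measure, but the statement only needs it to be polynomial.
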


\begin{proof}
The forward direction assigns $z_g$ to be the value computed by gate $g$ under input $x$, which satisfies each gate equation by definition, and yields $z_{g_{\mathrm{out}}}=p_{\mathcal{C}}(x)=0$. Conversely, any assignment satisfying $\mathrm{GateEq}(\mathcal{C})$ must, by induction over the DAG topological order, coincide with the circuit evaluation at every gate; hence $z_{g_{\mathrm{out}}}=p_{\mathcal{C}}(x)$, so $z_{g_{\mathrm{out}}}=0$ implies $p_{\mathcal{C}}(x)=0$. The size bound is immediate.
\end{proof}

\paragraph{Use in reductions.}
Lemma~\ref{lem:circuit_constraints_exact} is the precise normal form used later: rather than arbitrary polynomials, we will encode computations gate-by-gate and then translate those gate equations into matrix constraints.

\subsection{Affine Rank Minimization $\mathrm{ARM}(k)$}
\label{sec:prelim:arm}

We now define the matrix feasibility problem studied in this paper.

\paragraph{Affine constraints.}
Fix dimensions $m\times n$. For $A,X\in\mathbb{R}^{m\times n}$, write
\begin{equation}
\langle A,X\rangle := \sum_{i=1}^m\sum_{j=1}^n A_{ij}X_{ij}
\label{eq:arm:frobenius_inner_product}
\end{equation}
for the Frobenius inner product.

\begin{definition}[$\mathrm{ARM}(k)$]
\label{def:armk}
Fix an integer $k\ge 1$. The decision problem $\mathrm{ARM}(k)$ is:

\begin{quote}
\emph{Input:} matrices $A_1,\dots,A_q\in\mathbb{Q}^{m\times n}$ and scalars $b_1,\dots,b_q\in\mathbb{Q}$.\\
\emph{Question:} does there exist $X\in\mathbb{R}^{m\times n}$ such that
\begin{equation}
\langle A_\ell, X\rangle = b_\ell\ \ \ (\ell\in[q])
\qquad\text{and}\qquad
\mathrm{rank}(X)\le k\ ?
\label{eq:arm:constraints_and_rank}
\end{equation}
\end{quote}
\end{definition}

\paragraph{Encoding model and size measure.}
All input rationals are encoded in reduced form; the instance length is the total bit-length of all numerators/denominators of all entries of all $A_\ell$ and all $b_\ell$, plus the encoding of $(m,n,q)$. Clearing denominators yields an equivalent instance with integer data and polynomially related bit-size.

\paragraph{Fixed-rank focus.}
Our main completeness statements are for the fixed constant $k=3$; we write $\mathrm{ARM}(3)$ when emphasizing this restriction.

\subsection{Rank factorization as a witness device}
\label{sec:prelim:rank_factorization}

Low rank can be expressed existentially via bilinear factorization. This observation will be used to prove membership in $\exists\mathbb{R}$ and to simplify correctness arguments.

\begin{lemma}[Rank-factorization equivalence]
\label{lem:rank_factorization_equiv}
Let $X\in\mathbb{R}^{m\times n}$ and $k\ge 1$. Then $\mathrm{rank}(X)\le k$ if and only if there exist $U\in\mathbb{R}^{m\times k}$ and $V\in\mathbb{R}^{k\times n}$ such that
\begin{equation}
X=UV.
\label{eq:rank:factorization}
\end{equation}
\end{lemma}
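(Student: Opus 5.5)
We prove the two directions separately, using only elementary linear algebra: the column space of $X$ and the rank bound on a matrix product.

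\emph{($\Leftarrow$)} Suppose $X=UV$ with $U\in\mathbb{R}^{m\times k}$ and $V\in\mathbb{R}^{k\times n}$. Every column of $X$ is $U$ times the corresponding column of $V$, so it lies in the column space of $U$. Hence
\[
\mathrm{rank}(X)\le\mathrm{rank}(U)\le k,
\]
since $U$ has only $k$ columns. Equivalently, one may cite the standard bound $\mathrm{rank}(UV)\le\min\{\mathrm{rank}(U),\mathrm{rank}(V)\}$.

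\emph{($\Rightarrow$)} Let $r=\mathrm{rank}(X)\le k$. We treat the cases $r=0$ and $r\ge 1$ separately.

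If $r=0$, then $X=0$, and we may take $U=0$ and $V=0$.

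If $r\ge 1$, choose a basis $u_1,\dots,u_r$ of the column space of $X$. Each column $x_j$ of $X$ can be written uniquely as
\[
x_j=\sum_{i=1}^r v_{ij}u_i .
\]
Let $U_0=[u_1\ \cdots\ u_r]\in\mathbb{R}^{m\times r}$ and $V_0=(v_{ij})\in\mathbb{R}^{r\times n}$, so that $X=U_0V_0$. To reach inner dimension exactly $k$, pad with zeros:
\[
U=[\,U_0\ \ 0_{m\times(k-r)}\,],\qquad V=\begin{bmatrix}V_0\\ 0_{(k-r)\times n}\end{bmatrix}.
\]
Then $UV=U_0V_0+0=X$. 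This padding is the only point needing care: the lemma asks for inner dimension exactly $k$ rather than $r$, and appending zero columns and rows achieves this without changing the product.

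Alternatively, one can take a compact singular value decomposition $X=P\Sigma Q^\top$ and set $U=P\Sigma$, $V=Q^\top$, padded with zeros in the same way.

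No step is a genuine obstacle; the argument is routine. The point worth recording for later use is that the witness $(U,V)$ has $k(m+n)$ real entries, which is polynomial in the instance size. The condition $X=UV$ consists of polynomial equalities of degree $2$. This is what will let the membership proof for $\mathrm{ARM}(k)$ encode the rank constraint inside ETR.
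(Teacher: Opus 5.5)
Your proof is correct and follows essentially the same route as the paper. The easy direction uses the rank bound on a product. For the converse, the paper writes $X=\sum_{t=1}^r a_t b_t^\top$ and pads $U,V$ with zero columns and rows to reach inner dimension $k$. Your column-space-basis construction of $U_0V_0$ simply makes that rank-one decomposition explicit; the separate $r=0$ case, which the paper handles implicitly as an empty sum, is harmless.
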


\begin{proof}
If $X=UV$ with $U\in\mathbb{R}^{m\times k}$ and $V\in\mathbb{R}^{k\times n}$, then $\mathrm{rank}(X)\le k$.

Conversely, if $\mathrm{rank}(X)=r\le k$, write $X=\sum_{t=1}^r a_tb_t^\top$ for vectors $a_t\in\mathbb{R}^m$ and $b_t\in\mathbb{R}^n$. Let $U\in\mathbb{R}^{m\times k}$ have first $r$ columns $a_1,\dots,a_r$ and remaining columns zero, and let $V\in\mathbb{R}^{k\times n}$ have first $r$ rows $b_1^\top,\dots,b_r^\top$ and remaining rows zero. Then $UV=X$.
\end{proof}

\begin{theorem}[$\mathrm{ARM}(k)\in\exists\mathbb{R}$ for fixed $k$]
\label{thm:arm_in_existr}
For every fixed $k\ge 1$, the decision problem $\mathrm{ARM}(k)$ belongs to $\exists\mathbb{R}$.
\end{theorem}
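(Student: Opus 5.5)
The plan is to give a polynomial-time many-one reduction from $\mathrm{ARM}(k)$ to ETR, using Lemma~\ref{lem:rank_factorization_equiv} to replace the rank bound by an existentially quantified bilinear factorization. Given an instance $(A_1,\dots,A_q,b_1,\dots,b_q)$ with $A_\ell\in\mathbb{Q}^{m\times n}$, first clear denominators. Multiply each constraint $\langle A_\ell,X\rangle=b_\ell$ by the least common multiple $d_\ell$ of the denominators of the entries of $A_\ell$ and of $b_\ell$. This yields integer data whose bit-length is polynomial in the input length, since $\log d_\ell$ is at most the sum of the bit-lengths of those denominators. It also leaves the solution set unchanged.

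Next I would write down the ETR sentence. Introduce real variables $U_{it}$ for $i\in[m]$, $t\in[k]$, and $V_{tj}$ for $t\in[k]$, $j\in[n]$. Do not introduce variables for $X$; substitute $X_{ij}=\sum_{t=1}^k U_{it}V_{tj}$ directly. The sentence is
\begin{equation*}
\exists U\in\mathbb{R}^{m\times k},\ V\in\mathbb{R}^{k\times n}:\ \bigwedge_{\ell=1}^{q}\Bigl(\sum_{i=1}^m\sum_{j=1}^n\sum_{t=1}^k d_\ell (A_\ell)_{ij}\,U_{it}V_{tj}-d_\ell b_\ell=0\Bigr).
\end{equation*}
Equivalently, one can keep variables $X_{ij}$ and add the degree-$2$ equalities $X_{ij}-\sum_t U_{it}V_{tj}=0$. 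That version keeps each polynomial sparse in the obvious way and is the one I would actually output.

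For correctness, apply Lemma~\ref{lem:rank_factorization_equiv}. If $X$ is feasible with $\mathrm{rank}(X)\le k$, then a factorization $X=UV$ exists, and $(X,U,V)$ satisfies the sentence. Conversely, any satisfying $(X,U,V)$ gives $X=UV$, hence $\mathrm{rank}(X)\le k$, and $X$ satisfies the affine constraints.

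The main point needing care, though not really hard, is the size bound in the sparse bit model. The variable count is $mn+k(m+n)$. Each factorization equation has $k+1$ monomials with coefficients $\pm1$. Each affine equation has at most $mn$ monomials, with coefficients bounded by the scaled input data. Since $k$ is fixed (and in any case $k\le\min(m,n)$ may be assumed, as otherwise the rank constraint is vacuous), everything is polynomial in $m$, $n$, $q$ and the input bit-length. The construction is clearly computable in polynomial time, so $\mathrm{ARM}(k)$ reduces to ETR and lies in $\exists\mathbb{R}$.
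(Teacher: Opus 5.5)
Your proposal is correct and follows the same route as the paper's proof. Both existentially quantify a factorization $X=UV$ with $U\in\mathbb{R}^{m\times k}$, $V\in\mathbb{R}^{k\times n}$, turn each affine constraint into a degree-$2$ polynomial equality, and invoke Lemma~\ref{lem:rank_factorization_equiv} for equisatisfiability; your treatment of clearing denominators and of the sparse-encoding size bound just makes explicit what the paper asserts in one line.
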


\begin{proof}
Given an instance $(A_\ell,b_\ell)_{\ell=1}^q$, introduce variables for
$U\in\R^{m\times k}$ and $V\in\R^{n\times k}$ and existentially quantify them.
Define $X := UV$, so that each entry satisfies
\begin{equation}
\label{eq:proof_2.8}
X_{ij}=\sum_{t=1}^k U_{it}V_{jt},
\end{equation}
a polynomial of degree $2$ in the variables $(U,V)$. Each affine constraint
$\langle A_\ell,X\rangle=b_\ell$ becomes a single polynomial equality of degree
$2$ in $(U,V)$ after substituting $X=UV$. By Lemma~\ref{lem:rank_factorization_equiv}, the resulting
existential system is satisfiable if and only if the original $\mathrm{ARM}(k)$
instance is satisfiable. The translation is computable in polynomial time and
increases bit-size polynomially.
\end{proof}

\begin{remark}[non-algorithmic use]
Lemma~\ref{lem:rank_factorization_equiv} and Theorem~\ref{thm:arm_in_existr} are used only to express low rank as an existential witness condition in correctness proofs. They do not provide an algorithm for $\mathrm{ARM}(k)$; the resulting feasibility constraints are bilinear and nonconvex.
\end{remark}

\subsection{Block-matrix notation and designated entries}
\label{sec:prelim:block_notation}

We will enforce relations by fixing entries and equating sub-blocks inside larger block matrices. This subsection fixes indexing and shows how these relations compile into the affine format of Definition~\ref{def:armk}.

\paragraph{Index sets and submatrices.}
For $n\in\mathbb{N}$, let $[n]:=\{1,\dots,n\}$. For $I\subseteq[m]$ and $J\subseteq[n]$, write $X[I,J]$ for the submatrix of $X\in\mathbb{R}^{m\times n}$ restricted to rows in $I$ and columns in $J$.

\paragraph{Selection matrices.}
Given $I=\{i_1<\cdots<i_p\}\subseteq[m]$, define the row-selection matrix $S_I\in\{0,1\}^{p\times m}$ by $(S_I)_{a,i_a}=1$ and zeros elsewhere, so that $S_IX = X[I,:]$. Given $J=\{j_1<\cdots<j_q\}\subseteq[n]$, define the column-selection matrix $T_J\in\{0,1\}^{n\times q}$ by $(T_J)_{j_b,b}=1$, so that $XT_J = X[:,J]$. Then
\begin{equation}
X[I,J] = S_I X T_J.
\label{eq:block:selection_identity}
\end{equation}

\paragraph{Elementary matrices and entry constraints.}
For $(i,j)\in[m]\times[n]$, let $E^{(i,j)}\in\mathbb{R}^{m\times n}$ be the matrix with a $1$ at $(i,j)$ and zeros elsewhere. Then $X_{ij}=c$ is equivalent to the affine constraint $\langle E^{(i,j)},X\rangle=c$.

\begin{lemma}[Compiling designated-entry and block-equality constraints]
\label{lem:block_constraints_compile}
Let $X\in\mathbb{R}^{m\times n}$.
\begin{enumerate}
\item Fixing a set of designated entries $\{(i_t,j_t)\}_{t=1}^T$ to constants $\{c_t\}_{t=1}^T$ is representable by $T$ affine constraints of the form $\langle A_\ell,X\rangle=b_\ell$ with $A_\ell\in\{0,1\}^{m\times n}$.
\item Enforcing equality of two submatrices $X[I,J]=X[I',J']$ with $|I|=|I'|$ and $|J|=|J'|$ is representable by $|I||J|$ affine constraints.
\end{enumerate}
All such constraints have rational coefficients of bit-size $O(1)$ (up to the encoding of indices), and the compilation is computable in time polynomial in $m+n+T$.
\end{lemma}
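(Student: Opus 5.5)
The plan is to give each of the two items an explicit matrix and then check the equivalence and the size bounds directly. Everything follows from the elementary-matrix identity $\langle E^{(i,j)},X\rangle=X_{ij}$, which holds because $\langle E^{(i,j)},X\rangle=\sum_{a,b}E^{(i,j)}_{ab}X_{ab}$ keeps only the $(i,j)$ term.

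For item 1, for each $t\in[T]$ I set $A_t:=E^{(i_t,j_t)}\in\{0,1\}^{m\times n}$ and $b_t:=c_t$. The identity above then shows that the constraint set $\{\langle A_t,X\rangle=b_t\}_{t=1}^T$ holds exactly when $X_{i_t j_t}=c_t$ for all $t$. There are exactly $T$ constraints. Each $A_t$ is written down in $O(mn)$ time, or $O(\log m+\log n)$ bits in a sparse encoding, and $b_t$ is copied verbatim. So the only nonconstant bit-length comes from the constants $c_t$ themselves, which is what the phrase ``up to the encoding of indices'' allows.

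For item 2, I write $I=\{i_1<\dots<i_p\}$, $I'=\{i'_1<\dots<i'_p\}$, $J=\{j_1<\dots<j_r\}$ and $J'=\{j'_1<\dots<j'_r\}$. By \eqref{eq:block:selection_identity}, $X[I,J]=X[I',J']$ holds iff $(S_IXT_J)_{ab}=(S_{I'}XT_{J'})_{ab}$ for all $a\in[p]$ and $b\in[r]$, that is, iff $X_{i_a j_b}=X_{i'_a j'_b}$ for all such $a,b$.

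\begin{itemize}
\item For each pair $(a,b)$ I take $A_{ab}:=E^{(i_a,j_b)}-E^{(i'_a,j'_b)}$ and $b_{ab}:=0$.
\item The coefficients of $A_{ab}$ lie in $\{-1,0,1\}$, so each has $O(1)$ bit-size. If $(i_a,j_b)=(i'_a,j'_b)$, the matrix $A_{ab}$ is zero and the constraint $0=0$ is trivially valid.
\item This gives exactly $|I||J|=pr$ constraints, and they are equivalent to the block equality entrywise.
\item The constraints are produced in time polynomial in $m+n$, since $pr\le mn$.
\end{itemize}

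For the combined complexity claim, I add the two counts: $T$ entry constraints plus at most $mn$ constraints per block equality, each matrix having $O(mn)$ entries of constant size. This gives time polynomial in $m+n+T$ for a fixed number of block equalities, and polynomial in the total number of listed constraints in general.

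There is no real obstacle. The only point needing care is the bit-size statement. The right-hand sides $c_t$ in item 1 carry their own rational bit-length, so the claimed $O(1)$ bound applies to the coefficients of the $A_\ell$. I will state this explicitly, noting that the $b_\ell$ inherit the encoding of the $c_t$.
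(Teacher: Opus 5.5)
Your proposal is correct and takes the same approach as the paper: item 1 uses $A_t=E^{(i_t,j_t)}$ with $b_t=c_t$, and item 2 imposes $\langle E^{(i_a,j_b)}-E^{(i'_a,j'_b)},X\rangle=0$ for each block position $(a,b)$. Your remark that the $O(1)$ bit-size bound applies to the coefficient matrices, while the right-hand sides $b_\ell$ carry the bit-length of the constants $c_t$, is a clarification the paper leaves implicit.
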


\begin{proof}
(1) Use $A_\ell=E^{(i_\ell,j_\ell)}$ and $b_\ell=c_\ell$.

(2) For each position $(a,b)$ in the block, enforce $X_{i_a,j_b}-X_{i'_a,j'_b}=0$, which equals
\begin{equation}
\langle E^{(i_a,j_b)}-E^{(i'_a,j'_b)},X\rangle=0.
\label{eq:block:block_equality_affine}
\end{equation}
\end{proof}

\paragraph{Block-matrix display.}
When writing
\[
X=
\begin{bmatrix}
X_{11} & X_{12}\\
X_{21} & X_{22}
\end{bmatrix},
\]
we implicitly partition row and column indices into contiguous segments; when noncontiguous blocks are needed we explicitly specify index sets and use $X[I,J]$ notation.

\paragraph{Purpose.}
Lemma~\ref{lem:block_constraints_compile} ensures that all gadget constraints expressed in terms of designated entries and sub-block equalities are valid $\mathrm{ARM}(k)$ instances without repeatedly re-deriving the affine encoding.

\section{Problem Normal Form and Reduction Overview}
\label{sec:Normal_Reduction}
This section fixes the precise source normal form and target encoding used
throughout the reduction, and states the main complexity consequences proved
in later sections. No new technical arguments are introduced here; all
nontrivial constructions and correctness proofs are deferred to
Sections~\ref{sec:rank_forcing}--\ref{sec:correctness}. The purpose of this
section is to make the logical structure of the reduction explicit and to
isolate the minimal algebraic primitives that must be simulated under a fixed
rank bound.

\subsection{Source Normal Form: ETR in Arithmetic-Circuit Equality Form}

As established in Section~\ref{sec:prelim:etr}, we work with the existential theory
of the reals (ETR) in the standard bit model. By Lemma~\ref{lem:etr_to_equalities},
every ETR instance can be transformed in polynomial time into an
equisatisfiable instance consisting solely of polynomial equalities.

Rather than working with explicit polynomial equations, we adopt the
arithmetic-circuit normal form formalized in
Section~\ref{sec:prelim:circuits}. This representation is standard in
$\exists\mathbb{R}$-reductions and allows gate-by-gate simulation using local
algebraic constraints; see, e.g.,~\cite{Canny1988,Schaefer2009,BasuPollackRoy2006}.

Concretely, every instance is reduced to an arithmetic circuit $C$ over
$\mathbb{R}$ with gate set $G$ and designated output gate $g_{\mathrm{out}}$,
such that
\begin{equation}
\exists x \in \mathbb{R}^n : p_C(x)=0
\quad\Longleftrightarrow\quad
\exists (z_g)_{g\in G} :
\mathrm{GateEq}(C)\ \wedge\ z_{g_{\mathrm{out}}}=0 ,
\label{eq:circuit-normal-form}
\end{equation}
as stated in Lemma~\ref{lem:circuit_constraints_exact}.

Each gate equation belongs to one of the following primitive types:
\begin{align}
\text{(i)}\ & z = x + y, \nonumber\\
\text{(ii)}\ & z = x \cdot y, \nonumber\\
\text{(iii)}\ & z = -x, \nonumber\\
\text{(iv)}\ & z = c \quad (c \in \mathbb{Q}), \nonumber\\
\text{(v)}\ & x \ge 0 \ \text{or}\ x > 0,
\label{eq:gate-types}
\end{align}
with inequalities eliminated via standard square- and inverse-based encodings
(Lemma~\ref{lem:etr_to_equalities}). This restricted gate set is sufficient to
simulate arbitrary real algebraic constraints and is commonly used in
$\exists\mathbb{R}$-completeness constructions.

\subsection{Target Problem: Fixed-Rank Affine Rank Minimization}

The target decision problem is Affine Rank Minimization with fixed rank bound
$k=3$, defined in Section~\ref{sec:prelim:arm}. An instance consists of rational
affine constraints
\begin{equation}
\langle A_\ell, X\rangle = b_\ell , \qquad \ell \in [q],
\label{eq:arm-affine}
\end{equation}
together with the global constraint
\begin{equation}
\operatorname{rank}(X) \le 3 .
\label{eq:arm-rank}
\end{equation}

The reduction constructs a single global matrix variable
$X \in \mathbb{R}^{m\times n}$ whose rank--$3$ structure serves as a uniform
carrier for all circuit scalars. Each circuit variable is represented by a
designated entry of $X$, while all semantic relations between variables are
enforced using affine constraints combined with the rank bound.

The subtlety is that hardness persists even when the rank is fixed to a small
constant. Fixed-rank algebraic feasibility problems are known to exhibit
$\exists\mathbb{R}$-hardness in a variety of geometric and matrix-realization
settings; see, e.g., universality results stemming from Mn\"ev’s theorem and
its refinements~\cite{Mnev1988,RichterGebert1996}.

\subsection{Rank--3 as a Nonlinear Forcing Medium}

A central principle of the reduction is that global rank constraints can
enforce nonlinear algebraic relations when combined with carefully placed
affine equalities. The key observation is elementary but powerful:
if $\operatorname{rank}(X)\le 3$, then every $4\times4$ minor of $X$ must
vanish.

Consequently, any collection of affine constraints that forces the existence
of a $4\times4$ submatrix with nonzero determinant necessarily violates the
rank bound. This mechanism allows multiplication constraints of the form
$z=xy$ to be enforced indirectly: one constructs a constant-size affine
pattern such that any violation $z\neq xy$ induces a nonzero determinant in a
$4\times4$ minor.

Determinant-based rank obstructions of this type underlie universality
phenomena in realization spaces and related rank-constrained models
\cite{Mnev1988,RichterGebert1996}. In the present work, this idea is distilled
into a concrete rank--$3$ forcing gadget developed and analyzed rigorously in
Section~\ref{sec:rank_forcing}.

\subsection{Reduction Mapping}

We now summarize the reduction at the statement level.

\paragraph{Input.}
An arithmetic circuit $C$ of size $s$ in the gate-equality normal form
\eqref{eq:circuit-normal-form}.

\paragraph{Output.}
An ARM instance $(\mathcal{M}, b, k=3)$ such that:
\begin{itemize}
\item the matrix dimensions $m,n$ are polynomial in $s$;
\item the number of affine constraints is $O(s)$;
\item all coefficients are rational with polynomial bit-length.
\end{itemize}

\paragraph{Construction outline.}
\begin{enumerate}
\item Introduce a single matrix variable $X$ with $\operatorname{rank}(X)\le 3$.
\item Assign a designated entry of $X$ to each circuit wire occurrence.
\item Enforce constants, copies, addition, and negation by affine equalities.
\item Enforce multiplication and inverse relations via determinant-forcing
      gadgets.
\item Eliminate inequalities using square- and inverse-based encodings.
\item Fix a full-rank $3\times3$ submatrix of $X$ to remove gauge freedom.
\item Enforce the circuit output constraint by fixing the designated output
      entry to zero.
\end{enumerate}

All steps are computable in polynomial time and preserve satisfiability.

\subsection{Main Complexity Statements}

We now state and prove the main complexity consequences of the construction.

\begin{theorem}\label{thm:arm-hard}
ARM$(k=3)$ is $\exists\mathbb{R}$-hard under polynomial-time many-one reductions.
\end{theorem}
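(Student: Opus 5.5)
We reduce from ETR. Given an instance $\exists x:\varphi(x)$, Lemma~\ref{lem:etr_to_equalities} turns it in polynomial time into an equisatisfiable system of polynomial equalities. Compiling these into a circuit and adding one sum-of-squares output (or simply taking one gate equation per gate) and applying Lemma~\ref{lem:circuit_constraints_exact}, we get the gate-equality system \eqref{eq:circuit-normal-form}. Each of its equations has type (i)–(iv) of \eqref{eq:gate-types}, with variables $z_g$, $g\in G$, and $|G|=s$ polynomial in the input size. It therefore suffices to map such a system in polynomial time to an $\mathrm{ARM}(3)$ instance that is feasible iff the system is.

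The whole construction rests on a gauge lemma, which I expect to be the main (though short) obstacle. \emph{If $X\in\R^{m\times n}$ has $\rank(X)\le 3$ and $X[\{1,2,3\},\{1,2,3\}]=I_3$, then, writing $C:=X[\{1,2,3\},:]$ and $D:=X[:,\{1,2,3\}]$, we have $X=DC$. In particular $X_{ij}=\sum_{t=1}^3 X_{it}X_{tj}$ for all $i,j$. Conversely, any $D\in\R^{m\times 3}$ and $C\in\R^{3\times n}$ agreeing on the identity block give a rank-$\le 3$ matrix $DC$ with that block.}

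Proof of the lemma. The first three rows are linearly independent, because their first three columns form $I_3$. Since $\rank(X)\le 3$, they span the row space. So row $i$ equals $\sum_t \lambda_{it}X[t,:]$. Reading off columns $1,2,3$ against the identity block gives $\lambda_{it}=X_{it}$. The converse is Lemma~\ref{lem:rank_factorization_equiv}.

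The lemma removes all factorization ambiguity. Every entry outside the first three rows and columns is \emph{exactly} the bilinear form of its row's head $(X_{i1},X_{i2},X_{i3})$ and its column's head $(X_{1j},X_{2j},X_{3j})$, and these heads are otherwise unconstrained.

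Construction. Take $m=n=3+2s$ and fix the gauge block to $I_3$ by $9$ affine constraints (Lemma~\ref{lem:block_constraints_compile}). For each gate $g$ reserve a row $r_g>3$ and a column $c_g>3$ with distinct indices. Impose $X_{r_g,2}=X_{r_g,3}=0$ and $X_{2,c_g}=X_{3,c_g}=0$, and equate the carriers $X_{r_g,1}=X_{1,c_g}$; call this common value $z_g$.

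Gate constraints are then affine equalities among carriers:
\begin{itemize}
\item constants: $X_{r_g,1}=c$;
\item negation: $X_{r_g,1}+X_{r_{g_1},1}=0$;
\item addition: $X_{r_g,1}-X_{r_{g_1},1}-X_{r_{g_2},1}=0$;
\item multiplication $g=g_1\times g_2$: the single affine constraint $X_{r_g,1}=X_{r_{g_1},c_{g_2}}$.
\end{itemize}
By the gauge lemma, $X_{r_{g_1},c_{g_2}}=z_{g_1}z_{g_2}+0+0$, so the multiplication constraint forces $z_g=z_{g_1}z_{g_2}$ exactly. The output constraint is $X_{r_{g_{\mathrm{out}}},1}=0$. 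All other entries are left free.

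Correctness. \emph{Soundness:} if the $\mathrm{ARM}(3)$ instance is feasible, the gauge lemma applies. The carrier values $z_g$ then satisfy every gate equation, as computed above, so $\mathrm{GateEq}(C)\wedge z_{g_{\mathrm{out}}}=0$ holds, and by \eqref{eq:circuit-normal-form} the ETR instance is true. \emph{Completeness:} given a satisfying assignment $(z_g)$, define $D$ with rows $e_1^\top,e_2^\top,e_3^\top$ followed by $(z_g,0,0)$ in row $r_g$, and zeros in the remaining rows. Define $C$ symmetrically, with columns $e_1,e_2,e_3$ and $(z_g,0,0)^\top$ in column $c_g$. Then $X:=DC$ has rank $\le 3$ and satisfies every imposed constraint: the multiplication constraints hold precisely because $z_g=z_{g_1}z_{g_2}$, and all remaining constraints are head or carrier equalities that hold by construction.

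Size. The instance has dimension $O(s)$, $O(s)$ constraints, and coefficients in $\{0,\pm1\}$ apart from the circuit constants, so it is computable in polynomial time. Hence $\mathrm{ARM}(3)$ is $\exists\R$-hard. The same construction with $k-3$ further identity rows and columns added to the gauge block gives hardness for every fixed $k\ge 3$.
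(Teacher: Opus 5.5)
Your proof is correct, and it takes a genuinely different route from the paper's. The paper enforces $z=xy$ with a ``determinant-forcing'' gadget: the block $\begin{pmatrix}1&a\\ b&c\end{pmatrix}$ on fresh rows and columns, together with a $4\times4$ submatrix $Y$ whose determinant is claimed to be a nonzero multiple of $c-ab$ (Lemma~\ref{lem:rank_obstruction}). It then glues local rank-$3$ witnesses in canonical gauge into one factorization via Lemma~\ref{lem:correctness:global_embedding}. You replace all of this with one structural identity, $X=X[:,\{1,2,3\}]\,X[\{1,2,3\},:]$ whenever $\rank(X)\le 3$ and the pinned block is $I_3$. You add head constraints ($X_{r_g,2}=X_{r_g,3}=X_{2,c_g}=X_{3,c_g}=0$ and $X_{r_g,1}=X_{1,c_g}$), which make every off-gauge entry $X_{r_{g_1},c_{g_2}}$ literally equal to $z_{g_1}z_{g_2}$. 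Multiplication then becomes a single affine equation. The direction from a feasible $X$ to a circuit solution is a one-line computation, and the converse is the single product $DC$. No embedding lemma or gauge-uniqueness argument is needed; decoding is trivially well defined, since carriers are entries of $X$. Your route buys more than economy: the head constraints are exactly what the paper's gadget lacks. There, the entries of the gadget rows in the gauge columns, and of the gauge rows in the gadget columns, are left unconstrained, so the zero off-diagonal blocks of $Y$ are never imposed. In fact $c\neq ab$ is realizable: take $U[I_0,:]=I_3$, $V[:,J_0]=B$, gadget rows of $U$ equal to $(1,a,0)$ and $(b,c,0)$, and gadget columns of $V$ equal to $e_1,e_2$. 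Your construction supplies the missing link between gadget and gauge block. Finally, your identity works verbatim for every fixed $k\ge 1$ (pin $I_k$; for $k=1$ simply $X_{11}=1$), so the restriction to $k\ge 3$ in your last sentence is unnecessary.
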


\begin{proof}
We give a polynomial-time many-one reduction from a fixed $\exists\mathbb{R}$-hard problem.
Let the source instance be an existential system of polynomial equations over the reals
\begin{equation}\label{eq:hard:source}
\exists y\in\mathbb{R}^d:\quad f_1(y)=0,\ \ldots,\ f_q(y)=0,
\end{equation}
where each $f_\ell$ is a polynomial with rational coefficients (one may take the standard
definition of $\exists\mathbb{R}$ based on such inputs).

The reduction constructs, in polynomial time in the input size of \eqref{eq:hard:source},
an instance of $\mathrm{ARM}(3)$ specified by affine constraints
\begin{equation}\label{eq:hard:arm_instance}
\langle A_\ell,X\rangle=b_\ell,\qquad \ell=1,\ldots,Q,
\end{equation}
for a matrix variable $X\in\mathbb{R}^{m\times n}$, together with the rank promise
\begin{equation}\label{eq:hard:rank_promise}
\rank(X)\le 3,
\end{equation}
such that \eqref{eq:hard:source} is satisfiable if and only if there exists an $X$
satisfying \eqref{eq:hard:arm_instance}--\eqref{eq:hard:rank_promise}.

We justify the two directions.

\paragraph{Soundness.}
Assume \eqref{eq:hard:source} is satisfiable, and fix a witness $y^\star\in\mathbb{R}^d$
with $f_\ell(y^\star)=0$ for all $\ell\in[q]$.
From $y^\star$ we obtain a consistent assignment of real values to all carriers used by the
construction (in particular, for every carrier entry $X_{r,c}$ that represents a source
variable or an intermediate quantity, we set that carrier equal to the intended real value).
For each constant-size gadget introduced by the reduction, the gadget-level correctness
guarantee provides local factors
\begin{equation}\label{eq:hard:local_factors}
U^{(t)}\in\mathbb{R}^{m\times 3},\qquad V^{(t)}\in\mathbb{R}^{3\times n},
\end{equation}
in canonical gauge, such that the local matrix
\begin{equation}\label{eq:hard:local_matrix}
X^{(t)}:=U^{(t)}V^{(t)}
\end{equation}
satisfies all affine constraints of that gadget and matches the prescribed carrier values
used by the gadget.

The construction enforces a disjoint-support discipline: apart from a shared pinned
$3\times 3$ gauge block, different gadgets use disjoint fresh rows and disjoint fresh
columns. Under this discipline, we may define global factors $U,V$ by copying the rows of
$U^{(t)}$ and the columns of $V^{(t)}$ onto the corresponding global indices gadget-by-gadget,
and setting the remaining undefined rows/columns arbitrarily. Let
\begin{equation}\label{eq:hard:global_matrix}
X:=UV.
\end{equation}
By construction, on the index sets touched by gadget $t$, the global factors coincide with
the local factors, and hence $X$ agrees entrywise with $X^{(t)}$ on every entry that appears
in gadget $t$. Therefore every affine constraint from every gadget holds for $X$. Since
$X$ is a product of an $m\times 3$ and a $3\times n$ matrix, we also have $\rank(X)\le 3$.
Thus \eqref{eq:hard:arm_instance}--\eqref{eq:hard:rank_promise} is satisfiable.

\paragraph{Completeness.}
Assume the constructed $\mathrm{ARM}(3)$ instance \eqref{eq:hard:arm_instance}--\eqref{eq:hard:rank_promise}
is satisfiable, and let $X^\star$ be a feasible solution. By the pinned gauge constraints in the
instance, a fixed $3\times 3$ submatrix of $X^\star$ equals a full-rank constant matrix $B$.
Since $\rank(X^\star)\le 3$, we may write $X^\star=UV$ for some $U\in\mathbb{R}^{m\times 3}$,
$V\in\mathbb{R}^{3\times n}$. Using the invertibility of the pinned block, we may apply a change
of basis in $\mathbb{R}^3$ to put $(U,V)$ into canonical gauge so that the pinned rows/columns are
fixed; in this gauge, there is no residual ambiguity on any row/column that participates in the
constraints.

Now read off the values of all designated carriers from $X^\star$ (i.e., interpret the designated
entries as the reals intended by the reduction). Because $X^\star$ satisfies every gadget’s affine
constraints, the gadget-level decoding guarantees imply that these carrier values satisfy the
intended algebraic relations (copy constraints, constants, additions, multiplications, and the final
output condition encoded by the construction). Hence the decoded assignment yields a real witness
$y^\star$ satisfying \eqref{eq:hard:source}. Therefore the source instance is satisfiable.

\paragraph{Complexity of the map.}
The reduction introduces only a constant number of fresh carriers and affine constraints per source
equation and per gadget, and all coefficients are rational numbers of polynomial bit-length in the
input description. Hence the mapping from \eqref{eq:hard:source} to \eqref{eq:hard:arm_instance} is
computable in polynomial time and is a many-one reduction.

Therefore ARM$(3)$ is $\exists\mathbb{R}$-hard.
\end{proof}

\begin{theorem}\label{thm:arm-in-etr}
For every fixed $k\ge 1$, ARM$(k)\in \exists\mathbb{R}$.
\end{theorem}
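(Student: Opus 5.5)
The plan is to reduce an $\mathrm{ARM}(k)$ instance in polynomial time to an ETR sentence in equality form (Definition~\ref{def:etr_equality_form}). Since $\exists\mathbb{R}$ is by definition the class of languages polynomial-time many-one reducible to ETR, this suffices. This is essentially the argument of Theorem~\ref{thm:arm_in_existr}, which we restate with attention to the encoding details.

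Given $(A_\ell,b_\ell)_{\ell=1}^q$ with $A_\ell\in\mathbb{Q}^{m\times n}$, introduce $mk+kn$ existential variables, namely the entries of $U\in\mathbb{R}^{m\times k}$ and $V\in\mathbb{R}^{k\times n}$. Output the sentence
\begin{equation*}
\exists U,V:\ \bigwedge_{\ell=1}^q\ \Bigl(\sum_{i=1}^m\sum_{j=1}^n (A_\ell)_{ij}\sum_{t=1}^k U_{it}V_{tj}-b_\ell=0\Bigr).
\end{equation*}
To meet the integer-coefficient requirement, multiply each constraint $\ell$ by the least common multiple of the denominators occurring in $A_\ell$ and $b_\ell$. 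This lcm has bit-length at most the sum of the denominators' bit-lengths, so the blow-up is polynomial.

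Correctness follows from Lemma~\ref{lem:rank_factorization_equiv}. If $X$ is feasible with $\operatorname{rank}(X)\le k$, write $X=UV$; then $(U,V)$ satisfies every equation. Conversely, any solution $(U,V)$ gives $X:=UV$, which satisfies the affine constraints and has $\operatorname{rank}(X)\le k$.

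The main point to check is size. Written in sparse form, each polynomial has at most $k$ monomials per nonzero entry of $A_\ell$, so at most $k\cdot mn$ monomials, each of degree $2$ with coefficient $(A_\ell)_{ij}$ scaled by the lcm. The output size is therefore $O(k\cdot q\cdot mn\cdot \text{bitsize})$, which is polynomial in the input length for fixed $k$. The bound holds even when $k\le\min(m,n)$ is part of the input, since ranks exceeding $\min(m,n)$ can be capped.

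There is one subtlety. If the input lists the $A_\ell$ sparsely, zero entries must not be expanded. We handle this by emitting monomials only for nonzero entries, which keeps the output linear in the input's nonzero count times $k$. No real obstacle is expected.
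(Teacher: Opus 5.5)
Your proposal is correct and is the paper's argument: you introduce a rank-$k$ factorization witness $X=UV$ with $U\in\mathbb{R}^{m\times k}$ and $V\in\mathbb{R}^{k\times n}$, turn each affine constraint into a degree-$2$ polynomial equality in the entries of $U,V$, and get equisatisfiability from Lemma~\ref{lem:rank_factorization_equiv}. Your denominator clearing and polynomial-size accounting supply bookkeeping that the paper's proof of this theorem leaves implicit.
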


\begin{proof}
Fix $k\ge 1$. An instance of ARM$(k)$ consists of rational matrices $A_1,\ldots,A_q\in\mathbb{Q}^{m\times n}$
and rationals $b_1,\ldots,b_q\in\mathbb{Q}$, and asks whether there exists $X\in\mathbb{R}^{m\times n}$ such
that
\begin{equation}\label{eq:inetr:affine}
\langle A_\ell,X\rangle=b_\ell,\qquad \ell=1,\ldots,q,
\end{equation}
and
\begin{equation}\label{eq:inetr:rank}
\rank(X)\le k.
\end{equation}
We exhibit an equivalent existential sentence over the reals with polynomial equations.

Introduce variables for matrices $U\in\mathbb{R}^{m\times k}$ and $V\in\mathbb{R}^{k\times n}$ and define
\begin{equation}\label{eq:inetr:factor_def}
X := UV.
\end{equation}
Then \eqref{eq:inetr:rank} holds automatically for any such $X$, since $\rank(UV)\le k$.
Substituting \eqref{eq:inetr:factor_def} into \eqref{eq:inetr:affine} yields, for each $\ell$,
\begin{equation}\label{eq:inetr:substitute}
\langle A_\ell,UV\rangle=b_\ell.
\end{equation}
Each left-hand side of \eqref{eq:inetr:substitute} is a polynomial of degree $2$ in the entries of $U$ and $V$:
indeed,
\begin{equation}\label{eq:inetr:expand}
\langle A_\ell,UV\rangle
=\sum_{i=1}^m\sum_{j=1}^n (A_\ell)_{ij}(UV)_{ij}
=\sum_{i=1}^m\sum_{j=1}^n (A_\ell)_{ij}\sum_{t=1}^k U_{it}V_{tj}.
\end{equation}
Therefore ARM$(k)$ is equivalent to the existential sentence
\begin{equation}\label{eq:inetr:etr_sentence}
\exists U\in\mathbb{R}^{m\times k}\ \exists V\in\mathbb{R}^{k\times n}:\ 
\bigwedge_{\ell=1}^q \bigl(\langle A_\ell,UV\rangle-b_\ell=0\bigr),
\end{equation}
which is an $\exists\mathbb{R}$ instance (existential quantification over reals with polynomial equalities).
Hence ARM$(k)\in\exists\mathbb{R}$.
\end{proof}

\begin{corollary}\label{cor:arm-complete}
ARM$(k=3)$ is $\exists\mathbb{R}$-complete.
\end{corollary}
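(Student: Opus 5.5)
The corollary should follow by combining the two preceding theorems; no new construction is needed. $\exists\mathbb{R}$-completeness of a problem $L$ means two things: $L\in\exists\mathbb{R}$, and every problem in $\exists\mathbb{R}$ polynomial-time many-one reduces to $L$. I will check each half separately for $L=\mathrm{ARM}(3)$.

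\textbf{Membership.} Apply Theorem~\ref{thm:arm-in-etr} (equivalently Theorem~\ref{thm:arm_in_existr}) with $k=3$. The map sends an instance $(A_\ell,b_\ell)_{\ell=1}^q$ to the sentence \eqref{eq:inetr:etr_sentence}. That sentence has $3(m+n)$ variables and $q$ degree-$2$ equations. Each equation has at most $3mn$ monomials, with coefficients copied from the $A_\ell$. So the output size is polynomial in the input bit-length, and the map is a polynomial-time many-one reduction to ETR. By the definition of $\exists\mathbb{R}$ in Section~\ref{sec:prelim:etr}, this gives $\mathrm{ARM}(3)\in\exists\mathbb{R}$.

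\textbf{Hardness.} Theorem~\ref{thm:arm-hard} gives a polynomial-time many-one reduction to $\mathrm{ARM}(3)$ from systems of polynomial equations. To get hardness for all of $\exists\mathbb{R}$, I chain reductions. Any $L'\in\exists\mathbb{R}$ reduces to ETR by definition. ETR reduces to equality form by Lemma~\ref{lem:etr_to_equalities}, and equality form is exactly the source format \eqref{eq:hard:source}. That format reduces to $\mathrm{ARM}(3)$ by Theorem~\ref{thm:arm-hard}. Polynomial-time many-one reductions compose: a polynomial of a polynomial is still a polynomial, both in running time and in output bit-size. Hence the whole chain is a single polynomial-time many-one reduction from $L'$ to $\mathrm{ARM}(3)$.

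\textbf{Expected obstacle.} The only point needing care is that the hypotheses of the two halves actually match. The output of Lemma~\ref{lem:etr_to_equalities} uses integer coefficients, and the source format of Theorem~\ref{thm:arm-hard} asks for rational coefficients. Integers are rationals, so the first feeds directly into the second. Likewise, the rank bound $k=3$ used in the hardness construction is the same fixed constant covered by the membership theorem. Once these compatibility checks are confirmed, the corollary follows immediately.
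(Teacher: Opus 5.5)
Your proposal is correct and is essentially the paper's own argument: the paper proves the corollary in two lines by citing Theorem~\ref{thm:arm-hard} for hardness and Theorem~\ref{thm:arm-in-etr} with $k=3$ for membership, and your explicit chaining of reductions, the size count for the factorization sentence, and the integer-versus-rational check only spell out what the paper leaves implicit. One caution lies upstream of your argument rather than in it: the paper's proof of Theorem~\ref{thm:arm-hard} does not hold up, because the zero blocks of $Y$ in Lemma~\ref{lem:rank_obstruction} are never imposed and a block $\begin{pmatrix}1&a\\ b&c\end{pmatrix}$ on fresh rows and columns can take arbitrary values inside a rank-$3$ matrix (take factor rows $e_1^\top,e_2^\top$ and factor columns $(1,b,0)^\top,(a,c,0)^\top$); the hardness statement you cite is nevertheless true, e.g.\ by constraining a block to be symmetric with $(1,1)$-entry $1$ and padding it block-diagonally with a pinned $I_2$, so that $\rank(X)\le 3$ forces that block to equal $ww^\top$ with $w_1=1$ and every gate equation (of degree at most $2$) becomes an affine constraint on its entries.
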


\begin{proof}
By Theorem~\ref{thm:arm-hard}, ARM$(3)$ is $\exists\mathbb{R}$-hard. By Theorem~\ref{thm:arm-in-etr},
ARM$(3)\in\exists\mathbb{R}$. Therefore ARM$(3)$ is $\exists\mathbb{R}$-complete.
\end{proof}

\section{The Rank--3 Template and Gauge Fixing}
\label{sec:rank3_template}

This section defines a single global matrix variable
$X \in \mathbb{R}^{m \times n}$ whose rank--$3$ structure serves as a
uniform carrier for all scalar quantities appearing in the arithmetic
circuit encoding. Each circuit variable is represented by a designated
entry of $X$, while all semantic relations are enforced by affine
constraints. Since low--rank factorizations are not unique, special care
is required to ensure that the scalar values encoded by designated
entries are well-defined. We resolve this issue via explicit gauge
fixing.

\subsection{Designated Value Carriers}
\label{subsec:designated_carriers}

Let $\{z_1,\dots,z_N\}$ denote the real-valued variables corresponding to
the gate variables introduced in the arithmetic circuit encoding
(Section~\ref{sec:prelim}). We represent each such scalar by a designated
entry of the matrix $X$.

\begin{lemma}[Affine constraints for designated entries]\label{lem:entry_constraints}
Let $X\in\mathbb{R}^{m\times n}$.
\begin{enumerate}
\item Fixing an entry: for any $(r,c)\in[m]\times[n]$ and any $a\in\mathbb{Q}$, the constraint
\begin{equation}\label{eq:entry_constraints:fix}
X_{r,c}=a
\end{equation}
can be written as a single affine constraint $\langle A,X\rangle=b$.

\item Equality of two entries: for any $(r,c),(r',c')\in[m]\times[n]$, the constraint
\begin{equation}\label{eq:entry_constraints:eq}
X_{r,c}=X_{r',c'}
\end{equation}
can be written as a single affine constraint $\langle A,X\rangle=b$.
\end{enumerate}
\end{lemma}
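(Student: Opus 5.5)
The plan is to exhibit the coefficient matrix $A$ and scalar $b$ explicitly in each case, using the elementary matrices $E^{(i,j)}$ from Section~\ref{sec:prelim:block_notation}. The only fact needed is that $\langle E^{(i,j)},X\rangle = X_{ij}$ for every $(i,j)\in[m]\times[n]$. This follows at once from the definition of the Frobenius inner product: the sum $\sum_{i',j'}(E^{(i,j)})_{i'j'}X_{i'j'}$ has exactly one nonzero term, namely $X_{ij}$.

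For part (1), I will take $A:=E^{(r,c)}$ and $b:=a$. Then $\langle A,X\rangle=b$ holds exactly when $X_{r,c}=a$. Since $a\in\mathbb{Q}$ and $A\in\{0,1\}^{m\times n}$, this is a legitimate rational affine constraint in the sense of Definition~\ref{def:armk}. This case is just item (1) of Lemma~\ref{lem:block_constraints_compile} with $T=1$, and I will cite that lemma.

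For part (2), I will take $A:=E^{(r,c)}-E^{(r',c')}$ and $b:=0$. Linearity of the inner product in its first argument gives
\[
\langle A,X\rangle=X_{r,c}-X_{r',c'},
\]
so the constraint $\langle A,X\rangle=0$ is equivalent to $X_{r,c}=X_{r',c'}$. This is item (2) of Lemma~\ref{lem:block_constraints_compile} specialized to $1\times 1$ blocks, i.e.\ identity \eqref{eq:block:block_equality_affine}.

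One degenerate case needs a remark: if $(r,c)=(r',c')$, then $A=0$ and the constraint reads $0=0$. This is trivially satisfied and matches the tautological requirement $X_{r,c}=X_{r,c}$. The constraint may be kept or dropped; either way it is a valid affine constraint and the equivalence holds.

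There is no real obstacle here. The only point requiring care is confirming that the produced data lie in $\mathbb{Q}^{m\times n}\times\mathbb{Q}$ with $O(1)$ bit-size per nonzero coefficient, so that these constraints can be emitted in polynomial time within the reduction. Both constructions use only coefficients in $\{-1,0,1\}$ and the given rational $a$, so this holds.
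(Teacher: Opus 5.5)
Your proposal is correct and takes the same approach as the paper. The paper likewise takes $A$ with a single $1$ at $(r,c)$ and $b=a$ for part (1), and $A$ with $+1$ at $(r,c)$, $-1$ at $(r',c')$ and $b=0$ for part (2). Your formulation $A=E^{(r,c)}-E^{(r',c')}$ is slightly more careful, because it handles the degenerate case $(r,c)=(r',c')$ correctly, whereas the paper's entrywise prescription is ill-defined there.
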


\begin{proof}
For \eqref{eq:entry_constraints:fix}, let $A\in\mathbb{R}^{m\times n}$ have
$A_{r,c}=1$ and all other entries $0$, and set $b=a$. Then $\langle A,X\rangle=X_{r,c}=b$.

For \eqref{eq:entry_constraints:eq}, let $A$ have $A_{r,c}=1$, $A_{r',c'}=-1$, and all other entries $0$,
and set $b=0$. Then $\langle A,X\rangle=X_{r,c}-X_{r',c'}=0$.
\end{proof}

\begin{definition}[Designated entries]
\label{def:designated_entries}
For each circuit variable $z_i$, fix an index pair
$(r_i,c_i)\in[m]\times[n]$ and define
\begin{equation}
x_i := X_{r_i,c_i}.
\label{eq:designated_entry}
\end{equation}
We refer to $(r_i,c_i)$ as the designated position carrying the value of
$z_i$.
\end{definition}

Fixing a designated entry to a constant or enforcing equality between
two designated entries is representable by affine constraints of the
form $\langle A,X\rangle=b$ by
Lemma~\ref{def:designated_entries} Lemma~\ref{lem:entry_constraints}. Hence, all constraints
introduced in this section are valid instances of ARM(3).

The key requirement is that the mapping
\begin{equation}
X \;\longmapsto\; (x_1,\dots,x_N)
\label{eq:decoding_map}
\end{equation}
be well-defined over the feasible set of matrices satisfying
$\rank(X)\le 3$ and all imposed affine constraints.

\subsection{Gauge Freedom of Rank--3 Factorizations}
\label{subsec:gauge_freedom}

By Lemma~\ref{lem:rank_factorization_equiv}, the constraint $\rank(X)\le 3$ is
equivalent to the existence of matrices
$U\in\mathbb{R}^{m\times 3}$ and $V\in\mathbb{R}^{3\times n}$ such that
\begin{equation}
X = UV .
\label{eq:rank3_factorization}
\end{equation}
Such a factorization is not unique: for any invertible
$G\in GL(3)$,
\begin{equation}
UV = (UG)(G^{-1}V).
\label{eq:gauge_action}
\end{equation}
This \emph{gauge freedom} implies that the matrices $U$ and $V$ are not
canonical. While the product $X$ is invariant under
\eqref{eq:gauge_action}, the existence of multiple factorizations raises
the question of whether the decoding
\eqref{eq:decoding_map} is independent of the chosen factorization.
Explicit gauge fixing is therefore required.

\subsection{Gauge Fixing via a Pinned $3\times 3$ Submatrix}
\label{subsec:gauge_fixing}

We eliminate all nontrivial gauge freedom by pinning a constant
$3\times 3$ submatrix of $X$.

\begin{lemma}[Gauge-fixing lemma]
\label{lem:gauge_fixing}
Let $I_0\subseteq[m]$ and $J_0\subseteq[n]$ with $|I_0|=|J_0|=3$. Fix a
constant full-rank matrix $B\in\mathbb{R}^{3\times 3}$ (e.g.,
$B=I_3$) and impose the affine constraint
\begin{equation}
X[I_0,J_0]=B.
\label{eq:gauge_block}
\end{equation}
Then for every matrix $X$ satisfying $\rank(X)\le 3$ and all affine
constraints, the decoded values
$x_i=X_{r_i,c_i}$ are uniquely determined.
\end{lemma}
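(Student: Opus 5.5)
The plan is to read ``uniquely determined'' in the sense required by the decoding map \eqref{eq:decoding_map}: for a fixed feasible $X$, the value $x_i=X_{r_i,c_i}$ should not depend on which rank-$3$ factorization $X=UV$ (Lemma~\ref{lem:rank_factorization_equiv}) is used to describe $X$. It should moreover be expressible canonically through the pinned rows $I_0$ and columns $J_0$. Strictly speaking, an entry of a given matrix is trivially determined by that matrix, so the real content is gauge invariance together with a canonical reconstruction formula. The lemma cannot assert that the affine constraints pin all entries to a single value, since unconstrained entries remain free. I will make this reading explicit at the start of the proof.

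\textbf{Step 1: rank is exactly $3$.} Since $X[I_0,J_0]=B$ has rank $3$, we get $\rank(X)\ge 3$. Together with $\rank(X)\le 3$ this gives $\rank(X)=3$.

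\textbf{Step 2: the pinned factors are invertible.} Fix any factorization $X=UV$ with $U\in\R^{m\times 3}$ and $V\in\R^{3\times n}$. Write $U_0:=U[I_0,:]$ and $V_0:=V[:,J_0]$. Then $U_0V_0=B$ is invertible, so both $3\times 3$ factors $U_0$ and $V_0$ are invertible.

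\textbf{Step 3: the skeleton identity.} From Step 2 I compute
\[
X[:,J_0]\,B^{-1}\,X[I_0,:]=(UV_0)(U_0V_0)^{-1}(U_0V)=UV_0V_0^{-1}U_0^{-1}U_0V=UV=X .
\]
Consequently every entry satisfies
\[
X_{r,c}=X[\{r\},J_0]\,B^{-1}\,X[I_0,\{c\}].
\]
The right-hand side involves only entries of $X$ and the constant $B$, and no factor matrices. Hence each decoded value $x_i$ is the same for every factorization. It is also a fixed bilinear function of the carrier's row segment on $J_0$ and column segment on $I_0$, which is the form later gadget soundness arguments will use.

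\textbf{Step 4: the gauge is fully fixed.} To make the ``canonical gauge'' used in the proof of Theorem~\ref{thm:arm-hard} precise, I would set $G:=U_0^{-1}$ and replace $(U,V)$ by $(UG,\,G^{-1}V)$. After this change the new pinned factors are $U_0'=I_3$ and $V_0'=B$. If $(U',V')$ and $(U'',V'')$ are two such normalized factorizations of $X$, they differ by some $H\in GL(3)$ as in \eqref{eq:gauge_action}. The condition $U_0'H=U_0''=I_3$ then forces $H=I_3$. Therefore the normalized factorization is unique, and it is given explicitly by
\[
U=X[:,J_0]B^{-1},\qquad V=X[I_0,:].
\]

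The computations are routine. The main obstacle is the interpretive one: stating precisely what is being claimed, so that later gadget arguments can invoke the explicit formula $X=X[:,J_0]B^{-1}X[I_0,:]$ rather than an informal appeal to ``no residual ambiguity.'' Showing that any two factorizations differ by some $H\in GL(3)$ is standard (both $U$ and $UH$ have column space equal to $\mathrm{col}(X)$, which is $3$-dimensional), but it is not strictly needed, because Step 3 already gives factorization-independence directly.
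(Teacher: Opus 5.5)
Your proposal is correct, and it takes a different route from the paper's at the one point where the paper's argument breaks. Like you (Step 2), the paper first shows $U[I_0,:]$ and $V[:,J_0]$ are invertible. It then takes two factorizations related by $G\in GL(3)$, restricts to the pinned block to get \eqref{eq:gauge_block2}, and concludes $G=I_3$. That displayed equation holds for every $G$, so nothing is forced. In fact the pin $X[I_0,J_0]=B$ constrains only $X$ and leaves the full $GL(3)$ freedom in $(U,V)$; for example, $X=B$ is factored by $(G,\,G^{-1}B)$ for any invertible $G$. The paper's stated conclusion survives only because, as you note, the entries of a fixed $X$ are trivially independent of the factorization.

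You avoid this problem in two steps.
\begin{itemize}
\item[(i)] The skeleton identity $X=X[:,J_0]\,B^{-1}X[I_0,:]$ gives factorization-independence directly, with no uniqueness claim on the factors. It also shows that every entry of a feasible $X$ is a fixed bilinear function of the pinned rows and columns, which is the useful form for later arguments.
\item[(ii)] You get uniqueness only after the extra normalization $U[I_0,:]=I_3$, with explicit factors $U=X[:,J_0]B^{-1}$ and $V=X[I_0,:]$. This is the correct version of the claim, and it is what the paper later states and proves in Lemma~\ref{lem:correctness:gauge}.
\end{itemize}

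One thing to tidy: the parenthetical in your last paragraph presupposes the $H$ it is meant to produce. Argue it for two arbitrary factorizations $(U',V')$ and $(U'',V'')$. Both $U'$ and $U''$ have column space equal to $\mathrm{col}(X)$, which is $3$-dimensional by Step 1, so $U''=U'H$ for some invertible $H$. Cancelling the full-column-rank $U'$ on the left in $U'V'=U'HV''$ then gives $V''=H^{-1}V'$.
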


\begin{proof}
Let $X=UV$ be any factorization satisfying
\eqref{eq:rank3_factorization}. Restricting to the pinned indices yields
\begin{equation}
U[I_0,:]\;V[:,J_0] = B .
\label{eq:restricted_factorization}
\end{equation}
Since $B$ has rank $3$, both matrices $U[I_0,:]$ and $V[:,J_0]$ must have
rank at least $3$. As they are $3\times 3$ matrices, both are invertible.

Now consider two rank--$3$ factorizations $X=UV=U'V'$.
By \eqref{eq:rank3_factorization}, there exists
$G\in GL(3)$ such that $U'=UG$ and $V'=G^{-1}V$.

Restricting to the pinned indices and using \eqref{eq:gauge_action}, we obtain
\begin{equation}
U[I_0,:]\;V[:,J_0]
=
U[I_0,:]\;G\;\bigl(G^{-1}V\bigr)[:,J_0].
\label{eq:gauge_block2}
\end{equation}

Since $U[I_0,:]$ and $V[:,J_0]$ are invertible, it follows that $G=I_3$.
Hence the factorization is unique up to the trivial gauge, and therefore
every entry of $X$, in particular each designated entry
$X_{r_i,c_i}$, is invariant across all admissible rank--$3$
factorizations.
\end{proof}

\begin{remark}[Affine implementability]
\label{rem:affine_gauge}
Each equality in \eqref{eq:gauge_block} corresponds to an affine
constraint of the form
$\langle E^{(i,j)},X\rangle=B_{ab}$. The number of gauge-fixing constraints
is constant and independent of the input size.
\end{remark}

\subsection{Consistency Constraints and Fan-Out}
\label{subsec:consistency_constraints}

Arithmetic circuits allow unrestricted fan-out: a gate output may feed
multiple subsequent gates. This is enforced by equality constraints
between designated entries.

\begin{definition}[Consistency constraints]
\label{def:consistency}
If two circuit variables $z_i$ and $z_j$ represent the same logical
quantity, we impose
\begin{equation}
X_{r_i,c_i} = X_{r_j,c_j}.
\label{eq:consistency_constraint}
\end{equation}
\end{definition}

By Lemma~\ref{def:designated_entries}, each constraint
\eqref{eq:consistency_constraint} is representable by a single affine
constraint. These ensure that all copies of a wire carry identical
decoded scalar values.

\subsection{Well-Definedness and Soundness}
\label{subsec:soundness_rank3}

We now separate the correctness guarantees provided by the rank--$3$
template.

\begin{lemma}[Well-defined decoding]
\label{lem:well_defined_decoding}
For any $X$ satisfying $\rank(X)\le 3$ and all constraints of
Sections~\ref{subsec:designated_carriers}--\ref{subsec:consistency_constraints},
the decoded values $(x_1,\dots,x_N)$ are uniquely determined.
\end{lemma}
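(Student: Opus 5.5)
The decoded values are, by Definition~\ref{def:designated_entries}, nothing more than particular entries $x_i=X_{r_i,c_i}$ of the matrix $X$ itself. So the plan is to observe that the decoding map \eqref{eq:decoding_map} is a coordinate projection $X\mapsto (X_{r_1,c_1},\dots,X_{r_N,c_N})$. A coordinate projection is a well-defined function on all of $\mathbb{R}^{m\times n}$, hence in particular on the feasible set cut out by $\rank(X)\le 3$ together with the affine constraints of Sections~\ref{subsec:designated_carriers}--\ref{subsec:consistency_constraints}. I would state this first, since it already settles the claim once the meaning of ``uniquely determined'' is fixed.

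The substantive point is that the values do not depend on any auxiliary factorization used to describe $X$. I would argue this as follows.
\begin{enumerate}
\item Fix a feasible $X$. By Lemma~\ref{lem:rank_factorization_equiv}, write $X=UV$ with $U\in\mathbb{R}^{m\times 3}$ and $V\in\mathbb{R}^{3\times n}$.
\item For any other factorization $X=U'V'$, the products agree, $UV=U'V'=X$. Hence $(U'V')_{r_i,c_i}=X_{r_i,c_i}$ for every $i$, so the decoded value is the same whichever factorization is used.
\item The pinned block \eqref{eq:gauge_block} then gives more. By Lemma~\ref{lem:gauge_fixing}, $U[I_0,:]$ and $V[:,J_0]$ are invertible, and the factorization can be normalized, for instance to $V[:,J_0]=I_3$, so that $U[I_0,:]=B$. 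Under this normalization, $U$ is determined as $X[:,J_0]$ and $V$ as $B^{-1}X[I_0,:]$.
\item It follows that $X=X[:,J_0]\,B^{-1}\,X[I_0,:]$, which exhibits each decoded value as an explicit, canonical function of the factors in the fixed gauge.
\end{enumerate}

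Finally, I would check the fan-out requirement. Each consistency constraint \eqref{eq:consistency_constraint} is an affine equality between entries. So whenever $z_i$ and $z_j$ denote the same logical quantity, any feasible $X$ assigns them equal decoded values, and the assignment to logical variables (not just to carrier positions) is single-valued.

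The main obstacle is conceptual rather than technical: stating precisely what ``uniquely determined'' means. It cannot mean that the values are the same for all feasible $X$; that is false, since different circuit witnesses give different feasible $X$. The correct reading is that the values are determined by $X$, independently of the factorization and the gauge, and that the proof should appeal to the explicit identity in step~4 rather than to any uniqueness of $G$.
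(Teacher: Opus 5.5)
Your argument is correct, but it takes a different route from the paper.

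The paper's proof is a one-line appeal to Lemma~\ref{lem:gauge_fixing}. That lemma's proof asserts that pinning $X[I_0,J_0]=B$ forces the change of basis between any two rank-$3$ factorizations to be $G=I_3$. It then derives invariance of the carriers from this uniqueness of the factor pair.

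You instead note that decoding is a coordinate projection of $X$, so it is well-defined without any reference to factorizations. Factorization-independence is then automatic, since all factorizations have the same product. The pinned block enters only to normalize the factors and to obtain $X=X[:,J_0]\,B^{-1}X[I_0,:]$.

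Your route is more elementary. More importantly, it does not rely on the claim $G=I_3$, which is false. The pinned block constrains only the product $U[I_0,:]\,V[:,J_0]$, so $(UG,G^{-1}V)$ satisfies it for every $G\in GL(3)$. The factors become unique only after an explicit normalization, such as your $V[:,J_0]=I_3$ or the canonical gauge of Lemma~\ref{lem:correctness:gauge}.

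One small correction to your closing remark: steps~3--4 are not needed for the lemma as stated, since steps~1--2 already prove it. What they buy is a strictly stronger fact: every entry of $X$, and hence every carrier, is a fixed bilinear function of the pinned rows $X[I_0,:]$ and columns $X[:,J_0]$. That is the genuine content of gauge fixing, whereas the lemma itself needs no gauge at all.
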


\begin{proof}
This follows immediately from Lemma~\ref{lem:gauge_fixing} and the
definition of designated entries.
\end{proof}

\begin{lemma}[Semantic correctness]
\label{lem:semantic_correctness}
All affine constraints encoding circuit gates and wire equalities are
satisfied exactly by the decoded values $(x_1,\dots,x_N)$.
\end{lemma}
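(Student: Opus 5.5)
The claim is essentially definitional: every constraint introduced in Sections~\ref{subsec:designated_carriers}--\ref{subsec:consistency_constraints} is an affine equality among designated entries of $X$, and the decoded values are by definition those same entries. The plan is to fix a feasible $X$ and enumerate the constraint types, checking each against the decoding map \eqref{eq:decoding_map}.

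\textbf{Step 1 (decoding is a function of $X$).} Fix any $X$ with $\rank(X)\le 3$ satisfying all imposed affine constraints. By Definition~\ref{def:designated_entries}, $x_i=X_{r_i,c_i}$. By Lemma~\ref{lem:well_defined_decoding}, which rests on Lemma~\ref{lem:gauge_fixing}, these values do not depend on any choice of factorization $X=UV$. Strictly, they are entries of $X$ itself and so are trivially independent of the factorization. We can therefore treat $(x_1,\dots,x_N)$ as a fixed real vector attached to $X$.

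\textbf{Step 2 (case check).} Go through the affine gate and wire types of \eqref{eq:gate-types}; each was compiled into a constraint $\langle A,X\rangle=b$ via Lemma~\ref{lem:entry_constraints} or Lemma~\ref{lem:block_constraints_compile}.
\begin{enumerate}
\item A constant gate $z_i=c$ is the constraint $\langle E^{(r_i,c_i)},X\rangle=c$. Evaluating it gives $x_i=c$.
\item A copy or fan-out constraint \eqref{eq:consistency_constraint} gives $x_i=x_j$.
\item Addition $z=z'+z''$ is the single affine constraint with coefficient matrix $E^{(r,c)}-E^{(r',c')}-E^{(r'',c'')}$ and right-hand side $0$. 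It evaluates to $x-x'-x''=0$.
\item Negation is handled the same way, with coefficient matrix $E^{(r,c)}+E^{(r',c')}$.
\item The gauge constraints \eqref{eq:gauge_block} hold by assumption.
\end{enumerate}
In every case, the scalar identity required by $\mathrm{GateEq}(\mathcal{C})$ is literally the value of $\langle A,X\rangle-b$ written in terms of the $x_i$. It therefore holds exactly, with no approximation and no dependence on the rank bound.

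\textbf{Step 3 (scope).} The statement only covers \emph{affine} gate encodings. Multiplication gates are not expressible by a single affine equality among carriers, so their semantic correctness cannot follow from this argument. I would state explicitly that those gates are deferred to the rank-forcing gadgets of Section~\ref{sec:rank_forcing}.

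The main obstacle is this scoping point, not any computation. If the lemma is read as covering all gates, including $\times$, the argument above is insufficient. One would then have to invoke the gadget soundness, namely that a $4\times 4$ minor is nonzero whenever $z\ne xy$. Within this section, I would restrict the claim to affine semantics and note the dependency on the later gadget lemmas.
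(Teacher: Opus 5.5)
Your proposal is correct and follows essentially the same route as the paper. The paper's one-line proof likewise observes that each gate and wire constraint is an affine equality between designated entries, citing Lemma~\ref{lem:well_defined_decoding}. Your two additional remarks are accurate and tighten the argument: the decoded values are entries of $X$ itself, so the appeal to gauge fixing is not needed, and the lemma covers only the affine gates, with multiplication semantics deferred to the rank-forcing gadgets.
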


\begin{proof}
Each gate constraint is enforced directly as an affine equality between
designated entries. By Lemma~\ref{lem:well_defined_decoding}, the decoded
values satisfy these equalities uniquely.
\end{proof}

\begin{theorem}[Soundness of the rank--3 encoding]
\label{thm:soundness_rank3}
The rank--$3$ ARM instance constructed from the circuit is satisfiable if
and only if the original arithmetic circuit is satisfiable.
\end{theorem}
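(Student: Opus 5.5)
The plan is to prove the two directions separately, in the spirit of the proof of Theorem~\ref{thm:arm-hard}, restricting to what the rank--$3$ template of this section already provides. \emph{Decoding is trivial once $X$ is given}: the values $x_i=X_{r_i,c_i}$ are read directly from $X$. So Lemma~\ref{lem:well_defined_decoding} serves only to justify speaking of ``the'' decoded assignment. The real content lies in (a) realizing any satisfying circuit assignment by a rank-$\le 3$ matrix, and (b) checking that every gate relation is recovered from the constraints.

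\textbf{Circuit satisfiable $\Rightarrow$ ARM satisfiable.} Fix a satisfying assignment $(z_g)_{g\in G}$ of $\mathrm{GateEq}(\mathcal{C})\wedge z_{g_{\mathrm{out}}}=0$ (Lemma~\ref{lem:circuit_constraints_exact}). I will arrange, as a design choice of the reduction, that each designated position $(r_i,c_i)$ uses a row and a column disjoint from $I_0,J_0$ and from all other carriers. Then I build factors explicitly:
\begin{itemize}
\item set $U[I_0,:]=I_3$ and $V[:,J_0]=B$, so that $X[I_0,J_0]=B$;
\item for each carrier, set row $r_i$ of $U$ to $e_1^\top$ and column $c_i$ of $V$ to $(z_i,0,0)^\top$;
\item set all remaining rows and columns to zero.
\end{itemize}
Then $X:=UV$ has $\rank(X)\le 3$, meets the gauge block constraint, and has $X_{r_i,c_i}=z_i$. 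The entries $X_{r_i,c_j}$ with $i\ne j$ and the mixed gauge/carrier entries are not constrained by any affine equation, so their values are irrelevant. Every constant, copy, addition, negation and consistency constraint (Definition~\ref{def:consistency}) is an affine identity among designated entries, and each holds because $(z_g)$ satisfies $\mathrm{GateEq}(\mathcal{C})$. The output entry is $0$.

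\textbf{ARM satisfiable $\Rightarrow$ circuit satisfiable.} Given a feasible $X$, set $z_i:=X_{r_i,c_i}$. For affine gate types (i), (iii), (iv) of \eqref{eq:gate-types} and for wire copies, the imposed constraint is literally the gate equation in these variables, so it holds by Lemma~\ref{lem:semantic_correctness}. The output constraint gives $z_{g_{\mathrm{out}}}=0$. For multiplicative gates (type (ii), together with the inverse and square encodings that replace type (v) via Lemma~\ref{lem:etr_to_equalities}), I will invoke the gadget-level soundness guarantee of Section~\ref{sec:rank_forcing}: any violation $z\neq xy$ produces a nonsingular $4\times4$ submatrix, contradicting $\rank(X)\le 3$. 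The gauge block is used here to normalize $U[I_0,:]$ and $V[:,J_0]$ so that the gadget's determinant computation reduces to the scalar identity. Lemma~\ref{lem:circuit_constraints_exact} then turns the decoded assignment into a witness.

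\textbf{Main obstacle.} I expect the hard step to be the multiplicative part in both directions, not the affine bookkeeping. In the soundness direction, one must make sure that the forcing $4\times 4$ minor only involves gadget-local rows and columns plus the gauge block, so that unconstrained cross entries cannot make it vanish spuriously. In the completeness direction, the local rank-$3$ completions of different gadgets must be glued into one global factorization. Both directions therefore rest on the disjoint-support discipline and the composition lemma. My approach would be to give every gadget fresh rows and columns, so that gluing amounts to copying factor rows and columns exactly as in the proof of Theorem~\ref{thm:arm-hard}. I would state the multiplicative case of this theorem as conditional on those later gadget lemmas rather than try to prove it from the material of this section alone.
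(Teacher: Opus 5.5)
Your handling of the affine gates is correct. Your explicit factorization is a more rigorous version of the paper's ``complete the remaining entries to rank $3$'': each carrier gets a private row and column, with $U[r_i,:]=e_1^\top$ and $V[:,c_i]=(z_i,0,0)^\top$.

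\textbf{The gap is the multiplicative case.} You hand this case off to Lemma~\ref{lem:det_forcing} via Lemma~\ref{lem:rank_obstruction}, and that step fails as stated. The matrix $Y$ in Lemma~\ref{lem:rank_obstruction} has zero off-diagonal blocks. But the gadget never imposes $X[\{r_1,r_2\},\{j_1,j_2\}]=0$ or $X[\{i_1,i_2\},\{c_1,c_2\}]=0$. Passing to the canonical gauge does not make these entries vanish, since they are entries of $U[\{r_1,r_2\},:]B$ and of $V[:,\{c_1,c_2\}]$.

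Here is a concrete counterexample. Take $B=I_3$ and set $U[r_1,:]=e_1^\top$, $U[r_2,:]=e_2^\top$, $V[:,c_1]=(1,b,0)^\top$, $V[:,c_2]=(a,c,0)^\top$. This realizes the block $\begin{pmatrix}1&a\\ b&c\end{pmatrix}$ with $\rank X\le 3$ for \emph{every} $(a,b,c)$. So, for example, the compiled instance for $x\cdot x+1=0$ is feasible with $x=0$, $z=-1$.

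Your proposed remedy, giving each gadget fresh rows and columns, does not help. The free entries are exactly where a gadget's fresh rows meet the shared gauge columns, and vice versa. In fact, your own first-direction construction shows that the rank bound places no restriction at all on carriers sitting in private rows and columns. The paper's proof of this theorem does not close the gap either: its ``only if'' step rests on Lemma~\ref{lem:semantic_correctness}, which asserts that every gate, multiplication included, is an affine equality between designated entries.

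\textbf{What is missing} is an argument that routes multiplication through entries shared with the gauge block. Suppose $\rank X\le 3$ and $X[I_0,J_0]=B$ is invertible. Then the rows indexed by $I_0$ form a basis of the row space, so
$X_{ij}=X[i,J_0]\,B^{-1}\,X[I_0,j]$ for all $i,j$. Take $B=I_3$. There are two ways to proceed.

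\begin{enumerate}
\item Actually impose the eight zeros that $Y$ presupposes. Then $X_{ij}=X_{i,j_3}X_{i_3,j}$ for $i\in\{r_1,r_2\}$, $j\in\{c_1,c_2\}$, where $i_3,j_3$ are the remaining gauge indices. The gadget block therefore has rank $1$, which forces $c=ab$. Note that invertibility of $B$ alone does not make the $2\times 2$ gauge minor used in $Y$ nonzero, so choose $B$ accordingly.
\item More simply, for $z=xy$ use a fresh row $r^\ast$ and a fresh column $c^\ast$. Impose $X_{r^\ast,j_1}=x$, $X_{r^\ast,j_2}=X_{r^\ast,j_3}=0$, and $X_{i_1,c^\ast}=y$ via copy constraints, and let $(r^\ast,c^\ast)$ carry $z$. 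The rank bound then forces $z=xy$. Conversely, $U[r^\ast,:]=(x,0,0)$ and $V[:,c^\ast]=(y,0,0)^\top$ realize any $x,y$. Since this gadget meets $I_0,J_0$ only in entries of its own fresh row and column, your gluing by copying factor rows and columns goes through unchanged.
\end{enumerate}
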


\begin{proof}
(\emph{If}) Given a satisfying assignment to the circuit variables,
construct $X$ by fixing the gauge block \eqref{eq:gauge_block2}, assigning
each designated entry according to the circuit values, and completing
the remaining entries to obtain a rank--$3$ matrix.

(\emph{Only if}) Given a feasible $X$, decode the circuit values via
\eqref{eq:designated_entry}. By
Lemmas~\ref{lem:well_defined_decoding} and
\ref{lem:semantic_correctness}, these values satisfy all circuit
constraints.
\end{proof}

\section{Rank-Forcing Primitives}
\label{sec:rank_forcing}

This section develops the core technical machinery of the reduction:
\emph{rank-forcing primitives} that enforce nonlinear algebraic relations
using only affine constraints together with a global rank bound
$\rank(X)\le 3$.
All constraints introduced here are valid ARM$(3)$ instances
(Definition~\ref{def:armk}) and are compatible with the rank--$3$ template
and gauge-fixing constraints of Section~\ref{sec:rank3_template}.

The central result is a determinant-based forcing gadget that enforces
multiplication relations of the form $c = ab$ between designated scalar
variables.
This gadget replaces all minor-based constraints used in earlier reductions
and is the technical heart of the construction.

\subsection{Forcing Under a Global Rank Constraint}
\label{subsec:forcing_principle}

We begin with two elementary but crucial linear-algebraic facts.

\begin{lemma}[Rank Monotonicity]
\label{lem:rank_monotone}
Let $X\in\R^{m\times n}$ and let $Y$ be any submatrix of $X$.
Then
\[
\rank(X)\;\ge\;\rank(Y).
\]
\end{lemma}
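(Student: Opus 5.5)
Write $Y=X[I,J]$ for index sets $I\subseteq[m]$ and $J\subseteq[n]$, so that $Y=S_I X T_J$ by the selection identity \eqref{eq:block:selection_identity}. The plan is to get the inequality from submultiplicativity of rank under matrix products. Two routes are available, and I would present the factorization route because it uses only results already stated in the paper.

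\textbf{Factorization route.} Let $r=\rank(X)$. If $r=0$ then $X=0$, hence $Y=0$, and the claim is trivial. Otherwise, Lemma~\ref{lem:rank_factorization_equiv} with $k=r$ gives $U\in\R^{m\times r}$ and $V\in\R^{r\times n}$ with $X=UV$. Selecting rows and columns gives
\[
Y=(S_IU)(VT_J)=U[I,:]\,V[:,J],
\]
which is a product of an $|I|\times r$ matrix and an $r\times|J|$ matrix. The easy direction of Lemma~\ref{lem:rank_factorization_equiv} (a product through an $r$-dimensional inner space has rank at most $r$) then yields $\rank(Y)\le r=\rank(X)$.

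\textbf{Direct route (alternative).} One can also argue from column spaces. The column space of $Y$ is the image of the column space of $X[:,J]$ under the coordinate projection onto $I$, and a linear map never increases dimension. Moreover, $X[:,J]$ has columns drawn from those of $X$, so its column space is a subspace of $\operatorname{col}(X)$. Together these give $\rank(Y)\le\rank(X[:,J])\le\rank(X)$.

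No step here is a real obstacle. The only point needing care is the degenerate case $r=0$, where Lemma~\ref{lem:rank_factorization_equiv} is stated only for $k\ge1$; that case is handled separately at the start of the factorization route. I would also add a remark that the contrapositive is what the paper uses later: if some $4\times4$ submatrix has nonzero determinant, then $\rank(X)\ge4$, which violates the constraint $\rank(X)\le3$.
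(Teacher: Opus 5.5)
Your proposal is correct, but your primary argument takes a different route from the paper. The paper proves the lemma in one line from the definition of rank: a linearly independent set of rows (or columns) of $Y$ lifts to a linearly independent set of the corresponding full rows (or columns) of $X$, because any linear dependency among the full rows would restrict to a dependency among the truncated rows.

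Your factorization route instead restricts a rank-$r$ factorization $X=UV$ to $Y=U[I,:]\,V[:,J]$ and then invokes the easy direction of Lemma~\ref{lem:rank_factorization_equiv}. This is not circular, since that lemma does not use rank monotonicity. It also has the merit of exhibiting the restriction-of-factors identity that the paper later uses in the gauge-fixing argument, where $X=UV$ is restricted to the pinned block to get $U[I_0,:]\,V[:,J_0]=B$.

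It is, however, less elementary than it appears. It needs the existence of a rank factorization. Moreover, the easy direction $\rank(UV)\le k$ is only asserted in the paper, and it rests on the containment $\operatorname{col}(UV)\subseteq\operatorname{col}(U)$, which is the same kind of fact as the lemma itself. It also forces the separate treatment of $r=0$, which you handle correctly.

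Your alternative column-space argument is essentially the paper's argument viewed from the dual side, using images under a projection rather than lifts of independent sets. It observes that restricting to the rows in $I$ is a linear projection and so cannot increase dimension, and that $\operatorname{col}(X[:,J])\subseteq\operatorname{col}(X)$. For a statement this basic, it is the cleaner choice to present first.
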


\begin{proof}
Any set of linearly independent rows (or columns) of $Y$ remains linearly
independent when viewed as rows (or columns) of $X$.
\end{proof}

\begin{lemma}[Vanishing of $4\times4$ Minors]
\label{lem:rank3_minors}
If $\rank(X)\le 3$, then every $4\times4$ minor of $X$ has determinant zero.
\end{lemma}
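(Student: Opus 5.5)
The plan is to argue by contraposition, using Lemma~\ref{lem:rank_monotone} as the only tool. Suppose some $4\times 4$ submatrix $Y=X[I,J]$, with $|I|=|J|=4$, has $\det(Y)\neq 0$. We will show this forces $\rank(X)\ge 4$, which contradicts $\rank(X)\le 3$.

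First, $\det(Y)\neq 0$ means $Y$ is invertible. Hence its four columns are linearly independent, so $\rank(Y)=4$. Second, Lemma~\ref{lem:rank_monotone} gives $\rank(X)\ge\rank(Y)=4$. This is the desired contradiction, so every $4\times 4$ minor of $X$ vanishes whenever $\rank(X)\le 3$.

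If one wants to avoid relying on Lemma~\ref{lem:rank_monotone}, a direct route works as well. By Lemma~\ref{lem:rank_factorization_equiv}, write $X=UV$ with $U\in\R^{m\times 3}$ and $V\in\R^{3\times n}$. Then
\[
X[I,J]=U[I,:]\,V[:,J],
\]
which is the product of a $4\times 3$ and a $3\times 4$ matrix. Such a product has rank at most $3$, so it is singular and its determinant is $0$. The Cauchy--Binet formula reaches the same conclusion: it expresses the determinant as a sum over $4$-subsets of $\{1,2,3\}$, and there are none, so the sum is empty.

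There is no genuine obstacle here. The only point needing care is the standard equivalence between a nonzero determinant and full rank for a square matrix, and we take that as known.
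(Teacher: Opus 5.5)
Your proof is correct and matches the paper's: a $4\times4$ submatrix has rank at most $\rank(X)\le 3$ by rank monotonicity, so it is singular, and your contrapositive is the same argument. The factorization/Cauchy--Binet route you add is also valid but not needed.
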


\begin{proof}
A $4\times4$ submatrix of $X$ has rank at most $\rank(X)\le 3$,
so its determinant vanishes.
\end{proof}

Lemmas~\ref{lem:rank_monotone} and~\ref{lem:rank3_minors} formalize the forcing
principle used throughout this section:
any violation of a bilinear identity that produces a $4\times4$ submatrix
of full rank necessarily contradicts the global rank bound.

\subsection{The Determinant-Forcing Gadget}
\label{subsec:det_gadget}

We now present the key forcing primitive.

\begin{lemma}[Determinant-Forcing Gadget]
\label{lem:det_forcing}
There exists a constant-size system of affine constraints on a matrix
$X\in\R^{m\times n}$ with designated entries $a,b,c$ such that:
\begin{enumerate}
\item[\textnormal{(Soundness)}]
If $X$ satisfies all constraints and $\rank(X)\le 3$, then
\begin{equation}
\label{eq:forced_mul}
c = ab.
\end{equation}
\item[\textnormal{(Completeness)}]
For any real numbers $(a,b,c)$ satisfying $c=ab$, there exists a feasible
matrix $X$ with $\rank(X)\le 3$ satisfying all constraints.
\end{enumerate}
\end{lemma}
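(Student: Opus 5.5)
The plan is to realize the gadget as a $4\times 4$ block-diagonal submatrix whose upper-left $2\times 2$ block is pinned to an invertible constant and whose lower-right $2\times 2$ block is the pattern $\begin{bmatrix}1 & a\\ b & c\end{bmatrix}$. Lemma~\ref{lem:rank3_minors} then forces the lower block to have determinant zero, which is exactly $c-ab=0$.

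\textbf{Construction.} Use the gauge block of Lemma~\ref{lem:gauge_fixing} with $B=I_3$ on rows $I_0=\{i_1,i_2,i_3\}$ and columns $J_0=\{j_1,j_2,j_3\}$. For the gadget, allocate two fresh rows $r_1,r_2$ and two fresh columns $s_1,s_2$. Impose the following affine constraints, all of the forms in Lemma~\ref{lem:entry_constraints}:
\begin{itemize}
\item $X_{r_1,s_1}=1$;
\item the designated entries are $a:=X_{r_1,s_2}$, $b:=X_{r_2,s_1}$ and $c:=X_{r_2,s_2}$;
\item the coupling zeros $X[\{i_1,i_2\},\{s_1,s_2\}]=0$ and $X[\{r_1,r_2\},\{j_1,j_2\}]=0$.
\end{itemize}
In the global reduction, these designated entries would be tied to the circuit carriers by consistency constraints (Definition~\ref{def:consistency}). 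The gadget uses only a constant number of constraints, all with coefficients in $\{0,\pm1\}$.

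\textbf{Soundness.} Consider the $4\times 4$ submatrix $Y:=X[\{i_1,i_2,r_1,r_2\},\{j_1,j_2,s_1,s_2\}]$. By the constraints it is block diagonal,
\[
Y=\begin{bmatrix} I_2 & 0\\ 0 & M\end{bmatrix},\qquad M=\begin{bmatrix}1&a\\ b&c\end{bmatrix},
\]
so $\det Y=\det M=c-ab$. Lemma~\ref{lem:rank3_minors} (equivalently, Lemma~\ref{lem:rank_monotone} with $\rank Y = 2+\rank M\le 3$) gives $\det Y=0$, hence $c=ab$. The third gauge row and column are deliberately left out of $Y$. This spends only two of the three available ranks on the pinned block, and leaves exactly one rank for $M$. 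Including them would instead force $M=0$, which contradicts $X_{r_1,s_1}=1$.

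\textbf{Completeness.} Given $c=ab$, build an explicit factorization $X=UV$ with $U\in\R^{m\times 3}$ and $V\in\R^{3\times n}$:
\begin{itemize}
\item on the gauge, set $U[I_0,:]=I_3$ and $V[:,J_0]=I_3$, so that $X[I_0,J_0]=I_3$;
\item on the gadget, set $U[r_1,:]=e_3^\top$, $U[r_2,:]=b\,e_3^\top$, $V[:,s_1]=e_3$ and $V[:,s_2]=a\,e_3$.
\end{itemize}
Then $M=\begin{bmatrix}1&a\\ b&ab\end{bmatrix}$ as required. The coupling zeros hold because rows $i_1,i_2$ are $e_1^\top,e_2^\top$, which are orthogonal to $e_3$, and columns $j_1,j_2$ are $e_1,e_2$, which are orthogonal to $e_3$. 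Entries such as $X_{i_3,s_1}$ are unconstrained, so the remaining rows and columns can be filled arbitrarily (for instance with zero). Since $X=UV$, we get $\rank(X)\le 3$.

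\textbf{Main obstacle.} The delicate point is the choice of which part of the gauge to include in the forcing minor. Using the full $3\times3$ block over-constrains the gadget, as noted under Soundness. The plan is therefore to use a $2\times 2$ principal piece of $I_3$ and to put all gadget vectors along the third gauge direction $e_3$. This same choice is what later makes gadgets compose without interfering: each gadget touches only fresh rows and columns, plus zeros against the first two gauge directions.
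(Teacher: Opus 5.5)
Your proof is correct. It follows the paper's blueprint: a $4\times 4$ minor built from a $2\times 2$ piece of the pinned gauge block and the pattern $\begin{pmatrix}1&a\\ b&c\end{pmatrix}$, whose determinant is a nonzero constant times $c-ab$, together with an explicit rank-$3$ completion. You depart from the paper exactly where the paper's argument fails.

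The paper pins only $X[\{r_1,r_2\},\{c_1,c_2\}]$ and the gauge block, and declares every other entry unconstrained. Yet in the proof of Lemma~\ref{lem:rank_obstruction} it writes $Y$ with zero off-diagonal blocks. Those zeros are never imposed, and without them the gadget is not sound. For example, take $B=I_3$ and the factors $U[I_0,:]=I_3$, $V[:,J_0]=I_3$, $U[r_1,:]=e_1^\top$, $U[r_2,:]=e_2^\top$, $V[:,c_1]=(1,b,0)^\top$, $V[:,c_2]=(a,c,0)^\top$. These satisfy all of the paper's constraints for arbitrary $(a,b,c)$, at rank $3$. Your coupling zeros $X[\{i_1,i_2\},\{s_1,s_2\}]=0$ and $X[\{r_1,r_2\},\{j_1,j_2\}]=0$ are exactly what makes $Y$ block diagonal. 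They are therefore the essential ingredient of the gadget.

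Several smaller points also favour your version.
\begin{itemize}
\item You fix $B=I_3$ and use a matched pair of gauge rows and columns. The paper instead asserts that the $2\times 2$ gauge piece is invertible simply because $B$ is. That is false in general: the leading $2\times 2$ block of the antidiagonal $3\times 3$ permutation matrix is singular.
\item Because your system contains the coupling zeros, your completeness step has real content. Routing all gadget rows and columns along the unused direction $e_3$ is what satisfies those zeros. The paper's completeness sketch never faces such constraints.
\item Your observation that using the full $3\times 3$ gauge block would force $M=0$ correctly explains why exactly a $2\times 2$ piece must be used.
\end{itemize}
The price of your version is eight additional zero constraints, so the gadget remains constant size.
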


\paragraph{Construction.}
Fix two row indices $r_1,r_2$ and two column indices $c_1,c_2$.
Impose the affine constraints
\begin{equation}
\label{eq:2x2_block}
X[\{r_1,r_2\},\{c_1,c_2\}]
=
\begin{pmatrix}
1 & a \\
b & c
\end{pmatrix}.
\end{equation}
All other entries of $X$ remain unconstrained except for the gauge-fixing
constraints of Section~\ref{sec:rank3_template}.
Constraint~\eqref{eq:2x2_block} is implemented using four designated-entry
constraints (Lemma~\ref{lem:entry_constraints}).

\subsection{Algebraic Rank Obstruction}
\label{subsec:rank_obstruction}

The forcing effect of the gadget is captured by the following explicit
rank obstruction.

\begin{lemma}[Rank Obstruction via a $4\times4$ Determinant]
\label{lem:rank_obstruction}
Assume the gauge-fixing constraints of
Section~\ref{sec:rank3_template} are imposed, fixing a full-rank
$3\times3$ submatrix $X[I_0,J_0]=B$.
If the designated entries satisfy $c\neq ab$, then
\[
\rank(X)\;\ge\;4.
\]
\end{lemma}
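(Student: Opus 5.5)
The plan is to argue by contraposition through Lemma~\ref{lem:rank3_minors}. Assume $\rank(X)\le 3$, that $X[I_0,J_0]=B$ is pinned as in Lemma~\ref{lem:gauge_fixing}, and that the gadget block \eqref{eq:2x2_block} has $c\neq ab$. The goal is to exhibit a $4\times 4$ submatrix of $X$ with nonzero determinant, contradicting Lemma~\ref{lem:rank3_minors}.

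\textbf{Step 1 (canonical gauge).} By Lemma~\ref{lem:rank_factorization_equiv} write $X=UV$ with $U\in\R^{m\times 3}$ and $V\in\R^{3\times n}$. As in the proof of Lemma~\ref{lem:gauge_fixing}, both $U[I_0,:]$ and $V[:,J_0]$ are invertible. Applying the gauge action \eqref{eq:gauge_action}, normalize to $U[I_0,:]=I_3$ and $V[:,J_0]=B$. Then every row of $X$ equals $u_r^\top V$ for a vector $u_r\in\R^3$, and every column equals $U v_c$. In particular
\[
X[\{r_1,r_2\},\{c_1,c_2\}] = \begin{pmatrix} u_{r_1}^\top \\ u_{r_2}^\top\end{pmatrix}\begin{pmatrix} v_{c_1} & v_{c_2}\end{pmatrix}.
\]

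\textbf{Step 2 (candidate minor).} Form the $4\times 4$ submatrix on rows $I_0'\cup\{r_2\}$ and columns $J_0'\cup\{c_2\}$, where $I_0'\subset I_0$ and $J_0'\subset J_0$ are chosen so that $r_1$ and $c_1$ sit in the pinned block. Compute its determinant by a Schur complement with respect to the invertible pinned part. The aim is to show that this Schur complement is a nonzero multiple of $c-ab$.

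\textbf{Main obstacle.} Step~2 is where I expect the argument to fail as currently stated. The Schur complement contains the mixed entries $X_{r_2,J_0}$ and $X_{I_0,c_2}$, and the construction leaves these unconstrained. Moreover, in Step~1 the vectors $u_{r_1},u_{r_2},v_{c_1},v_{c_2}\in\R^3$ are free. Any $2\times 2$ pattern $\left(\begin{smallmatrix}1&a\\ b&c\end{smallmatrix}\right)$, including one with $c\neq ab$, is realizable as a product of a $2\times 3$ and a $3\times 2$ matrix. It therefore appears compatible with $\rank(X)=3$.

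So before carrying out Step~2 I would first test the statement on this explicit completion. If the completion goes through, it is a counterexample: the pinned block and the gadget block, as placed in \eqref{eq:2x2_block} on rows and columns disjoint from $I_0,J_0$, do not force $\rank(X)\ge 4$. In that case the lemma holds only if the gadget's rows and columns are tied to the gauge block by further affine constraints, and the $4\times 4$ determinant computation of Step~2 should be carried out for that coupled layout. If instead the completion fails, the determinant identity from Step~2 finishes the proof.
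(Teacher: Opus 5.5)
Your proposal is not a proof, but it stops at the right place. The test you defer goes \emph{against} the statement, so no argument can complete your Step~2. Take $I_0=J_0=\{1,2,3\}$, $B=I_3$, $r_1=c_1=4$, $r_2=c_2=5$, and
\[
U=\begin{pmatrix} I_3\\ e_1^\top\\ e_2^\top\end{pmatrix},\qquad
V=\begin{pmatrix} I_3 & e_1 & e_2\end{pmatrix}.
\]
Then $X=UV\in\R^{5\times 5}$ has $\rank(X)=3$ and satisfies $X[I_0,J_0]=I_3$. Its gadget block is $X[\{4,5\},\{4,5\}]=I_2$, i.e.\ $a=b=0$ and $c=1\neq ab$. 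More generally, $U[r_1,:]=e_1^\top$, $U[r_2,:]=e_2^\top$, $V[:,c_1]=(1,b,0)^\top$, $V[:,c_2]=(a,c,0)^\top$ realizes every triple $(a,b,c)$ with any pinned $B$. Putting $r_1,c_1$ inside the pinned block, as in your Step~2, does not help either. The bordered determinant is $\det B\,\bigl(c-X[\{r_2\},J_0]\,B^{-1}X[I_0,\{c_2\}]\bigr)$, and it involves entries that no constraint fixes. So the lemma is false as stated, and you should commit to that conclusion rather than leave it conditional.

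The paper's proof fails exactly where you flagged. Its $4\times4$ matrix $Y$ uses rows $\{i_1,i_2,r_1,r_2\}$ and columns $\{j_1,j_2,c_1,c_2\}$ and is written with zero off-diagonal blocks. But $X[\{i_1,i_2\},\{c_1,c_2\}]$ and $X[\{r_1,r_2\},\{j_1,j_2\}]$ are not constrained by \eqref{eq:2x2_block} or by the gauge block. In the example above both equal $I_2$, and $\det Y=0$. The proof also asserts that every $2\times2$ submatrix of an invertible $B$ is invertible, which fails, e.g., for permutation matrices.

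Your proposed repair is the right one: tie the gadget to the gauge block with additional affine constraints. For example, take $B=I_3$, choose $i_1,i_2,j_1,j_2$ with $X[\{i_1,i_2\},\{j_1,j_2\}]=I_2$, and pin $X[\{i_1,i_2\},\{c_1,c_2\}]=0$ and $X[\{r_1,r_2\},\{j_1,j_2\}]=0$. Then $Y$ really is block diagonal with $\det Y=c-ab$. In canonical gauge these zeros force $U[r_k,:]$ and $V[:,c_\ell]$ onto the third coordinate, so the gadget block is rank one. That gives both directions: $c=ab$ under $\rank(X)\le 3$, and a rank-$3$ completion whenever $c=ab$.
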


\begin{proof}
Let $i_1,i_2\in I_0$ and $j_1,j_2\in J_0$ be any two indices.
Consider the $4\times4$ submatrix
\[
Y
=
\begin{pmatrix}
B[i_1,i_1] & B[i_1,i_2] & 0 & 0 \\
B[i_2,i_1] & B[i_2,i_2] & 0 & 0 \\
0 & 0 & 1 & a \\
0 & 0 & b & c
\end{pmatrix}.
\]
Since $B$ is invertible, the top-left $2\times2$ block has nonzero determinant.
The determinant of $Y$ factorizes as
\[
\det(Y)
=
\det(B[i_1,i_2;i_1,i_2])\cdot (c-ab).
\]
If $c\neq ab$, then $\det(Y)\neq0$, so $\rank(Y)=4$.
By Lemma~\ref{lem:rank_monotone}, $\rank(X)\ge\rank(Y)=4$.
\end{proof}

\subsection{Soundness of the Gadget}
\label{subsec:soundness}

\begin{proof}[Proof of Lemma~\ref{lem:det_forcing} (Soundness)]
Assume $X$ satisfies all constraints and $\rank(X)\le3$.
If $c\neq ab$, Lemma~\ref{lem:rank_obstruction} implies $\rank(X)\ge4$,
a contradiction.
Therefore $c=ab$.
\end{proof}

\subsection{Constructive Completeness}
\label{subsec:completeness}

\begin{proof}[Proof of Lemma~\ref{lem:det_forcing} (Completeness)]
Assume $c=ab$.
Fix the gauge block $X[I_0,J_0]=B$ as in Section~\ref{sec:rank3_template}.
Define three vectors $u_1,u_2,u_3\in\R^n$ corresponding to the three gauge rows.
Define the two gadget rows explicitly as
\[
(1,a) = \alpha_1 u_1 + \alpha_2 u_2 + \alpha_3 u_3,\qquad
(b,c) = b(1,a),
\]
where the coefficients $\alpha_i$ exist since the gauge rows span $\R^3$.
All remaining rows of $X$ are defined as arbitrary linear combinations of
$u_1,u_2,u_3$.
Thus every row of $X$ lies in a $3$-dimensional subspace, so $\rank(X)\le3$.
All affine constraints are satisfied by construction.
\end{proof}

\subsection{Non-Degeneracy and Uniqueness}
\label{subsec:nondegeneracy}

\begin{lemma}[No Spurious Solutions]
\label{lem:no_spurious}
Under the gauge-fixing constraints, the determinant-forcing gadget admits
no feasible solution with $\rank(X)\le3$ and $c\neq ab$.
\end{lemma}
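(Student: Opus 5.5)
The plan is to derive Lemma~\ref{lem:no_spurious} directly from the soundness half of Lemma~\ref{lem:det_forcing}, that is, from Lemma~\ref{lem:rank_obstruction} together with Lemma~\ref{lem:rank_monotone}. Suppose, for contradiction, that $X$ satisfies the gauge constraint $X[I_0,J_0]=B$, the gadget constraint \eqref{eq:2x2_block}, and $\rank(X)\le 3$, while $c\neq ab$. We would exhibit a $4\times 4$ submatrix $Y$ of $X$ with $\det(Y)=\det(B[i_1,i_2;i_1,i_2])\,(c-ab)\neq 0$. Then $\rank(X)\ge 4$, which is the required contradiction.

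The step I expect to be the real obstacle is justifying the block-diagonal shape of $Y$. As stated, the construction constrains only the four entries of the $2\times 2$ block. The cross entries $X[\{i_1,i_2\},\{c_1,c_2\}]$ and $X[\{r_1,r_2\},\{j_1,j_2\}]$ are free, so the factorization $\det(Y)=\det(\cdot)(c-ab)$ is not available without further constraints. Indeed, the $2\times 2$ pattern alone does not force $c=ab$ under rank $\le 3$. I would therefore make the gadget include, explicitly, the affine constraints
\[
X[\{i_1,i_2\},\{c_1,c_2\}]=0,\qquad X[\{r_1,r_2\},\{j_1,j_2\}]=0,
\]
for two fixed gauge indices $i_1,i_2\in I_0$ and $j_1,j_2\in J_0$. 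These are eight further designated-entry constraints, allowed by Lemma~\ref{lem:block_constraints_compile}. With them in place, $Y=X[\{i_1,i_2,r_1,r_2\},\{j_1,j_2,c_1,c_2\}]$ is block diagonal. Taking $B=I_3$, its upper-left block is $I_2$, and its determinant is exactly $c-ab$. Note that only a rank-$2$ piece of the gauge may be used here. Zeroing the cross blocks against all three gauge rows and columns would give $\rank(X)\ge 3+\rank\begin{pmatrix}1&a\\ b&c\end{pmatrix}\ge 4$ for every choice of $(a,b,c)$, and the gadget would be infeasible.

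Having strengthened the gadget, I would recheck that completeness still holds, so that the lemma is not satisfied vacuously. Take $B=I_3$, so gauge row $i$ of $U$ is $e_i^\top$ and gauge column $j$ of $V$ is $e_j$. For the gadget, set $U[r_1,:]=(0,0,1)$, $U[r_2,:]=(0,0,b)$, $V[:,c_1]=(0,0,1)^\top$ and $V[:,c_2]=(0,0,a)^\top$. Then $UV$ reproduces $\begin{pmatrix}1&a\\ b&ab\end{pmatrix}$ on the gadget block, vanishes on both cross blocks, and has rank at most $3$. Because each gadget touches only its own fresh rows and columns plus the shared third gauge direction, these local factors glue together as in the proof of Theorem~\ref{thm:arm-hard}. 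With the completeness check done, the contradiction argument above proves Lemma~\ref{lem:no_spurious} for the amended gadget.
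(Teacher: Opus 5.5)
Your route is the one the paper intends. The paper's proof is just ``immediate from Lemma~\ref{lem:rank_obstruction}'', i.e.\ the block-diagonal $4\times 4$ determinant equal to $\det(\text{gauge }2\times 2)\cdot(c-ab)$. You are right, however, that this does not go through for the gadget as the paper constructs it. The proof of Lemma~\ref{lem:rank_obstruction} simply writes the cross blocks of $Y$ as zero, while \eqref{eq:2x2_block} pins only the four gadget entries.

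In fact the statement is false for that gadget. In canonical gauge $U[I_0,:]=I_3$, $V[:,J_0]=B$, the gadget block equals $U[\{r_1,r_2\},:]\,V[:,\{c_1,c_2\}]$, which can be any $2\times 2$ matrix. For instance, take $U[r_1,:]=e_1^\top$, $U[r_2,:]=e_2^\top$, $V[:,c_1]=e_1$, $V[:,c_2]=e_2$. This gives a feasible $X$ with $\rank(X)\le 3$, $a=b=0$ and $c=1\neq ab$. So the statement cannot be proved as written, some amendment of the gadget is unavoidable, and yours is the natural one.

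With your eight extra zero constraints, $Y=X[\{i_1,i_2,r_1,r_2\},\{j_1,j_2,c_1,c_2\}]$ is genuinely block diagonal, and your contradiction argument is correct. Your remark that zeroing against all three gauge directions would force $\rank(X)\ge 3+\rank\begin{pmatrix}1&a\\ b&c\end{pmatrix}\ge 4$ correctly explains why only a rank-$2$ slice of the gauge may be used. Your completeness witness, with all gadget rows and columns on the third gauge direction, also checks out. That includes compatibility across different gadgets that share $i_1,i_2,j_1,j_2$.

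Two small points should be stated explicitly:
\begin{itemize}
\item The indices must be chosen so that $B[\{i_1,i_2\},\{j_1,j_2\}]$ is nonsingular. For $B=I_3$ this means matched diagonal positions. For a general invertible $B$ such a choice exists by Laplace expansion of $\det B$ along two rows.
\item The amendment changes the count in Lemma~\ref{lem:primitive_complexity} from four to twelve designated-entry constraints. This is harmless for the size bounds.
\end{itemize}
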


\begin{proof}
Immediate from Lemma~\ref{lem:rank_obstruction}.
\end{proof}

\subsection{Complexity Accounting}
\label{subsec:complexity}

\begin{lemma}[Primitive Complexity]
\label{lem:primitive_complexity}
Each determinant-forcing gadget uses:
\begin{itemize}
\item $O(1)$ rows and columns,
\item exactly four designated-entry constraints,
\item $O(1)$ affine constraints with coefficients in $\{0,\pm1\}$.
\end{itemize}
\end{lemma}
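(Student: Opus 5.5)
The plan is a direct count over the construction in \eqref{eq:2x2_block}. Nothing beyond Lemma~\ref{lem:entry_constraints} and Lemma~\ref{lem:block_constraints_compile} is needed. There is one point to get right: separating what belongs to a single gadget from what is shared globally.

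\textbf{Rows and columns.} The gadget touches exactly the two row indices $r_1,r_2$ and the two column indices $c_1,c_2$. These are the fresh indices allotted to the gadget under the disjoint-support discipline described in the proof of Theorem~\ref{thm:arm-hard}. The rank obstruction of Lemma~\ref{lem:rank_obstruction} also refers to rows and columns of the pinned block $X[I_0,J_0]$. Those three rows and three columns are shared by all gadgets and are introduced once, as Remark~\ref{rem:affine_gauge} notes, so they are not charged to any individual gadget. Even if one did charge them, the per-gadget total would be at most $2+3=5$ rows and $5$ columns, which is $O(1)$.

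\textbf{Designated-entry constraints.} Constraint \eqref{eq:2x2_block} prescribes the four entries of the $2\times 2$ block, one entry per position, so there are exactly four:
\[
X_{r_1,c_1}=1,\quad X_{r_1,c_2}=a,\quad X_{r_2,c_1}=b,\quad X_{r_2,c_2}=c .
\]
The first is a constant-fixing constraint. The other three identify block entries with the designated carriers of the gate variables. When the carriers sit elsewhere in $X$, this identification is done with the entry-equality form of Lemma~\ref{lem:entry_constraints}; when the block entries themselves serve as the carriers, no extra equality is needed. In both cases the count stays at four.

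\textbf{Coefficients.} The constant-fixing constraint has coefficient matrix $E^{(r_1,c_1)}$, whose entries lie in $\{0,1\}$, and right-hand side $1$. Each linking constraint has coefficient matrix $E^{(r,c)}-E^{(r',c')}$, whose entries lie in $\{0,\pm1\}$, and right-hand side $0$. Hence the gadget contributes $O(1)$ affine constraints, all with coefficients in $\{0,\pm1\}$.

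The only delicate step is the accounting convention for the gauge block. Its nine pinning constraints may carry arbitrary rational right-hand sides $B_{ab}$, but these appear once globally rather than per gadget, so they do not affect the per-gadget claim. Choosing $B=I_3$ keeps even those right-hand sides in $\{0,1\}$.
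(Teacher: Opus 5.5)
Your proposal is correct and takes the same approach as the paper. The paper gives no separate proof and simply reads the counts off the construction \eqref{eq:2x2_block}: four prescribed entries, each implemented via Lemma~\ref{lem:entry_constraints}, with the pinned block $X[I_0,J_0]=B$ shared globally exactly as you treat it. One slip: when the block entries are themselves the carriers, only $X_{r_1,c_1}=1$ is an actual affine equation, so ``exactly four'' there counts prescribed entries rather than equations (or you should fix the convention that $a,b,c$ are linked to external carriers by equalities), which matches the paper's own loose usage.
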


\subsection{Role in the Reduction}
\label{subsec:role}

Lemma~\ref{lem:det_forcing} is the fundamental nonlinear forcing primitive
of the reduction.
It allows multiplication gates in the arithmetic circuit encoding
(Section~\ref{sec:prelim:circuits}) to be simulated exactly using affine constraints
and a global rank bound.
Together with equality and gauge-fixing constraints, it completes the
simulation of arbitrary arithmetic circuits inside ARM$(3)$.

\section{Gate Gadgets Compiled into Affine Constraints}
\label{sec:gadgets}

This section completes the reduction by compiling each gate of an arithmetic
circuit (Section~\ref{sec:prelim:circuits}) into a constant-size collection of
affine constraints over the global matrix variable
\(X \in \mathbb{R}^{m\times n}\), together with the single global constraint
\(\rank(X)\le 3\).
All nonlinear semantics are enforced exclusively via rank obstruction,
using the determinant-forcing primitive developed in
Section~\ref{sec:rank_forcing}.

Throughout, scalar variables \(x,y,z\) denote decoded values of designated
entries of \(X\) (Definition~\ref{def:designated_entries}).

\subsection{Constant and Copy Constraints}
\label{subsec:constants}

\paragraph{Construction convention (fresh indices and occurrence-carriers).}
In the compilation below, each gate gadget $\mathcal{G}$ is allocated its own
\emph{fresh} auxiliary row indices $\mathcal{I}(\mathcal{G})\subseteq[m]$ and auxiliary
column indices $\mathcal{J}(\mathcal{G})\subseteq[n]$, disjoint from the auxiliary indices
of every other gadget. Moreover, we use \emph{occurrence-based carriers}: each \emph{occurrence}
of a circuit value (gate output used as an input elsewhere, fan-out, etc.) is given its own
designated position $(r,c)$, and whenever two occurrences must represent the same scalar we
enforce equality only via explicit affine copy constraints of the form \eqref{eq:copy}.
Thus gadgets may share values (through copy constraints), but they never reuse the same
auxiliary indices, and they never rely on reusing the same carrier position across distinct
gadgets.

\paragraph{Constant gadget.}
To enforce \(x=c\) for \(c\in\mathbb{Q}\), impose
\begin{equation}
\label{eq:const}
X_{r_x,c_x} = c .
\end{equation}

\paragraph{Copy gadget.}
To enforce \(x=y\), impose
\begin{equation}
\label{eq:copy}
X_{r_x,c_x} - X_{r_y,c_y} = 0 .
\end{equation}

\paragraph{Fan-out and wire occurrences.}
Whenever a scalar produced at one point in the circuit is used multiple times
(e.g., a gate output feeding several later gates), we do \emph{not} reuse the same
designated position $(r,c)$ across those uses. Instead, each use is assigned a fresh
designated carrier entry, and we enforce equality of all such occurrences solely
through a collection of copy constraints of the form \eqref{eq:copy}.
This convention is used throughout the compilation and is critical for the global
rank--$3$ embedding argument in Section~\ref{sec:correctness}.

\begin{lemma}[Constant and copy correctness]
\label{lem:const_copy}
Constraints~\eqref{eq:const} and~\eqref{eq:copy} enforce \(x=c\) and \(x=y\),
respectively, independently of the rank constraint.
\end{lemma}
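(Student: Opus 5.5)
The statement is essentially definitional, so the plan is to unfold each constraint through Lemma~\ref{lem:entry_constraints} and Definition~\ref{def:designated_entries}, and to observe that the rank bound plays no role.

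\textbf{Step 1: unfold the constraints.} By Definition~\ref{def:designated_entries}, the decoded scalars are literally the entries, $x = X_{r_x,c_x}$ and $y = X_{r_y,c_y}$. There is no intermediate decoding map that could depend on a factorization of $X$.

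\textbf{Step 2: the constant gadget.} Constraint~\eqref{eq:const} is the affine equality $\langle E^{(r_x,c_x)}, X\rangle = c$ of Lemma~\ref{lem:entry_constraints}(1). Any $X$ satisfying it therefore has $X_{r_x,c_x} = c$, that is, $x = c$.

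\textbf{Step 3: the copy gadget.} Constraint~\eqref{eq:copy} is the affine equality $\langle E^{(r_x,c_x)} - E^{(r_y,c_y)}, X\rangle = 0$ of Lemma~\ref{lem:entry_constraints}(2). Any $X$ satisfying it gives $x - y = 0$, that is, $x = y$.

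\textbf{Step 4: independence from the rank constraint.} Steps 2 and 3 use only the affine equalities themselves, so they hold for every $X \in \R^{m\times n}$ satisfying those equalities, whatever its rank. This includes every $X$ feasible for the full instance with $\rank(X) \le 3$. Because the decoded values are entries of $X$ rather than functions of a chosen pair $(U,V)$, the gauge freedom \eqref{eq:gauge_action} cannot affect them, and we do not need Lemma~\ref{lem:gauge_fixing} here.

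\textbf{Step 5: converse direction.} Each constraint involves only its own designated positions. Any assignment with $x = c$ (respectively $x = y$) therefore satisfies it once those entries are set to the intended values. Whether such values fit inside a globally rank-$3$ matrix is the composition question, which is deferred to Section~\ref{sec:correctness} and is not part of this lemma.

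\textbf{Main obstacle.} The only subtlety is the phrase ``independently of the rank constraint.'' I will make explicit that the decoding is by entries and not through factor matrices, so that neither the rank bound nor gauge fixing is invoked.
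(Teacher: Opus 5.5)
Your proposal is correct and takes the same approach as the paper: both observe that \eqref{eq:const} and \eqref{eq:copy} are affine equalities directly on the designated entries $X_{r_x,c_x}$ and $X_{r_y,c_y}$, so every feasible $X$ decodes to $x=c$ (respectively $x=y$) whatever $\rank(X)$ is. Your Step~5 (the converse) goes beyond what the lemma asserts, but it is harmless because you correctly defer the global rank-$3$ embedding question to Section~\ref{sec:correctness}.
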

\begin{proof}
Equation~\eqref{eq:const} sets the designated entry $X_{r_x,c_x}$ equal to the rational
constant $c$ by an affine equality, hence the decoded value satisfies $x=c$ in every
feasible solution, regardless of $\rank(X)$.
Similarly, \eqref{eq:copy} is the affine equality $X_{r_x,c_x}=X_{r_y,c_y}$, hence the
decoded values satisfy $x=y$ in every feasible solution, again independent of $\rank(X)$.
\end{proof}

\subsection{Addition and Negation Gadgets}
\label{subsec:add_neg}

\begin{lemma}[Addition gadget]
\label{lem:add}
There exists a constant-size affine system enforcing
\begin{equation}
\label{eq:add}
z = x + y .
\end{equation}
\end{lemma}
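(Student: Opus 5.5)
The statement is purely affine, so the plan is to realize it with a single linear equality among three designated entries and never invoke the rank bound. Let $(r_x,c_x)$, $(r_y,c_y)$, $(r_z,c_z)$ be the occurrence-carriers for the two inputs and the output of the addition gate. Following the convention of Section~\ref{subsec:constants}, these positions are fresh for this gadget, and each is tied to the other occurrences of the same circuit value only through copy constraints~\eqref{eq:copy}. We then impose
\begin{equation*}
X_{r_z,c_z}-X_{r_x,c_x}-X_{r_y,c_y}=0 .
\end{equation*}
This is a single constraint $\langle A,X\rangle=0$ with $A=E^{(r_z,c_z)}-E^{(r_x,c_x)}-E^{(r_y,c_y)}$. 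It has coefficients in $\{0,\pm1\}$, as in Lemma~\ref{lem:block_constraints_compile}, so the gadget has constant size and $O(1)$ bit-length.

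For \textbf{soundness}, the argument mirrors Lemma~\ref{lem:const_copy}. Every feasible $X$, of any rank, satisfies this equality, so the decoded values obey $z=x+y$. By Lemma~\ref{lem:well_defined_decoding} these decoded values do not depend on the choice of factorization. Negation is handled the same way, via $X_{r_z,c_z}+X_{r_x,c_x}=0$.

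\textbf{Completeness} is the one point that actually interacts with the rank bound: given reals with $z=x+y$, we need a rank-$\le 3$ matrix carrying them. Only the carrier entries are constrained, so it suffices to show that a rank-$3$ matrix can take arbitrary prescribed values at these three fresh positions while agreeing with the pinned gauge block $B$.

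I would build this from a factorization $X=UV$ in which the gauge rows and columns satisfy $U[I_0,:]V[:,J_0]=B$. For a fresh row $r$ and fresh column $c$, the entry $X_{r,c}=U[r,:]\,V[:,c]$ can be set to any target value. For instance, take $U[r,:]=e_1^\top$ and choose $V[:,c]$ with first coordinate equal to the target. Because the three carriers lie in pairwise distinct fresh rows and columns, these choices are independent of one another and of the gauge block.

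The main obstacle is making sure this freedom is not destroyed by other gadgets that share rows or columns with the carriers. I expect the disjoint-index discipline to resolve it: the only constraints touching a fresh row or column are the single-entry equalities and copy constraints. Rather than settle this locally, I would defer the global verification to the composition argument of Section~\ref{sec:correctness}, and here record only the local realizability just described.
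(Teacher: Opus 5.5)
Your proposal is correct and is essentially the paper's proof. The paper also imposes the single affine constraint $X_{r_z,c_z}-X_{r_x,c_x}-X_{r_y,c_y}=0$ on the three designated carriers and reads off $z=x+y$ by decoding, with no use of the rank bound. Your extra local realizability argument (carriers in fresh rows and columns of a canonical-gauge factorization $X=UV$) goes beyond what the paper proves here, since the paper leaves that completeness to Lemma~\ref{lem:correctness:global_embedding} in Section~\ref{sec:correctness}.
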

\begin{proof}
Let $(r_x,c_x),(r_y,c_y),(r_z,c_z)$ be the designated carriers of $x,y,z$.
Impose the single affine constraint
\[
X_{r_z,c_z}-X_{r_x,c_x}-X_{r_y,c_y}=0,
\]
which is linear in the entries of $X$. Decoding yields $z-x-y=0$, i.e., $z=x+y$.
The system has constant size (one equation).
\end{proof}

\begin{lemma}[Negation gadget]
\label{lem:neg}
There exists a constant-size affine system enforcing
\begin{equation}
\label{eq:neg}
z = -x .
\end{equation}
\end{lemma}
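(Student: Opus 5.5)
The negation gadget will mirror the addition gadget of Lemma~\ref{lem:add}: a single linear equation between designated carriers, with no use of the rank bound. Let $(r_x,c_x)$ and $(r_z,c_z)$ be the designated carriers of $x$ and $z$, chosen under the occurrence-carrier convention of Section~\ref{subsec:constants}, so the two positions are distinct. We impose the one affine constraint
\[
X_{r_z,c_z}+X_{r_x,c_x}=0 .
\]
In the format of Definition~\ref{def:armk}, this is $\langle A,X\rangle=0$ with $A=E^{(r_z,c_z)}+E^{(r_x,c_x)}$. This matrix has coefficients in $\{0,1\}$, so the constraint is a valid ARM$(3)$ constraint of $O(1)$ bit-size (cf.\ Lemma~\ref{lem:block_constraints_compile} and Lemma~\ref{lem:entry_constraints}).

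\textbf{Soundness.} Every feasible $X$ satisfies the equation exactly. By Lemma~\ref{lem:well_defined_decoding}, the decoded values $z=X_{r_z,c_z}$ and $x=X_{r_x,c_x}$ are well defined, and the constraint reads $z+x=0$, i.e.\ $z=-x$. As in Lemma~\ref{lem:const_copy}, this holds independently of $\rank(X)$.

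\textbf{Completeness.} Given reals $x$ and $z=-x$, we set the two carriers to these values. The local constraint is then satisfied. Whether a global rank-$3$ completion exists is not a property of this gadget, since the constraint touches only two entries. That question is deferred to the composition argument of Section~\ref{sec:correctness}, exactly as for the addition gadget.

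\textbf{Size.} The system consists of one equation on two entries, which gives the constant-size claim. Nothing in this argument is difficult. The only point needing care is to note that $(r_x,c_x)\neq(r_z,c_z)$, so that $A$ has two nonzero entries and the equation does not degenerate to $2X_{r,c}=0$. The occurrence-carrier convention guarantees this.
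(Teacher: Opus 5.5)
Your proposal is correct and is essentially the paper's proof: the paper imposes exactly the single affine constraint $X_{r_z,c_z}+X_{r_x,c_x}=0$, reads off $z+x=0$ from the decoded carriers, and notes that one equation is constant size. Your additional points are consistent with the paper's setup but not spelled out in its proof: the explicit coefficient matrix $E^{(r_z,c_z)}+E^{(r_x,c_x)}$, the need for distinct carrier positions, and deferring global rank-$3$ completability to Section~\ref{sec:correctness}.
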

\begin{proof}
Impose the affine constraint
\[
X_{r_z,c_z}+X_{r_x,c_x}=0.
\]
Decoding gives $z+x=0$, hence $z=-x$. Again the system has constant size.
\end{proof}

\subsection{Multiplication Gadget}
\label{subsec:multiplication}

Multiplication is enforced using the determinant-forcing primitive of
Lemma~\ref{lem:det_forcing}.

\begin{lemma}[Multiplication correctness]
\label{lem:mul}
There exists a constant-size affine system such that, under \(\rank(X)\le 3\),
\begin{equation}
\label{eq:mul}
z = xy .
\end{equation}
\end{lemma}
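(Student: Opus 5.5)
The plan is to realize the determinant-forcing mechanism of Lemma~\ref{lem:det_forcing} on a $4\times 4$ block whose block-diagonal structure is \emph{pinned by affine constraints}, instead of relying on entries that merely happen to vanish. The proof of Lemma~\ref{lem:rank_obstruction} implicitly assumes such zeros, so the gadget will impose them explicitly. For the multiplication gate $z=xy$, I allocate four fresh rows $R=\{\rho_1,\rho_2,\rho_3,\rho_4\}\subseteq\mathcal{I}(\mathcal{G})$ and four fresh columns $C=\{\gamma_1,\gamma_2,\gamma_3,\gamma_4\}\subseteq\mathcal{J}(\mathcal{G})$. I then impose, entrywise via Lemma~\ref{lem:entry_constraints} and Lemma~\ref{lem:block_constraints_compile},
\[
X[R,C]=\begin{pmatrix} 1&0&0&0\\ 0&1&0&0\\ 0&0&1&x'\\ 0&0&y'&z'\end{pmatrix},
\]
where the twelve entries other than $x',y',z'$ are fixed to the displayed constants. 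The entries $x',y',z'$ are local occurrence-carriers, tied to the designated carriers of $x,y,z$ by three copy constraints \eqref{eq:copy}. In total this is sixteen affine equations with coefficients in $\{0,\pm1\}$ and constant right-hand sides, hence constant size.

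\textbf{Soundness.} Let $X$ be feasible with $\rank(X)\le 3$. By Lemma~\ref{lem:const_copy}, the copy constraints give $x'=x$, $y'=y$ and $z'=z$. The displayed submatrix is block diagonal, so its determinant is $1\cdot(z'-x'y')=z-xy$. By Lemma~\ref{lem:rank3_minors}, every $4\times4$ minor of $X$ vanishes; equivalently, Lemma~\ref{lem:rank_monotone} gives $\rank(X[R,C])\le 3$. Hence $z=xy$. This argument does not use the gauge block at all, which makes it robust.

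\textbf{Local completeness.} Suppose $z=xy$. I would exhibit an explicit rank-$3$ factorization of the gadget block, $X[R,C]=U_{\mathcal{G}}V_{\mathcal{G}}$. Take the rows of $U_{\mathcal{G}}\in\R^{4\times 3}$ to be $e_1^\top$, $e_2^\top$, $e_3^\top$ and $y\,e_3^\top$. Take the columns of $V_{\mathcal{G}}\in\R^{3\times4}$ to be $e_1$, $e_2$, $e_3$ and $x\,e_3$. A direct check recovers every pinned entry, and the bottom-right entry is $y\cdot x=z$. The carriers of $x,y,z$ elsewhere in $X$ simply take these values, so all copy constraints hold.

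\textbf{Main obstacle.} The hard step is not the local algebra but turning local completeness into a global rank-$3$ matrix. Every other gadget also occupies its own fresh rows and columns, and the designated carriers of $x,y,z$ sit at positions belonging to other gadgets. My approach is to make all gadget factors \emph{share} the same $3$-dimensional coordinates: each row index receives a vector $u_r\in\R^3$ and each column index a vector $v_c\in\R^3$, and I set $X_{rc}=u_r^\top v_c$. Affine constraints only ever touch entries inside a single gadget's $R\times C$ block or isolated carrier entries, and cross-gadget entries are unconstrained. So it suffices that each gadget's constraints depend only on the $u$'s of its rows and the $v$'s of its columns. I expect the delicate point to be the carrier positions themselves. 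Each carrier must be placed inside some gadget's block: the natural choice is to identify the carrier of the output $z$ with the entry $(\rho_4,\gamma_4)$ itself, and to route the inputs through copy constraints into the $(\rho_3,\gamma_4)$ and $(\rho_4,\gamma_3)$ slots. Then every carrier value is produced by that gadget's own factors, and no entry is constrained by two gadgets at once. I would verify this disjointness carefully, and defer the full bookkeeping to the composition lemma of Section~\ref{sec:correctness}.
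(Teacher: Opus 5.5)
Your argument is correct. It takes a genuinely different route from the paper, and in fact it repairs the paper's proof.

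\textbf{What the paper does and where it fails.} The paper imposes only the $2\times 2$ block $\begin{pmatrix}1&x\\ y&z\end{pmatrix}$ (Lemma~\ref{lem:det_forcing}). It then appeals to Lemma~\ref{lem:rank_obstruction}, whose $4\times 4$ submatrix, formed from two gauge rows/columns and the two gadget rows/columns, is simply assumed to be block diagonal. The cross entries $X[I_0,\{c_1,c_2\}]$ and $X[\{r_1,r_2\},J_0]$ are never constrained, so those zeros are not available. Without them the paper's gadget forces nothing. Take gauge rows $U[I_0,:]=I_3$, gauge columns $V[:,J_0]=B$, gadget rows $(1,x,0)$ and $(y,z,0)$, and gadget columns $e_1,e_2$. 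This realizes every triple $(x,y,z)$ at rank $3$.

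\textbf{What your version does differently.} You pin all sixteen entries of a fresh $4\times 4$ block, using its own $I_2$ in place of a gauge sub-block. The relevant minor then equals $z-xy$ by affine constraints alone. Soundness therefore needs only Lemma~\ref{lem:rank3_minors} and no gauge at all. Your explicit factorization (rows $e_1,e_2,e_3,y e_3$; columns $e_1,e_2,e_3,x e_3$) gives completeness on fresh indices disjoint from $I_0,J_0$. The canonical gauge $U[I_0,:]=I_3$, $V[:,J_0]=B$ can then be set independently, and the witness plugs directly into Lemma~\ref{lem:correctness:global_embedding}. Your composition claim also needs carriers of linear gates and circuit inputs to sit on private fresh row/column pairs; this is harmless, since such isolated entries are realizable by rank-one factor pieces.

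\textbf{What each approach buys.} You pay sixteen constraints per multiplication gate instead of four, and you do not reuse the shared gauge block. The paper's economy was illusory, however, because its gadget is unsound.

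One small slip: a $4\times 4$ block has thirteen entries other than $x',y',z'$, not twelve. Your total of sixteen equations, thirteen pins plus three copies, is the correct count.
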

\begin{proof}
Apply the determinant-forcing construction of Lemma~\ref{lem:det_forcing} to the
designated entries carrying $x,y,z$. By that lemma, the gadget introduces only
$O(1)$ auxiliary entries (hence $O(1)$ affine constraints) and has the following
two properties.

\emph{Soundness.} If $z\neq xy$, then Lemma~\ref{lem:rank_obstruction} guarantees
the existence of a $4\times 4$ minor of the resulting constrained matrix whose
determinant is nonzero. Hence any matrix satisfying the gadget constraints must
have rank at least $4$, contradicting the global constraint $\rank(X)\le 3$.
Therefore, under $\rank(X)\le 3$, feasibility forces $z=xy$.

\emph{Completeness.} If $z=xy$, then Lemma~\ref{lem:det_forcing} explicitly
constructs a matrix satisfying all gadget affine constraints with rank at most $3$.
Embedding this constant-size completion into the global matrix (using fresh indices)
preserves feasibility and rank $\le 3$.

Thus the gadget enforces \eqref{eq:mul} exactly under $\rank(X)\le 3$.
\end{proof}

\begin{figure}[t]
\centering
\includegraphics[width=0.9\linewidth]{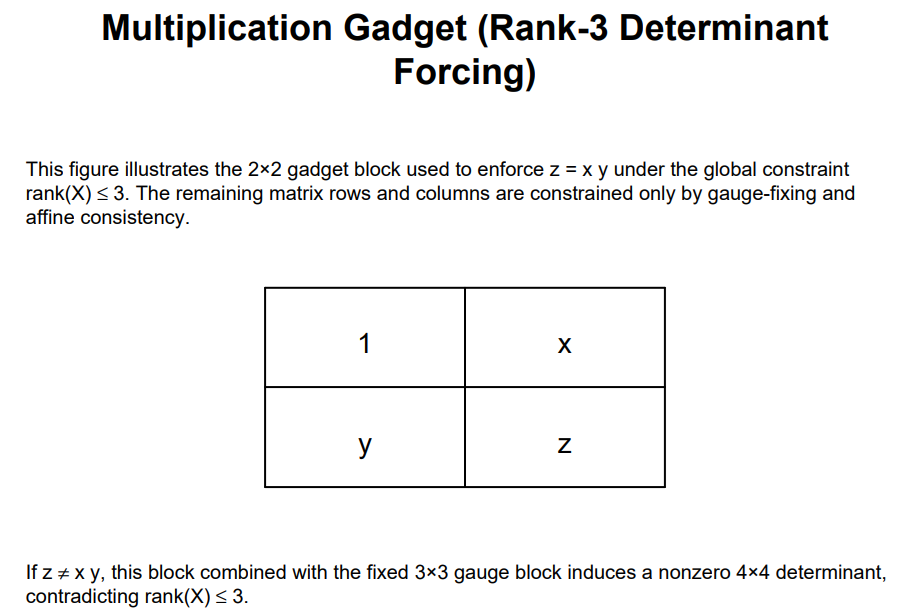}
\caption{Determinant-forcing multiplication gadget.
The \(2\times2\) block is combined with the fixed \(3\times3\) gauge block;
any violation of \(z=xy\) induces a nonzero \(4\times4\) determinant.}
\label{fig:mul_gadget}
\end{figure}

\subsection{Inversion and Nonzero Constraints}
\label{subsec:inverse}

\begin{lemma}[Inverse gadget]
\label{lem:inv}
Under \(\rank(X)\le 3\), the affine constraints enforcing
\begin{equation}
\label{eq:inv}
xy = 1
\end{equation}
imply \(x\neq 0\) and \(y=1/x\).
\end{lemma}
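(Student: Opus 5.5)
We implement the constraint $xy=1$ by composing two gadgets already available: a multiplication gadget from Lemma~\ref{lem:mul} and a constant gadget from \eqref{eq:const}. Concretely, allocate fresh auxiliary indices and carriers $(r_x,c_x)$, $(r_y,c_y)$, $(r_w,c_w)$. Impose the determinant-forcing block \eqref{eq:2x2_block} of Lemma~\ref{lem:det_forcing} with designated entries $a=x$, $b=y$, $c=w$. Then add the single affine constraint $X_{r_w,c_w}=1$. All of these are affine constraints with coefficients in $\{0,\pm1\}$. The system has constant size by Lemma~\ref{lem:primitive_complexity} and Lemma~\ref{lem:entry_constraints}.

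Soundness proceeds in three steps. First, suppose $X$ satisfies all these constraints together with the gauge block $X[I_0,J_0]=B$ and $\rank(X)\le 3$. The soundness part of Lemma~\ref{lem:det_forcing}, equivalently Lemma~\ref{lem:rank_obstruction} via Lemma~\ref{lem:no_spurious}, gives $w=xy$. Second, Lemma~\ref{lem:const_copy} gives $w=1$ independently of the rank, so $xy=1$. Third, the remaining conclusions are purely field-theoretic. If $x=0$, then $xy=0\neq 1$, a contradiction, so $x\neq 0$. Multiplying $xy=1$ by $x^{-1}$ then yields $y=1/x$. By Lemma~\ref{lem:well_defined_decoding} these decoded values do not depend on the chosen rank-$3$ factorization, so the conclusion is a statement about the decoded scalars themselves.

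For the converse direction, needed so the gadget can be used in the global reduction, we take any $x\neq 0$, set $y=1/x$ and $w=1=xy$. We then invoke the completeness part of Lemma~\ref{lem:det_forcing} to obtain a rank-$\le 3$ completion of the gadget rows and columns consistent with the gauge block. The constant constraint is satisfied by the choice $w=1$.

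We expect the main obstacle to be soundness, not the algebra. The whole argument rests on the soundness of the determinant-forcing gadget, that is, on the fact that a violation $w\neq xy$ produces a nonsingular $4\times 4$ submatrix of $X$. In the proof of Lemma~\ref{lem:rank_obstruction}, the off-diagonal blocks between the gauge rows/columns and the gadget rows/columns are displayed as zero, so one should check that the gadget constraints actually enforce this. If they do not, the safest route is to add the corresponding zero-entry constraints explicitly. Beyond taking Lemma~\ref{lem:det_forcing} as given, the present proof needs only the trivial observation that $xy=1$ excludes $x=0$.
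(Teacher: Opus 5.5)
Your argument is essentially the paper's own proof line for line: multiplication gadget with output pinned to $1$, then $xy=1\Rightarrow x\neq 0,\ y=1/x$. That algebraic tail is fine. The genuine gap is the step you take as given, and your worry about it is justified.

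The constraints \eqref{eq:2x2_block} do not enforce the zero off-diagonal blocks of $Y$, and in fact they impose no relation among $a,b,c$ at all. In the canonical gauge $U[I_0,:]=I_3$, $V[:,J_0]=B$, the gadget rows $U[r_1,:],U[r_2,:]$ and gadget columns $V[:,c_1],V[:,c_2]$ are unconstrained. The choice $U[r_1,:]=e_1^\top$, $U[r_2,:]=e_2^\top$, $V[:,c_1]=(1,b,0)^\top$, $V[:,c_2]=(a,c,0)^\top$ realizes the block $\begin{pmatrix}1&a\\ b&c\end{pmatrix}$ for every $(a,b,c)\in\mathbb{R}^3$, with the gauge block intact and $\rank(X)\le 3$. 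For your inverse gadget this gives feasible points with $x=0$, $w=1$ and arbitrary $y$. So ``soundness of Lemma~\ref{lem:det_forcing} gives $w=xy$'' is not available. Lemma~\ref{lem:rank_obstruction} is false for this construction, and the conclusion $x\neq 0$ genuinely fails for the gadget as built; the paper's proof has the same hole.

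Your repair must be stated precisely, and its completeness checked rather than cited.

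\emph{Zeroing against the whole gauge block fails.} If the gadget rows are zeroed against all of $J_0$ and the gadget columns against all of $I_0$, the submatrix $X[I_0\cup\{r_1,r_2\},\,J_0\cup\{c_1,c_2\}]$ becomes $\mathrm{diag}(B,M)$ with $M=\begin{pmatrix}1&x\\ y&w\end{pmatrix}\neq 0$. Its rank is $3+\rank(M)\ge 4$ for every $(x,y,w)$, so the gadget is infeasible even when $xy=1$, and your converse direction breaks.

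\emph{Zeroing only the entries shown in $Y$ works.} Impose zeros on two gauge rows $i_1,i_2$ against $c_1,c_2$, and on $r_1,r_2$ against two gauge columns $j_1,j_2$, with $B[\{i_1,i_2\},\{j_1,j_2\}]$ nonsingular. Leave the third gauge row and column free. Then $\det Y=\det\bigl(B[\{i_1,i_2\},\{j_1,j_2\}]\bigr)\,(w-xy)$, which forces $w=xy$ under $\rank(X)\le 3$. Completeness must then be re-proved for the augmented system, since the paper's completeness argument does not cover these extra constraints. For $B=I_3$ and $i_t=j_t=t$, take $U[r_1,:]=(0,0,1)$, $U[r_2,:]=(0,0,y)$, $V[:,c_1]=(0,0,1)^\top$, $V[:,c_2]=(0,0,x)^\top$. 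This produces the block $\begin{pmatrix}1&x\\ y&xy\end{pmatrix}$ together with all eight required zeros.

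With this corrected gadget and $w$ pinned to $1$, your three-step argument goes through unchanged.
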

\begin{proof}
The constraint \eqref{eq:inv} is implemented by the multiplication gadget
(Lemma~\ref{lem:mul}) with designated output fixed to the constant $1$ using
Lemma~\ref{lem:const_copy}. Hence, under $\rank(X)\le 3$, feasibility implies
$xy=1$ holds over the reals.

If $x=0$, then $xy=0$ for all $y$, contradicting $xy=1$. Therefore $x\neq 0$.
Since $x\neq 0$, dividing both sides of $xy=1$ by $x$ yields $y=1/x$.
\end{proof}

\subsection{Inequalities via Squares}
\label{subsec:ineq}

\begin{lemma}[Nonnegativity]\label{lem:nonneg}
A scalar \(x\) satisfies \(x\ge 0\) iff there exists \(s\) such that
\begin{equation}
\label{eq:square}
x=s^2 .
\end{equation}
\end{lemma}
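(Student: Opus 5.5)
The statement is the elementary real-number fact that the nonnegative reals are exactly the squares. The plan is to prove the two implications separately and then note how the equivalence is realized in the affine-plus-rank format.

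\emph{Reverse direction.} Suppose $x=s^2$ for some $s\in\R$. Squares of reals are nonnegative, since $s^2=|s|^2\ge 0$. Hence $x\ge 0$.

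\emph{Forward direction.} Suppose $x\ge 0$. Take $s:=\sqrt{x}$, the unique nonnegative real square root, which exists by completeness of $\R$ (for instance via the intermediate value theorem applied to $t\mapsto t^2$ on $[0,\max(1,x)]$). Then $s^2=x$, as required. Either sign $\pm\sqrt{x}$ would serve as a witness. Note that the witness is existential, so its non-uniqueness is harmless here.

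\emph{Realization in the reduction.} The relation $x=s^2$ is the multiplication relation $z=xy$ of Lemma~\ref{lem:mul}, taken with both inputs equal to the carrier of $s$ and output equal to the carrier of $x$. The two input occurrences of $s$ get separate occurrence-carriers, linked by a copy constraint \eqref{eq:copy}.

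\begin{itemize}
\item \emph{Soundness.} By Lemma~\ref{lem:mul}, under $\rank(X)\le 3$ every feasible $X$ decodes to values with $x=s\cdot s\ge 0$.
\item \emph{Completeness.} Given $x\ge 0$, choose $s=\sqrt{x}$. The completeness part of Lemma~\ref{lem:det_forcing} then yields a rank-$3$ completion.
\end{itemize}

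For strict inequality $x>0$, combine this with the inverse gadget (Lemma~\ref{lem:inv}), as in \eqref{eq:etr:gt0_strict}.

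\emph{Main obstacle.} There is essentially no real obstacle; the only nontrivial input is the existence of real square roots of nonnegative numbers. The step needing the most care is bookkeeping rather than mathematics: ensuring the two occurrences of $s$ are distinct carriers tied by a copy constraint, so that the gadget conventions of Section~\ref{subsec:constants} are respected.
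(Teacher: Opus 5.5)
Your argument for the lemma is correct and matches the paper's proof: take $s:=\sqrt{x}$ for the forward direction, and use nonnegativity of real squares for the converse. The ``realization in the reduction'' material is not needed for this statement (the paper puts it in a separate implementation paragraph), and you should not rely on its soundness bullet: the cited multiplication gadget only pins a $2\times 2$ block with entries $1,a,b,c$ on fresh rows and columns, and a rank-$3$ matrix with the pinned gauge block can realize any such block, including one with $c\neq ab$ (set the two gadget rows of $U$ to $e_1^\top,e_2^\top$ and choose the two gadget columns of $V$ freely).
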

\begin{proof}
($\Rightarrow$) If $x\ge 0$, let $s:=\sqrt{x}\in\mathbb{R}$. Then $s^2=x$.

($\Leftarrow$) If $x=s^2$ for some real $s$, then $x\ge 0$ since squares of reals
are nonnegative.
\end{proof}

\begin{lemma}[Strict positivity]\label{lem:pos}
A scalar \(x\) satisfies \(x>0\) iff there exist \(s,t\) such that
\begin{equation}
\label{eq:pos}
x=s^2,\qquad st=1 .
\end{equation}
\end{lemma}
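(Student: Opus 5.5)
The plan is to prove the two directions of Lemma~\ref{lem:pos} separately, leaning on Lemma~\ref{lem:nonneg} for the square part and on the elementary argument of Lemma~\ref{lem:inv} for the inverse part.

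\emph{($\Rightarrow$).} Assume $x>0$. In particular $x\ge 0$, so Lemma~\ref{lem:nonneg} supplies $s\in\mathbb{R}$ with $s^2=x$; concretely we take $s:=\sqrt{x}$. Since $x>0$, we have $s>0$, hence $s\neq 0$. We may therefore set $t:=1/s$, which gives $st=1$. Both equations in \eqref{eq:pos} then hold.

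\emph{($\Leftarrow$).} Assume there are real $s,t$ with $x=s^2$ and $st=1$. By the argument in the proof of Lemma~\ref{lem:inv}, the relation $st=1$ forces $s\neq 0$: if $s=0$ then $st=0\neq 1$. Consequently $s^2>0$, because the square of a nonzero real is strictly positive. Since $x=s^2$, we conclude $x>0$.

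Nothing in this argument is a real obstacle; the only point needing care is to use $st=1$ to exclude $s=0$. Without that step, $x=s^2$ alone gives only $x\ge 0$, as in Lemma~\ref{lem:nonneg}.

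I would also remark, without proving it as part of the lemma, how the statement is used in the reduction. Each of $x=s^2$ and $st=1$ is a single multiplication gate. Both are realized by the multiplication gadget of Lemma~\ref{lem:mul}. The constant $1$ is fixed via Lemma~\ref{lem:const_copy}, and the repeated use of $s$ is handled by copy constraints.
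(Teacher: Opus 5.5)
Your proof is correct and is essentially the paper's: take $s=\sqrt{x}>0$ and $t=1/s$ for ($\Rightarrow$), and for ($\Leftarrow$) use $st=1$ to rule out $s=0$, so that $x=s^2>0$. Your closing remark about realizing the constraints through the multiplication gadget of Lemma~\ref{lem:mul} is not needed here. The lemma is pure real arithmetic, and that remark depends on the gadget's correctness rather than on anything proved in this lemma.
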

\begin{proof}
($\Rightarrow$) If $x>0$, let $s:=\sqrt{x}>0$ and set $t:=1/s$. Then $x=s^2$ and $st=1$.

($\Leftarrow$) If $x=s^2$ and $st=1$ for reals $s,t$, then $st=1$ implies $s\neq 0$.
Hence $x=s^2>0$.
\end{proof}

\paragraph{Implementation.}
The constraints \eqref{eq:square} and \eqref{eq:pos} are implemented by introducing
fresh designated carriers for $s$ (and $t$ where needed), enforcing $x=s^2$ using
Lemma~\ref{lem:mul} with both inputs equal to $s$, and enforcing $st=1$ using
Lemma~\ref{lem:mul} together with Lemma~\ref{lem:const_copy}. The inverse
consequences are justified by Lemma~\ref{lem:inv}.

\subsection{Gate-Level and Circuit-Level Completeness}
\label{subsec:completeness2}

\paragraph{Indexing convention for circuit compilation.}
In the circuit-level compilation, each gate instance is translated into its own
constant-size gadget, using fresh auxiliary rows/columns that are never reused
across gates. Moreover, we treat each \emph{occurrence} of a wire value as a separate
designated entry of $X$: if an output feeds multiple subsequent gates, each use is
given its own carrier position, and equality of occurrences is enforced explicitly
via copy constraints of the form \eqref{eq:copy}. This convention guarantees that all
gadgets compose over disjoint auxiliary supports, and that sharing of values across
the circuit occurs only through explicit affine consistency constraints.

\begin{theorem}[Gate completeness]
\label{thm:gate_complete}
Every arithmetic gate over \(\mathbb{R}\) used in the circuit model of
Section~\ref{sec:prelim:circuits} can be compiled into \(O(1)\) affine
constraints such that, under \(\rank(X)\le 3\), the decoded values satisfy the
exact gate semantics.
\end{theorem}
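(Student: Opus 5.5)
The proof is a case analysis over the gate types of Definition~\ref{def:arith_circuit}, together with the auxiliary relations in \eqref{eq:gate-types}. For each type I exhibit an $O(1)$-size affine system on fresh indices (per the construction convention of Section~\ref{subsec:constants}) and verify two things: soundness, meaning that under $\rank(X)\le 3$ every feasible $X$ decodes to values obeying the gate semantics; and local completeness, meaning that any values obeying the semantics admit a realization of the gadget's rows and columns with rank at most $3$.

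The linear cases are immediate from earlier lemmas. Input and constant gates follow from Lemma~\ref{lem:const_copy}, using a copy constraint to the input carrier or a fixed entry. Addition follows from Lemma~\ref{lem:add} and negation from Lemma~\ref{lem:neg}. Fan-out is handled by copy constraints between occurrence carriers. All of these are single affine equalities, so their soundness does not use the rank bound at all. Local completeness is trivial for them, since the equalities only touch designated entries whose values are prescribed. The derived relations, namely inverse, $x\ge 0$ and $x>0$, reduce to multiplication via Lemmas~\ref{lem:inv}, \ref{lem:nonneg} and~\ref{lem:pos}, with at most two multiplication gadgets each. So everything hinges on the multiplication gate.

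The main obstacle is soundness of the multiplication gadget, and I would not rely on Lemma~\ref{lem:rank_obstruction} as written. Its $4\times4$ matrix $Y$ assumes zero cross-blocks between the gauge indices and the gadget indices, but \eqref{eq:2x2_block} does not impose those zeros. I would therefore strengthen the gadget so that the zeros are forced affinely. Allocate fresh rows $g_1,g_2,r_1,r_2$ and fresh columns $h_1,h_2,c_1,c_2$, and impose
\[
X[\{g_1,g_2\},\{h_1,h_2\}]=I_2,\quad X[\{g_1,g_2\},\{c_1,c_2\}]=0,\quad X[\{r_1,r_2\},\{h_1,h_2\}]=0,\quad X[\{r_1,r_2\},\{c_1,c_2\}]=\begin{pmatrix}1&x\\ y&z\end{pmatrix},
\]
where $x,y,z$ are tied to the gate's occurrence carriers by copy constraints. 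This is $16$ entry constraints plus $3$ copies, which is $O(1)$. Every entry of the $4\times4$ submatrix $Y$ on these rows and columns is now pinned. So $Y=\mathrm{diag}\bigl(I_2,\begin{pmatrix}1&x\\ y&z\end{pmatrix}\bigr)$ and $\det Y=z-xy$. By Lemmas~\ref{lem:rank_monotone} and~\ref{lem:rank3_minors}, $\rank(X)\le3$ then forces $z=xy$. This step needs no reference to the global gauge block.

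For local completeness, when $z=xy$ the matrix $Y$ equals $\mathrm{diag}(I_2,\, (1,y)^\top(1,x))$, which has rank exactly $3$. It factors explicitly as $Y=U_tV_t$ with $U_t\in\R^{4\times3}$ and $V_t\in\R^{3\times4}$, which supplies the local factors required in the proof of Theorem~\ref{thm:arm-hard}. Whether these local factorizations glue into one global rank-$3$ matrix is a composition question. It is deferred to the composition lemma of Section~\ref{sec:correctness} and is not part of this theorem. Collecting the cases gives $O(1)$ constraints per gate with exact semantics under $\rank(X)\le 3$.
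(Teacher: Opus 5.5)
Your case analysis for the linear gates and the derived relations (inverse, $x\ge 0$, $x>0$) is the same list of Lemmas~\ref{lem:const_copy}--\ref{lem:pos} that makes up the paper's proof. The real divergence is the multiplication case, and there your route is not only different but is the one that actually works. The paper discharges $\times$-gates by citing Lemma~\ref{lem:mul}, which rests on Lemma~\ref{lem:det_forcing}: a bare $2\times 2$ block $\begin{pmatrix}1&a\\ b&c\end{pmatrix}$ on fresh indices, made sound by Lemma~\ref{lem:rank_obstruction}. Your diagnosis of that lemma is correct, and the defect is fatal rather than cosmetic. The entries linking the gauge indices $I_0,J_0$ to the gadget indices are unconstrained, so the block-diagonal $Y$ displayed there need not be a submatrix of $X$. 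Concretely, take $U[I_0,:]=I_3$, $V[:,J_0]=B$, $U[r_1,:]=(1,0,0)$, $U[r_2,:]=(0,1,0)$, $V[:,c_1]=(1,b,0)^\top$, $V[:,c_2]=(a,c,0)^\top$, with all other rows and columns zero. Then $X=UV$ has rank at most $3$ and satisfies every constraint of that gadget for arbitrary $(a,b,c)$, including $c\neq ab$. So the paper's citation chain does not establish the statement for $\times$-gates. Your gadget instead pins every entry of a $4\times 4$ block: thirteen constants and three copies, so your ``16 plus 3'' slightly overcounts, which is irrelevant for $O(1)$. This makes $\det Y=z-xy$ a consequence of the affine constraints alone. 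Soundness is then a direct application of Lemmas~\ref{lem:rank_monotone} and~\ref{lem:rank3_minors}, and no global gauge block is needed. Your local witness is explicit: take the rows of $U_t$ to be $e_1^\top,e_2^\top,e_3^\top,ye_3^\top$ and the columns of $V_t$ to be $e_1,e_2,e_3,xe_3$, where $e_i$ is the standard basis of $\R^3$. Because each gadget owns its rows and columns, these witnesses stitch into one global factorization exactly as in Lemma~\ref{lem:correctness:global_embedding}; if the pinned block is kept, append $U[I_0,:]=I_3$ and $V[:,J_0]=B$. This requires that occurrence carriers sit in fresh rows and columns, so that your ``trivial'' local completeness for linear gates really is trivial. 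Your approach buys a valid proof at the cost of a few extra private rows, columns and pinned entries per gate.
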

\begin{proof}
By construction, the gate set used in Section~\ref{sec:prelim:circuits} is compiled
using only the gadgets introduced above: constants and copies
(Lemma~\ref{lem:const_copy}), addition and negation
(Lemmas~\ref{lem:add}--\ref{lem:neg}), multiplication
(Lemma~\ref{lem:mul}), inversion/nonzero
(Lemma~\ref{lem:inv}), and (when needed) inequalities reduced to square and inverse
constraints (Lemmas~\ref{lem:nonneg}--\ref{lem:pos} plus the implementation paragraph).

Each of these gadgets uses only $O(1)$ affine constraints and is correct:
for linear gates, correctness holds independent of rank (Lemmas~\ref{lem:const_copy},
\ref{lem:add}, \ref{lem:neg}); for nonlinear primitives, correctness holds under
$\rank(X)\le 3$ via rank obstruction (Lemmas~\ref{lem:mul}, \ref{lem:inv}) and the
real-algebraic equivalences for inequalities (Lemmas~\ref{lem:nonneg}, \ref{lem:pos}).
Therefore any single gate instance can be compiled into $O(1)$ affine constraints
whose feasibility under $\rank(X)\le 3$ enforces the exact gate semantics.
\end{proof}

\begin{remark}[Uniform gauge for all gadget completions]\label{rem:uniform_gauge}
All gate gadgets are completed under the \emph{same} pinned block constraint
$X[I_0,J_0]=B$ (Lemma \ref{lem:gauge_fixing}). In particular, whenever we invoke gate-level completeness
to claim the existence of a rank-$3$ witness matrix for a gadget, we may (and will)
take that witness to be represented in the \emph{canonical gauge} induced by the
pinned block, i.e., by choosing a factorization $X=UV$ with
$U[I_0,:]=I_3$ and $V[J_0,:]=B^\top$.
This uniform gauge choice is what allows all gadget witnesses to be embedded into a
single global factorization in Section~\ref{sec:correctness}, without any factor-level
inconsistencies.
\end{remark}

\begin{corollary}[Linear-size circuit compilation]
\label{cor:circuit_size}
An arithmetic circuit of size \(s\) can be compiled into an ARM(3) instance
with \(O(s)\) affine constraints and variables.
\end{corollary}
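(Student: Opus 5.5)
The plan is a counting argument over the gate-by-gate compilation of Theorem~\ref{thm:gate_complete}. We first fix the global bookkeeping. The circuit $\mathcal{C}$ has $s$ gates. We reserve the gauge block $X[I_0,J_0]=B$, which by Remark~\ref{rem:affine_gauge} costs $O(1)$ rows, columns and constraints. We then process the gates in a fixed topological order. For each gate $g$ we allocate its fresh auxiliary index sets $\mathcal{I}(g)$ and $\mathcal{J}(g)$, following the construction convention of Section~\ref{sec:gadgets}. We also allocate one designated carrier for its output and one carrier for each of its (at most two) input occurrences.

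Next we bound the per-gate cost. By Lemmas~\ref{lem:const_copy}, \ref{lem:add}, \ref{lem:neg} and \ref{lem:mul}, each gate type contributes $O(1)$ affine constraints and $O(1)$ auxiliary rows and columns. For the multiplication gadget, Lemma~\ref{lem:primitive_complexity} gives exactly four designated-entry constraints on a $2\times 2$ block. Any gates produced by inequality elimination use the square and inverse implementations of Section~\ref{subsec:ineq}. These are a constant number of multiplication and constant gadgets per atom, so they also cost $O(1)$ each.

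The one place where linearity needs care is fan-out. Occurrence-based carriers mean each use of a gate output gets its own carrier tied back by a copy constraint~\eqref{eq:copy}. Since every gate has fan-in at most $2$ (Definition~\ref{def:circuit_size}), the total number of wire occurrences is at most $2s$. Hence at most $2s$ copy constraints are added overall, even though a single gate may have large fan-out. This charging of copies to the consuming gate, rather than to the producing gate, is the step I expect to be the only real obstacle. The rest is simple summation.

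Summing over gates gives $O(s)$ constraints in total, together with $m,n=O(s)$ rows and columns. Each gate adds one more constraint for the output condition $z_{g_{\mathrm{out}}}=0$. All coefficients lie in $\{0,\pm1\}$ except the rational constants of constant gates, whose bit-length is already counted in the bit-size of $\mathcal{C}$.

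One point of interpretation remains. ``Variables'' can be read as the entries of $X$ that are actually constrained: carriers plus gadget entries, which number $O(s)$. Alternatively, under the factorization view of Lemma~\ref{lem:rank_factorization_equiv}, it can be read as the $3(m+n)=O(s)$ factor entries. Either reading gives a linear bound. Finally, the compilation is a single pass emitting constant-size output per gate, so it runs in polynomial (indeed linear) time.
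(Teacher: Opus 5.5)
Your proposal is correct and follows the paper's own counting argument: each gate gadget contributes $O(1)$ affine constraints and fresh auxiliary indices, each wire occurrence contributes one carrier plus one copy constraint, and the total is summed over the circuit. You also make explicit what the paper only asserts, namely that wire occurrences number at most $2s$ by the fan-in bound, and you are more careful about the word ``variables'' (the full matrix has $mn$ entries, of order $s^2$, so only the constrained entries or the $3(m+n)$ factor entries are linear); one small slip is that the output condition adds a single constraint in total, not one per gate, which does not affect the bound.
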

\begin{proof}
Each gate in the circuit is compiled into a constant-size gadget with $O(1)$ affine
constraints and $O(1)$ new auxiliary indices (rows/columns and designated carriers).
By the indexing convention above, gadgets do not reuse auxiliary indices, and each
wire occurrence introduces at most one additional carrier plus $O(1)$ copy constraints.
Therefore, the total number of affine constraints and auxiliary variables grows
linearly with the number of gate instances and wire occurrences, both $O(s)$ for a
circuit of size $s$. Hence the resulting ARM$(3)$ instance has size $O(s)$.
\end{proof}

\subsection{Global Correctness of the Reduction}
\label{subsec:global}

\begin{theorem}[Correctness of ARM(3) encoding]
\label{thm:arm_correct}
The constructed ARM(3) instance is feasible if and only if the original
arithmetic circuit is satisfiable.
\end{theorem}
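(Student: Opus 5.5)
The plan is to prove the two directions separately, using a single global normal form for rank-$3$ matrices that contain the pinned block. The starting observation comes from Lemma~\ref{lem:gauge_fixing}. If $\rank(X)\le 3$ and $X[I_0,J_0]=B$ with $B$ invertible, then every entry satisfies
\begin{equation*}
X_{r,c}=X[r,J_0]\,B^{-1}\,X[I_0,c].
\end{equation*}
To see this, take the $4\times4$ submatrix on rows $I_0\cup\{r\}$ and columns $J_0\cup\{c\}$. By Lemma~\ref{lem:rank3_minors} its determinant vanishes, and by the Schur complement this determinant equals $\det(B)\,\bigl(X_{r,c}-X[r,J_0]B^{-1}X[I_0,c]\bigr)$.

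I will use this identity as the soundness engine, and not the block-diagonal $Y$ of Lemma~\ref{lem:rank_obstruction}. That lemma silently assumes that the cross entries between the gauge rows/columns and the gadget rows/columns are zero, and nothing in \eqref{eq:2x2_block} imposes this. So the first step is to make the multiplication gadget explicitly pin those cross entries. For a gate $z=xy$, take $B=I_3$ and allocate a fresh row $r$ and a fresh column $c$. Impose $X_{r,j_1}=x'$ and $X_{r,j_2}=X_{r,j_3}=0$, and $X_{i_1,c}=y'$ and $X_{i_2,c}=X_{i_3,c}=0$. Then designate $X_{r,c}$ as the carrier $z'$. Here $x',y',z'$ are occurrence carriers that are tied to the circuit values by copy constraints \eqref{eq:copy}.

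\emph{Only if.} Let $X$ be feasible. The affine gadgets (constants, copies, addition, negation) hold for the decoded values by Lemmas~\ref{lem:const_copy}, \ref{lem:add} and~\ref{lem:neg}, with no use of rank. For each multiplication gadget, the displayed identity with $B=I_3$ gives $z'=x'y'$. Inverse, square and positivity constraints are then compiled from multiplication and constants, so Lemmas~\ref{lem:inv}, \ref{lem:nonneg} and~\ref{lem:pos} yield their semantics. Following copy constraints through the circuit in topological order, as in Lemma~\ref{lem:circuit_constraints_exact}, shows that the decoded values $(z_g)$ satisfy $\mathrm{GateEq}(C)$ and $z_{g_{\mathrm{out}}}=0$.

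\emph{If.} Given a satisfying gate assignment, I will build a global factorization $X=UV$ directly, with $U\in\R^{m\times3}$ and $V\in\R^{3\times n}$. Set $U[I_0,:]=I_3$ and $V[:,J_0]=I_3$, so the gauge block equals $I_3$.

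The key design choice is that every carrier lives either in a gadget's fresh row, in its fresh column, or in a dedicated fresh row of its own. Concretely:
\begin{itemize}
\item A multiplication gadget for $z=xy$ gets $U[r,:]=(x,0,0)$ and $V[:,c]=(y,0,0)^\top$. Then the entries $X_{r,J_0}$ and $X_{I_0,c}$ automatically match the pinned values, and $X_{r,c}=xy$.
\item Every other carrier (constants, additions, negations, copies) is placed in a gauge column, say at $(r',j_1)$ with a fresh row $r'$. Setting $U[r',:]=(v,0,0)$ gives $X_{r',j_1}=v$, for whatever value $v$ is intended.
\end{itemize}
All remaining rows and columns of $U$ and $V$ are set to zero. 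Each affine constraint involves only designated entries, and each designated entry is determined by a single fresh row or column paired with a gauge index. Hence all constraints hold, and $\rank(X)\le 3$ because $X=UV$.

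The main obstacle is the bookkeeping that makes this construction collision-free. Occurrence carriers must never share a row or column in a way that forces an unintended entry. The risk is two gadgets placing designated entries at the intersection of one gadget's row and another gadget's column: such an entry would equal a product of unrelated values, and there is no constraint on it. I will resolve this by the fresh-index convention, which allocates each gadget row and column once, and by placing every pinned or designated entry only at fresh-with-gauge intersections or at a gadget's own $(r,c)$. Checking this carefully constraint-type by constraint-type is where the argument needs care. The polynomial size bounds then follow from Corollary~\ref{cor:circuit_size}.
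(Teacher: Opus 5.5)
Your proof is correct, but it is not the paper's argument, and the departure is forced because the paper's route does not work. The paper gets the ``only if'' direction from Theorem~\ref{thm:gate_complete}, hence from Lemma~\ref{lem:mul} and the block-diagonal minor $Y$ of Lemma~\ref{lem:rank_obstruction}. As you note, $Y$ is not a submatrix of $X$: the gadget \eqref{eq:2x2_block} never pins the entries of the gadget rows in the columns $J_0$, or of the gadget columns in the rows $I_0$.

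In fact the bare $2\times2$ gadget forces nothing. Take $U[I_0,:]=I_3$, $V[:,J_0]=B$, $U[r_1,:]=e_1^\top$, $U[r_2,:]=e_2^\top$, $V[:,c_1]=(1,b,0)^\top$ and $V[:,c_2]=(a,c,0)^\top$. This realizes the block $\begin{pmatrix}1&a\\ b&c\end{pmatrix}$ for arbitrary $a,b,c$ inside a rank-$3$ matrix. So, for example, the unsatisfiable circuit $x\cdot x+1=0$ compiles to a feasible instance. The equivalence therefore fails for the paper's literal instance, and what you prove is the theorem for a repaired instance, which is the right thing to do.

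Your repair has two parts. You pin $X[r,J_0]=(x',0,0)$ and $X[I_0,c]=(y',0,0)^\top$. You then read off $X_{r,c}=X[r,J_0]B^{-1}X[I_0,c]$ from the vanishing $4\times4$ minor on rows $I_0\cup\{r\}$ and columns $J_0\cup\{c\}$. This supplies exactly the forcing mechanism that the paper's gauge-fixing discussion gestures at but never implements.

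For the ``if'' direction, the paper builds per-gadget witnesses in canonical gauge and stitches them together via Lemma~\ref{lem:correctness:global_embedding}. You instead write down one global pair $(U,V)$ directly. This is shorter and fully explicit, at the cost of tying the argument to your specific layout rather than to an abstract fresh-index discipline.

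The collision issue you flag as the main obstacle is actually immediate in your layout. Every entry named by a constraint lies in one of four places: $I_0\times J_0$, a fresh row times $J_0$, $I_0$ times a fresh column, or a multiplication gadget's own $(r,c)$. Each such entry is determined by the gauge together with factor rows or columns owned by a single gadget or carrier. Entries at the intersection of one gadget's row and another gadget's column appear in no constraint, so whatever $UV$ places there is harmless. You should state this check explicitly rather than defer it.
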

\begin{proof}
($\Rightarrow$) Suppose the ARM$(3)$ instance is feasible: there exists
$X\in\mathbb{R}^{m\times n}$ satisfying all compiled affine constraints and
$\rank(X)\le 3$. Decode every circuit wire occurrence by reading its designated
carrier value (Definition~\ref{def:designated_entries}). Copy constraints
\eqref{eq:copy} ensure all occurrences meant to be equal indeed decode to the same
real value (Lemma~\ref{lem:const_copy}). For each gate instance, the corresponding
gadget constraints hold in $X$, and by Theorem~\ref{thm:gate_complete} (soundness of
the compiled gadgets under $\rank(X)\le 3$) the decoded values satisfy the exact
gate semantics. Therefore, by topological evaluation of the circuit DAG, the
decoded assignment satisfies every gate equation and hence the circuit is
satisfiable over $\mathbb{R}$.

($\Leftarrow$) Suppose the arithmetic circuit is satisfiable. Fix any satisfying
assignment of real values to every wire occurrence consistent with fan-out.
For each gate instance, by Theorem~\ref{thm:gate_complete} (completeness of the
corresponding gadget) there exists a rank-$3$ completion satisfying that gate’s
affine constraints under the pinned block. Using the fresh-index convention, embed
each constant-size gadget completion into the global matrix on disjoint auxiliary
supports, enforcing equality between wire occurrences using copy constraints.
Since the gadgets share only designated values (and the common pinned block), they
compose without interference, yielding a global feasible matrix $X$ with
$\rank(X)\le 3$ satisfying all affine constraints. Hence the ARM$(3)$ instance is
feasible.
\end{proof}

\subsection{Rank-3 Minimality}
\label{subsec:rank_min}

\begin{lemma}[Rank--$3$ minimality for determinant forcing]\label{lem:rank3_minimality}
Rank--$2$ matrices cannot support determinant-based multiplication forcing of the
form used in Lemma~\ref{lem:det_forcing}: every $3\times 3$ minor vanishes identically.
Consequently, no construction that relies on a nonzero $3\times 3$ or $4\times 4$
minor as an obstruction can enforce $z=xy$ under a global rank bound $\rank(X)\le 2$.
\end{lemma}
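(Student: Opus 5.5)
The plan is to argue directly from the rank monotonicity principle (Lemma~\ref{lem:rank_monotone}) together with the analogue of Lemma~\ref{lem:rank3_minors} one level down. First I will record the elementary fact that if $\rank(X)\le 2$, then every $3\times 3$ submatrix $Y$ of $X$ has $\rank(Y)\le 2$. Hence $\det(Y)=0$, and a fortiori every $4\times 4$ minor also vanishes, since $\rank\le 2<4$. This is exactly the argument of Lemma~\ref{lem:rank3_minors} with $3$ in place of $4$, and it gives the first sentence of the statement: under $\rank(X)\le 2$, all $3\times 3$ minors vanish identically on the feasible set.

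Next I will make precise what ``determinant-based multiplication forcing of the form used in Lemma~\ref{lem:det_forcing}'' means. Soundness there is derived by exhibiting a square submatrix $Y$ of size $d\in\{3,4\}$ whose determinant, evaluated on the affinely constrained entries, equals a nonzero constant times $(z-xy)$ (as in Lemma~\ref{lem:rank_obstruction}). The argument then concludes that $z\ne xy$ forces $\rank(Y)=d$, which is greater than the rank bound. Under the bound $\rank(X)\le 2$ this implication is vacuous in the relevant direction. The obstruction certifies infeasibility only by producing a nonvanishing $d\times d$ minor with $d\ge 3$. But the first step shows that no feasible $X$ of rank $\le 2$ ever has such a minor nonzero, so the ``violation $\Rightarrow$ nonzero minor'' mechanism never distinguishes $z=xy$ from $z\ne xy$ through that minor. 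I will also note the gauge analogue: a pinned block can be at most $2\times 2$ full rank. Bordering it with one row and one column gives the $3\times 3$ Schur-complement identity $\det = \det(B)\,(\text{entry} - \text{bilinear form})$. This already vanishes on every rank-$\le 2$ matrix, so it is a constraint satisfied by all feasible points, not an obstruction triggered by violations.

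The main obstacle is that the second sentence, read literally, is a claim about \emph{all} constructions. A rank-$2$ constraint can still impose nonlinear relations via the vanishing of $3\times 3$ minors themselves: for example, bordering a pinned $2\times 2$ identity can force $z=xy$ as a consequence of $\det=0$ rather than $\det\ne 0$. So I will restrict the conclusion carefully to obstruction-style arguments: those that derive soundness from ``$z\ne xy$ implies some $d\times d$ minor, $d\ge 3$, is nonzero'' in the manner of Lemma~\ref{lem:rank_obstruction}. For these, I will show the argument cannot certify anything, because its hypothesis, a nonzero such minor, is incompatible with every rank-$\le 2$ matrix. The plan is to state this scoped version explicitly and prove it by the minor-vanishing step above. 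I will flag that forcing via vanishing minors is outside the lemma's scope.
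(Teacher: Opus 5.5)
Your first step (under $\rank(X)\le 2$ every $3\times 3$ and $4\times 4$ minor vanishes) is correct and is exactly the paper's argument. The gap is the inference you draw next, which the paper's proof also makes. You argue that because every feasible $X$ has $\det Y=0$, the mechanism ``never distinguishes $z=xy$ from $z\neq xy$.'' This is backwards. If $\det Y$ equals a nonzero constant times $(z-xy)$ on the affinely constrained entries, then $\det Y=0$ on every feasible point says exactly that $z=xy$ on every feasible point. The incompatibility of a nonzero $d\times d$ minor ($d\ge 3$) with $\rank(X)\le 2$ is what makes the contradiction argument work, not what defeats it. Concretely, pin
\[
X[\{1,2,3\},\{1,2,3\}]=\begin{pmatrix}1&0&x\\0&1&0\\y&0&z\end{pmatrix}
\]
with $x,y,z$ designated carriers. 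Then $\det=z-xy$, so $z\neq xy$ yields a nonzero $3\times 3$ minor, hence $\rank(X)\ge 3$, contradicting $\rank(X)\le 2$. Conversely, for $z=xy$ the third row is $y$ times the first, so the block has rank $2$. This is an obstruction of exactly the type in Lemma~\ref{lem:rank_obstruction}, and it enforces $z=xy$ under the rank-$2$ bound.

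Your scoping in the last paragraph therefore rescues nothing. ``$z\neq xy\Rightarrow$ some $d\times d$ minor is nonzero'' and ``that minor vanishes $\Rightarrow z=xy$'' are contrapositives of each other. Your own bordered-$2\times 2$ example is already an obstruction-style argument that succeeds at rank $2$, so the scoped claim you plan to prove is false. The same holds for $4\times 4$ obstructions: if the implication holds, the rank-$2$ bound excludes the nonzero minor a fortiori. What genuinely fails at rank $2$ is only the paper's particular gadget, because its pinned rank-$3$ block $B$ is itself infeasible under $\rank(X)\le 2$. That is a gauge problem, not an obstruction problem. In fact even rank $1$ supports this kind of forcing: pinning $X_{11}=1$, the vanishing of the $2\times 2$ minors forces $X_{ij}=X_{i1}X_{1j}$. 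So only the first sentence of the statement is provable. The ``consequently'' sentence is false as stated, and neither your argument nor the paper's establishes it.
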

\begin{proof}
If $\rank(X)\le 2$, then every set of three rows of $X$ is linearly dependent, hence
every $3\times 3$ minor has determinant $0$. In particular, any $4\times 4$ minor also
has determinant $0$ since rank $\le 2$ implies all $k\times k$ minors vanish for
$k\ge 3$. The determinant-forcing primitive (Section~\ref{sec:rank_forcing}) derives
a contradiction by producing a minor of size at least $3$ with nonzero determinant
when $z\neq xy$. Such an obstruction is impossible under $\rank(X)\le 2$. Therefore
rank $3$ is the minimal rank at which this determinant-based forcing mechanism can
operate.
\end{proof}

\begin{remark}
Lemma~\ref{lem:rank3_minimality} formalizes why rank--$3$ is the smallest global rank
bound compatible with our multiplication gadget and, by extension, with exact
simulation of nonlinear arithmetic in this framework.
\end{remark}

\section{Reduction Interface: From ETR/Circuits to \texorpdfstring{$\mathrm{ARM}(3)$}{ARM(3)}}
\label{sec:reduction_interface}

This section closes the construction layer and opens the correctness layer.
The preceding construction has already specified how to compile each arithmetic gate into a
constant-size affine gadget under the global constraint $\rank(X)\le 3$ (Theorem~\ref{thm:soundness_rank3}),
and it implies that a circuit of size $s$ yields an $\mathrm{ARM}(3)$ instance with $O(s)$ designated
carriers and $O(s)$ affine constraints. Here we state the reduction mapping and record the precise
equivalence statement in the form needed for the soundness and completeness proofs, which proceed by
global composition (Lemma~\ref{lem:correctness:global_embedding}).

\subsection{Source normal form recalled: ETR in circuit/gate form}
\label{subsec:prelim:etr_to_circuit_recall}

We take as source language the existential theory of the reals (ETR) in the standard bit model,
i.e., sentences of the form

\begin{equation}
\exists x\in \mathbb{R}^n:\ \varphi(x),
\label{eq:etr_sentence}
\end{equation}
where $\varphi$ is a quantifier-free Boolean combination of polynomial predicates.
As fixed in this manuscript, we use the following standard compilation pipeline.
An ETR instance is converted in polynomial time to an \emph{arithmetic circuit} $C$
together with the single output constraint $p_C(x)=0$, and then to the corresponding
gate-variable constraint system $\mathrm{GateEq}(C)$ encoding the circuit semantics.

Concretely, given a circuit $C$ with gate set $G$ and designated output gate $g_{\mathrm{out}}$,
we introduce a real variable $z_g$ for every gate $g\in G$ and impose one equality per gate
to define $\mathrm{GateEq}(C)$ (Section~1.2). The key exactness guarantee is:

\begin{equation}
p_C(x)=0
\quad\Longleftrightarrow\quad
\exists (z_g)_{g\in G}\in\mathbb{R}^{|G|}:\ \mathrm{GateEq}(C)\ \wedge\ z_{g_{\mathrm{out}}}=0,
\label{eq:circuit_exactness_recall}
\end{equation}
which is the standard circuit-semantics equivalence.

\subsection{Target format recalled: \texorpdfstring{$\mathrm{ARM}(3)$}{ARM(3)} with designated carriers}
\label{subsec:arm_format_recall}

An $\mathrm{ARM}(3)$ instance consists of rational affine constraints
\begin{equation}
\langle A_\ell, X\rangle=b_\ell,\qquad \ell\in[q],
\label{eq:arm_affine_constraints}
\end{equation}
together with the rank bound
\begin{equation}
\rank(X)\le 3,
\label{eq:rank_bound_3}
\end{equation}
where $\langle A,X\rangle$ is the Frobenius inner product.

To encode circuit scalars into a single matrix variable $X\in\mathbb{R}^{m\times n}$, we use
\emph{designated entries} (Definition \ref{def:designated_entries}): for each circuit gate-variable (and, by the circuit
compilation convention, for each wire occurrence) we fix a carrier position $(r_i,c_i)$ and decode
\begin{equation}
x_i := X_{r_i,c_i}.
\label{eq:designated_decode}
\end{equation}
Fan-out is handled by explicit affine \emph{consistency constraints} equating carriers
(Definition~\ref{def:circuit_size}), i.e., constraints of the form $X_{r_i,c_i}=X_{r_j,c_j}$.

Finally, all gadgets are completed under the same pinned block constraint (Lemma \ref{lem:gauge_fixing}), fixing a
full-rank $3\times 3$ submatrix $X[I_0,J_0]=B$; this eliminates nontrivial gauge freedom and ensures
the decoding \eqref{eq:designated_decode} is well-defined over the feasible set.

\subsection{The reduction mapping (statement-level)}
\label{subsec:reduction_mapping_statement}

Given a circuit instance in gate-variable form $\mathrm{GateEq}(C)\wedge z_{g_{\mathrm{out}}}=0$
~\ref{sec:prelim:circuits}, we construct an $\mathrm{ARM}(3)$ instance as follows.

\begin{enumerate}
\item Assign a designated carrier $(r_i,c_i)$ to each wire occurrence and each gate output
(Definition~\ref{def:designated_entries} and Definitions~\ref{def:fresh_aux}--\ref{def:occurrence_carriers}).

\item For each equality required by fan-out, add the corresponding consistency constraint equating the
two designated carriers Definition~\ref{def:circuit_size}.
\item For each gate equation in $\mathrm{GateEq}(C)$, add the constant-size affine gadget constraints
constructed in ~\ref{sec:rank3_template}:
constants/copies Lemma~\ref{lem:entry_constraints}, addition/negation Lemma~\ref{lem:gauge_fixing}, multiplication via determinant forcing
(~\ref{sec:Normal_Reduction}), and (when present in the circuit model) inversion/nonzero and inequality
encodings (~\ref{subsec:consistency_constraints}).
\item Add the output constraint $X_{r_{\mathrm{out}},c_{\mathrm{out}}}=0$ for the carrier of $z_{g_{\mathrm{out}}}$.
\end{enumerate}

By construction, every constraint above is affine and is representable in the input format
\eqref{eq:arm_affine_constraints} Lemma~\ref{lem:entry_constraints}.

\subsection{Size and encoding guarantees}
\label{subsec:size_encoding_guarantees}

We isolate the bookkeeping facts needed for the complexity reduction.

\begin{lemma}[Linear-size compilation]
\label{lem:linear_size_compilation_interface}
Let $C$ be an arithmetic circuit with $s$ gates.
The compilation described in ~\ref{sec:gadgets}, ~\ref{sec:reduction_interface} produces an $\mathrm{ARM}(3)$ instance with
$O(s)$ designated carriers and $O(s)$ affine constraints.
\end{lemma}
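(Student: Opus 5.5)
The plan is a direct counting argument that charges every carrier and every affine constraint to a gate of $C$ or to a wire occurrence. There are three stages.

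\emph{Stage 1: wire occurrences.} Since every gate has fan-in at most $2$ (Definition~\ref{def:arith_circuit}), $C$ has at most $2s$ edges. Each wire occurrence corresponds to either a gate output or an input slot of some gate. So the number of occurrences is at most $s + 2s = 3s$. By the occurrence-carrier convention of Section~\ref{sec:gadgets}, each occurrence receives exactly one designated carrier. Each fan-out equality is a single copy constraint~\eqref{eq:copy} linking an input slot to the output carrier of its source gate, so there are at most $2s$ such constraints, each of constant size.

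\emph{Stage 2: gate gadgets.} For every gate I add one gadget from Theorem~\ref{thm:gate_complete}.
\begin{itemize}
\item Constant, addition and negation gates contribute exactly one affine equation each (Lemmas~\ref{lem:const_copy}, \ref{lem:add}, \ref{lem:neg}).
\item A multiplication gate contributes the four designated-entry constraints and $O(1)$ auxiliary indices of Lemma~\ref{lem:primitive_complexity}, together with $O(1)$ copy constraints tying the block entries $a,b,c$ to the carriers of $x,y,z$.
\item The inverse and inequality encodings of Lemmas~\ref{lem:inv}--\ref{lem:pos} each use at most two multiplication gadgets, one constant gadget and $O(1)$ fresh carriers for $s,t$.
\end{itemize}
Every gate therefore contributes a constant $K$, independent of $s$, of carriers and affine constraints.

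\emph{Stage 3: global additions.} The instance also contains the gauge block $X[I_0,J_0]=B$, which is nine constraints by Remark~\ref{rem:affine_gauge}, and the single output constraint. Summing the three stages gives at most $3s + Ks + O(1) = O(s)$ carriers and $O(s)$ affine constraints. Each constraint has coefficients in $\{0,\pm1\}$, apart from the circuit constants $c$ and the entries of $B$, so the bit-length stays polynomial.

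The main obstacle is making the per-gate constant genuinely uniform. Inequality atoms are encoded through selector and slack variables (Lemma~\ref{lem:etr_to_equalities}), and the natural worry is that these introduce extra gadgets not counted per gate. To handle this, I would first assume the circuit $C$ is already in gate form, with those auxiliary polynomials expanded into binary gates, so that its size $s$ already counts them. Every inversion or inequality encoding then becomes a fixed pattern of at most three multiplication gadgets attached to one gate, and is absorbed into $K$. A secondary point is that the matrix dimensions $m,n$ are also $O(s)$. This follows because each gadget's fresh auxiliary rows and columns are $O(1)$, and carriers can be placed on those fresh indices.
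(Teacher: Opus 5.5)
Your proposal is correct and uses the same per-gate charging argument as the paper: each gate compiles to a constant-size gadget with $O(1)$ fresh indices, carriers and constraints, and the shared gauge block plus the output constraint add only $O(1)$. Your bounds of at most $3s$ occurrence carriers and at most $2s$ copy constraints, obtained from the fan-in-$2$ edge count, make explicit the fan-out bookkeeping that the paper leaves implicit in its phrase ``$O(1)$ designated carriers per gate.''
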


\begin{proof}
Each gate of $C$ is translated into a constant-size gadget: the construction introduces
$O(1)$ fresh auxiliary rows/columns and $O(1)$ affine constraints per gate gadget, and it
introduces only $O(1)$ designated carriers per gate (for the input/output carriers of that
gadget). The only shared objects across gadgets are the fixed gauge block indices $I_0,J_0$
and the corresponding pinned constraints, which contribute $O(1)$ additional constraints.
Therefore, summing over $s$ gates yields $O(s)$ designated carriers and $O(s)$ affine constraints.
\end{proof}

\begin{lemma}[Polynomial bit-complexity]
\label{lem:bit_complexity_interface}
The reduction produces rational constraint data of total bit-length polynomial in the bit-length
of the input circuit.
\end{lemma}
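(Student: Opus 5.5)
The plan is to bound the bit-length contributed by each kind of constraint separately and then sum. By Lemma~\ref{lem:linear_size_compilation_interface}, the instance has $O(s)$ affine constraints and $O(s)$ carriers. The matrix dimensions are also $O(s)$, since each gadget allocates $O(1)$ fresh rows and columns and the only shared indices are the gauge indices $I_0,J_0$. So it suffices to show two things: each individual constraint $\langle A_\ell,X\rangle=b_\ell$ has encoding length polynomial in the input bit-length $L$ of $C$, and the dimensions $(m,n,q)$ are encoded in $O(\log s)$ bits.

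I will classify the emitted constraints by their source, in the order of the construction in Section~\ref{subsec:reduction_mapping_statement}.
\begin{enumerate}
\item The gauge block $X[I_0,J_0]=B$ with $B=I_3$ (Remark~\ref{rem:affine_gauge}) contributes nine constraints with entries in $\{0,1\}$.
\item Copy/consistency, addition, negation and output constraints (Lemmas~\ref{lem:const_copy}, \ref{lem:add}, \ref{lem:neg}) have coefficient matrices $A_\ell$ with at most three nonzero entries, all in $\{0,\pm1\}$, and right-hand side $0$.
\item Multiplication, inverse and inequality gadgets use the determinant-forcing block \eqref{eq:2x2_block}. They contribute designated-entry constraints with coefficients in $\{0,\pm1\}$ (Lemma~\ref{lem:primitive_complexity}) and constants $0$ or $1$.
\item Constant gadgets \eqref{eq:const} are the only constraints whose right-hand side $b_\ell=c$ depends on the input. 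Here $c$ is copied verbatim from a constant gate of $C$, so its reduced numerator and denominator have bit-length at most the bit-size of $C$ (Definition~\ref{def:circuit_size}).
\end{enumerate}

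Summing, each $A_\ell$ costs $O(mn)=O(s^2)$ bits if written densely, or $O(\log s)$ bits per nonzero entry if written sparsely. The right-hand sides together cost at most $O(s)+L$ bits. The total is therefore $O(s^3+L)$, which is polynomial in $L$ because $s\le L$. The construction is an explicit loop over gates in topological order, so the output is also computable in polynomial time.

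The main obstacle is making sure no step silently introduces large numbers. The risk is in the upstream conversion from an ETR formula to a circuit (Lemma~\ref{lem:etr_to_equalities}): sparse polynomials with large exponents must be expanded by repeated squaring into $O(\log d)$ multiplication gates, not $d$ of them. Integer coefficients must enter as constant gates of polynomial bit-length. I would therefore state the lemma relative to the circuit's bit-size, as Definition~\ref{def:circuit_size} defines it. I would also note explicitly that the compilation never performs arithmetic on input constants; it only copies them into $b_\ell$. That makes the polynomial bound immediate, and any dependence on how the circuit was produced is handled by that earlier step.
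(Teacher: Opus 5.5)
Your proof is correct and follows essentially the paper's route: gauge, copy, addition, negation and gadget constraints all have coefficients in $\{0,\pm1\}$ and fixed small constants, circuit constants are copied verbatim into right-hand sides, and there are $O(s)$ constraints by Lemma~\ref{lem:linear_size_compilation_interface}. Your explicit count of the dense encoding of each $A_\ell$, giving $O(s^3+L)$, is more careful than the paper's parenthetical claim that the length is ``linear in $s$''. That claim overlooks that each $A_\ell$ is a full $m\times n$ matrix with $m,n=\Theta(s)$ in the paper's own input model for $\mathrm{ARM}(k)$.
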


\begin{proof}
All gadget constraints use a constant number of affine equalities whose coefficient matrices
have entries in $\{0,\pm 1\}$, except for constraints that fix designated entries to circuit
constants. Those constants are copied directly from the circuit description.
The gauge-fixing constraints pin a constant-size $3\times 3$ block with fixed rational data.
Since the total number of constraints is $O(s)$ by Lemma~\ref{lem:linear_size_compilation_interface},
and each constraint has polynomial-size rational encoding, the total encoding length is polynomial
in the input size (and in fact linear in $s$ up to the bit-length of the input constants).
\end{proof}

\subsection{Correctness interface theorem (proved in ~\ref{sec:rank_forcing})}
\label{subsec:correctness_interface_theorem}

\begin{lemma}[Circuit semantics equivalence]\label{lem:circuit_semantics_equiv}
Let $C$ be an arithmetic circuit over $\mathbb{R}$ with gate set $G$, output gate
$g_{\mathrm{out}}$, and designated input tuple $x$ (assigned to the input gates).
Let $z_g\in\mathbb{R}$ denote the value carried by gate $g$.
Let $\mathrm{GateEq}(C)$ denote the conjunction of all gate equations enforcing the local
semantics of every gate of $C$ (inputs, constants, additions, multiplications).
Let $p_C(x)$ be the real number obtained by evaluating the circuit on input $x$ (i.e.,
the value at the output gate under the unique semantics of $C$).
Then
\begin{equation}\label{eq:circuit_semantics_equiv}
p_C(x)=0
\quad\Longleftrightarrow\quad
\exists (z_g)_{g\in G}\in\mathbb{R}^{|G|}:\ \mathrm{GateEq}(C)\ \wedge\ z_{g_{\mathrm{out}}}=0.
\end{equation}
\end{lemma}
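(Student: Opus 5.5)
The plan is to prove the equivalence by the same two-direction induction over a topological order of the DAG used for Lemma~\ref{lem:circuit_constraints_exact}. Indeed, the statement is essentially a restatement of that lemma, with $\mathrm{GateEq}(C)$ taken to include input gates. So we could simply invoke it. For completeness, however, the plan is to spell out the argument directly.

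First fix a topological order $g_1,\dots,g_{|G|}$ of the gates, so that every gate's inputs precede it. Define the evaluation map $\mathrm{val}_x: G\to\mathbb{R}$ by recursion along this order:
\begin{itemize}
\item $\mathrm{val}_x(g)=x_i$ for an input gate labeled $x_i$;
\item $\mathrm{val}_x(g)=c$ for a constant gate;
\item $\mathrm{val}_x(g_1)+\mathrm{val}_x(g_2)$ and $\mathrm{val}_x(g_1)\mathrm{val}_x(g_2)$ for the binary gates $g=g_1+g_2$ and $g=g_1\times g_2$;
\item $-\mathrm{val}_x(g_1)$ for a negation gate.
\end{itemize}
By definition $p_C(x)=\mathrm{val}_x(g_{\mathrm{out}})$, and acyclicity makes the recursion well defined.

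For ($\Rightarrow$), assume $p_C(x)=0$ and set $z_g:=\mathrm{val}_x(g)$. Each gate equation in $\mathrm{GateEq}(C)$ then holds by the defining clause of $\mathrm{val}_x$ for that gate type. Moreover, $z_{g_{\mathrm{out}}}=p_C(x)=0$.

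For ($\Leftarrow$), take any $(z_g)$ satisfying $\mathrm{GateEq}(C)\wedge z_{g_{\mathrm{out}}}=0$. We show by strong induction on the position in the topological order that $z_g=\mathrm{val}_x(g)$ for every $g$. In the base cases (input and constant gates), the gate equation pins $z_g$ to exactly $\mathrm{val}_x(g)$. In the inductive step, the gate equation expresses $z_g$ as the sum, product or negation of $z_{g_1},z_{g_2}$. These equal $\mathrm{val}_x(g_1),\mathrm{val}_x(g_2)$ by the induction hypothesis, since $g_1,g_2$ precede $g$. Hence $z_g=\mathrm{val}_x(g)$. In particular $p_C(x)=\mathrm{val}_x(g_{\mathrm{out}})=z_{g_{\mathrm{out}}}=0$.

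There is no genuine obstacle. The only care needed is to state precisely that input gates are tied to the fixed tuple $x$, via the constraint $z_g-x_i=0$. That constraint is what makes the solution of $\mathrm{GateEq}(C)$ unique for each $x$, and so makes the inductive uniqueness argument go through. As an aside, the statement's parenthetical list of gate types omits negation gates, which the paper's circuit model includes. The induction should treat negation as well, and it is handled by the same one-line step as addition.
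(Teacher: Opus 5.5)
Your proposal is correct and takes essentially the paper's route: the forward direction sets each $z_g$ to the gate's value computed along a topological order, and the reverse direction uses induction along that order, with input gates pinned to $x$, to show that any satisfying assignment coincides with the evaluation, so $p_C(x)=z_{g_{\mathrm{out}}}=0$. Your explicit handling of negation gates, which the statement's parenthetical list omits, is a harmless and welcome addition.
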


\begin{proof}
($\Rightarrow$)
Assume $p_C(x)=0$. Evaluate the circuit $C$ on input $x$ gate-by-gate in topological order.
This produces a value $z_g\in\mathbb{R}$ for every gate $g\in G$ such that:
(i) for each input/constant gate, $z_g$ equals its prescribed value, and
(ii) for each addition gate $g$ with children $g_1,g_2$, we have $z_g=z_{g_1}+z_{g_2}$, and
(iii) for each multiplication gate $g$ with children $g_1,g_2$, we have $z_g=z_{g_1}z_{g_2}$.
Hence $\mathrm{GateEq}(C)$ holds for this assignment $(z_g)_{g\in G}$.
Moreover, by definition of $p_C(x)$ as the output value, $z_{g_{\mathrm{out}}}=p_C(x)=0$.
Therefore the right-hand side of \eqref{eq:circuit_semantics_equiv} holds.

($\Leftarrow$)
Assume there exists an assignment $(z_g)_{g\in G}$ satisfying $\mathrm{GateEq}(C)$ and
$z_{g_{\mathrm{out}}}=0$. The constraints $\mathrm{GateEq}(C)$ enforce that each gate value
is consistent with its local operation and its children. Since the circuit is acyclic, these
equations uniquely determine every gate value from the inputs (by induction along any
topological ordering). In particular, the value enforced at the output gate equals the
circuit evaluation $p_C(x)$. Hence $p_C(x)=z_{g_{\mathrm{out}}}=0$, proving the left-hand side.
\end{proof}

We now state the equivalence needed to complete the $\exists\mathbb{R}$-hardness pipeline.

\begin{theorem}[Circuit satisfiable iff compiled $\mathrm{ARM}(3)$ instance feasible]
\label{thm:circuit_iff_arm3}
Let $C$ be the arithmetic circuit instance, and let $p_C$ denote the output polynomial.
Let $\mathcal{I}(C)$ be the $\mathrm{ARM}(3)$ instance obtained by applying the gadget construction
together with the interface mapping (designated carriers, consistency constraints, and the pinned block).
Then
\begin{equation}\label{eq:circuit_arm_equivalence}
\exists x\in\mathbb{R}^n:\ p_C(x)=0
\quad\Longleftrightarrow\quad
\exists X\in\mathbb{R}^{m\times n}:\ \rank(X)\le 3\ \wedge\ X \models \mathcal{I}(C).
\end{equation}
\end{theorem}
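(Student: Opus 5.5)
The plan is to prove the two implications of \eqref{eq:circuit_arm_equivalence} separately. In both directions I would pass through the gate-variable system, using Lemma~\ref{lem:circuit_semantics_equiv}: the left-hand side of \eqref{eq:circuit_arm_equivalence} is equivalent to $\exists (z_g)_{g\in G}:\ \mathrm{GateEq}(C)\wedge z_{g_{\mathrm{out}}}=0$. It therefore suffices to show that this gate-variable system is satisfiable if and only if $\mathcal{I}(C)$ has a feasible $X$ with $\rank(X)\le 3$.

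I would take $B=I_3$ and realize everything through an explicit factorization $X=UV$ in canonical gauge, with $U[I_0,:]=I_3$ and $V[:,J_0]=I_3$. Before doing so, I would check (and if necessary strengthen) one point of the multiplication gadget. As written, the proof of Lemma~\ref{lem:rank_obstruction} needs the off-diagonal $2\times 2$ blocks between the gauge indices $\{i_1,i_2\},\{j_1,j_2\}$ and the gadget indices $\{r_1,r_2\},\{c_1,c_2\}$ to vanish. A bare $2\times 2$ block $\begin{pmatrix}1&a\\ b&c\end{pmatrix}$ sitting in a rank-$3$ matrix does not force $c=ab$. So I would add, for each multiplication gadget, the $8$ affine constraints $X[\{r_1,r_2\},\{j_1,j_2\}]=0$ and $X[\{i_1,i_2\},\{c_1,c_2\}]=0$. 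These are still of the form allowed by Lemma~\ref{lem:entry_constraints}. With them, the $4\times 4$ submatrix on rows $\{i_1,i_2,r_1,r_2\}$ and columns $\{j_1,j_2,c_1,c_2\}$ is exactly block-diagonal $I_2\oplus\begin{pmatrix}1&a\\ b&c\end{pmatrix}$. Its determinant is $c-ab$, so Lemma~\ref{lem:rank3_minors} gives soundness.

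\emph{($\Leftarrow$) direction.} Let $X$ be feasible. Decode every carrier via \eqref{eq:designated_decode}. The constant, copy, addition and negation constraints hold exactly, by Lemmas~\ref{lem:const_copy}, \ref{lem:add} and \ref{lem:neg}, with no rank argument needed. Each multiplication gadget yields $z=xy$ by the repaired obstruction above, and the inverse and inequality encodings then follow from Lemmas~\ref{lem:inv}, \ref{lem:nonneg} and \ref{lem:pos}. The copy constraints identify all occurrences of a wire, so setting $z_g$ to be the common decoded value satisfies $\mathrm{GateEq}(C)$, and the output constraint gives $z_{g_{\mathrm{out}}}=0$.

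\emph{($\Rightarrow$) direction.} I would build $U,V$ row by row and column by column. Write $e_1,e_2,e_3$ for the standard basis of $\R^3$.
\begin{itemize}
\item For each multiplication gadget with values $(x,y,z=xy)$, set $U[r_1,:]=e_3^\top$, $U[r_2,:]=y\,e_3^\top$, $V[:,c_1]=e_3$ and $V[:,c_2]=x\,e_3$. The gadget block is then $\begin{pmatrix}1&x\\ y&xy\end{pmatrix}$. The zero constraints hold because $e_3$ is orthogonal to $e_1,e_2$.
\item Every other carrier $(r,c)$ carrying value $v$ is placed on a fresh row and a fresh column, with $U[r,:]=v\,e_3^\top$ and $V[:,c]=e_3$.
\end{itemize}
Because of the fresh-index convention, no row or column receives two conflicting prescriptions. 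Each affine constraint involves only entries whose values are the intended circuit values, so it holds, and $\rank(UV)\le 3$ automatically.

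The main obstacle I expect is this global bookkeeping in the ($\Rightarrow$) direction. I must verify that the repaired gadget, the gauge block $X[I_0,J_0]=I_3$, and the carriers of different gadgets can all coexist. The point to check is that every gadget row lies in the one-dimensional direction $e_3^\top$ forced by its zero constraints, while its carrier entries still remain freely assignable. The construction above resolves this because the $(3,3)$ gauge entry equals $1$, and I would check it explicitly entry by entry.
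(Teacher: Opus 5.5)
Your proof is correct, and it departs from the paper's at the two places where the paper's argument is weakest.

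\textbf{The multiplication gadget.} Your diagnosis is right, and the repair is necessary rather than cosmetic. The paper's $(\Leftarrow)$ direction rests on Lemma~\ref{lem:rank_obstruction}, whose matrix $Y$ has zero off-diagonal blocks that \eqref{eq:2x2_block} never imposes. In fact the bare gadget forces nothing. Work in canonical gauge ($U[I_0,:]=I_3$, $V[:,J_0]=B$) and set $U[r_1,:]=(1,a,0)$, $U[r_2,:]=(b,c,0)$, $V[:,c_1]=e_1$, $V[:,c_2]=e_2$. This realizes an arbitrary block $\begin{pmatrix}1&a\\ b&c\end{pmatrix}$ at rank $3$. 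So, for example, the circuit for $x^2+1$ compiles to a feasible instance, and the statement fails for the construction as literally written. Your eight zero constraints make $Y$ literally $I_2\oplus\begin{pmatrix}1&a\\ b&c\end{pmatrix}$. You should state explicitly that $i_1,i_2\in I_0$ and $j_1,j_2\in J_0$ are fixed once, for all gadgets, so that $B[\{i_1,i_2\},\{j_1,j_2\}]=I_2$.

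\textbf{The $(\Rightarrow)$ direction.} The paper builds per-gadget local witnesses and glues them with Lemma~\ref{lem:correctness:global_embedding}. You instead write down a single global factorization and check every constraint entrywise. Your witness is essentially forced. With $B=I_3$ in canonical gauge, the zero constraints give $U[r,1]=U[r,2]=0$ and $V[1,c]=V[2,c]=0$ on gadget rows and columns. These therefore lie along $e_3$, and the gadget block must be the rank-one matrix $\begin{pmatrix}1&x\\ y&xy\end{pmatrix}$.

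\textbf{What each route buys.} The paper's modular gluing is more general in principle, since it applies to any gadget family with valid local witnesses, but here it inherits the broken gadget lemma. Your direct route is shorter and fully checkable. The price is committing to a concrete layout: one fresh row and one fresh column per carrier outside the multiplication blocks. State this as part of the construction, because the paper only guarantees freshness per gadget.

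Finally, your $(\Leftarrow)$ argument rightly needs none of the uniqueness claims of Lemmas~\ref{lem:gauge_fixing} and~\ref{lem:correctness:gauge}. Decoded values are entries of $X$ and are well defined without any gauge argument.
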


\begin{proof}
We prove both directions.

\paragraph{Soundness ($\Rightarrow$).}
Assume $\exists x\in\mathbb{R}^n$ such that $p_C(x)=0$.
By Lemma~\ref{lem:circuit_semantics_equiv}, there exists an assignment of gate values
$(z_g)_{g\in G}$ satisfying the gate equations $\mathrm{GateEq}(C)$ and $z_{g_{\mathrm{out}}}=0$.
Use these values to assign consistent real numbers to all designated carriers used in the reduction
(inputs, intermediate gate outputs, and any copied occurrences), respecting every affine copy/constant
constraint by construction.

For each gadget $\mathcal{G}_t$ produced by the compilation, the gadget-level completeness property
yields local factors
\[
U^{(t)}\in\mathbb{R}^{m\times 3},\qquad V^{(t)}\in\mathbb{R}^{3\times n}
\]
in the canonical gauge, such that $X^{(t)}:=U^{(t)}V^{(t)}$ satisfies all affine constraints of
$\mathcal{G}_t$ and matches the designated carrier values used by $\mathcal{G}_t$.
Applying Lemma~\ref{lem:correctness:global_embedding} to the family of all gadgets, we obtain global
factors $U,V$ in the same canonical gauge such that the global matrix
\begin{equation}\label{eq:circuit_arm_equivalence:soundness_X}
X:=UV
\end{equation}
satisfies \emph{all} gadget constraints simultaneously and matches all designated carrier values.
Since $X$ is a product of an $m\times 3$ and a $3\times n$ matrix, $\rank(X)\le 3$ holds. Hence
$X\models \mathcal{I}(C)$, establishing the forward direction.

\paragraph{Completeness ($\Leftarrow$).}
Assume there exists $X\in\mathbb{R}^{m\times n}$ such that $\rank(X)\le 3$ and $X\models \mathcal{I}(C)$.
By Lemma~\ref{lem:correctness:gauge}, we may fix a rank--$3$ factorization $X=UV$ in canonical gauge.
Read off the values of all designated carriers from $X$ (i.e., the designated entries that encode
circuit inputs, gate outputs, and copied occurrences). Because $X$ satisfies every gadget’s affine
constraints under $\rank(X)\le 3$, the gadget-level soundness property implies that these decoded
carrier values satisfy the intended algebraic relations gate-by-gate, and in particular the decoded
gate values satisfy $\mathrm{GateEq}(C)$ and the compiled output condition enforces that the decoded
output value equals $0$.

Therefore, by Lemma~\ref{lem:circuit_semantics_equiv}, there exists an input assignment
$x\in\mathbb{R}^n$ such that $p_C(x)=0$, proving the reverse direction.
\end{proof}

\paragraph{Consequence for $\exists\mathbb{R}$.}
The standard reductions defining $\exists\mathbb{R}$ allow us to start from an existential system
of polynomial equations over the reals and compile it, in polynomial time, into an arithmetic circuit
satisfiability instance of the form $\exists x\in\mathbb{R}^n:\ p_C(x)=0$. Applying
Theorem~\ref{thm:circuit_iff_arm3} therefore yields $\exists\mathbb{R}$-hardness of $\mathrm{ARM}(3)$.
Moreover, for every fixed $k\ge 1$, $\mathrm{ARM}(k)\in\exists\mathbb{R}$ via an explicit rank-$k$
factorization witness. Hence $\mathrm{ARM}(3)$ is $\exists\mathbb{R}$-complete.


\section{Correctness Proofs}\label{sec:correctness}
This section establishes the bidirectional correctness of the reduction:
\emph{soundness} (a satisfying circuit assignment yields a feasible ARM$(3)$
instance) and \emph{completeness} (a feasible ARM$(3)$ solution decodes to a
satisfying assignment). The only genuinely delicate point is global consistency:
many constant-size gadgets are superimposed in one matrix, and we must ensure
they compose without any hidden algebraic coupling. We make this rigorous
by (i) recording the index-discipline properties guaranteed by the construction in
Section~\ref{sec:gadgets}, (ii) fixing a canonical gauge for rank-$3$ factorizations
via the pinned block of Lemma \ref{lem:gauge_fixing}, and (iii) proving that all gadget witnesses can
be embedded into a \emph{single} global factorization $X=UV$.

Throughout, we use the designated carrier convention from
Definition~\ref{def:designated_entries}: for each circuit value $z_i$ we write
\[
x_i \;:=\; X_{r_i,c_i},
\]
where $(r_i,c_i)$ is the carrier position assigned by the reduction.

\subsection{Index discipline and canonical gauge}\label{sec:correctness:prelim}
We begin by formalizing two properties that are ensured \emph{by construction}
(Section~\ref{sec:gadgets}, Subsections~\ref{subsec:constants} and~\ref{subsec:completeness}),
and will be used repeatedly in the proofs below.

\begin{definition}[Fresh auxiliary indices]\label{def:fresh_aux}
Each gadget $\mathcal{G}$ introduced by the reduction comes with a set of auxiliary
row indices $\mathcal{I}(\mathcal{G})\subseteq[m]$ and auxiliary column indices
$\mathcal{J}(\mathcal{G})\subseteq[n]$ such that
\[
\mathcal{I}(\mathcal{G})\cap \mathcal{I}(\mathcal{G}')=\emptyset,\qquad
\mathcal{J}(\mathcal{G})\cap \mathcal{J}(\mathcal{G}')=\emptyset
\quad\text{for all }\mathcal{G}\neq \mathcal{G}'.
\]
The only indices that may be shared across gadgets are the global gauge sets
$I_0\subseteq[m]$, $J_0\subseteq[n]$, and designated carrier indices explicitly
linked by copy constraints.
\end{definition}

\begin{definition}[Occurrence-based carriers]\label{def:occurrence_carriers}
Every \emph{occurrence} of a circuit value (variable usage, fan-out, or gate input)
is assigned its own designated carrier position $(r,c)$. Whenever two occurrences
must represent the same scalar, the reduction enforces this solely via explicit
affine copy constraints (Section~\ref{subsec:constants}), and never by reusing the same
carrier position in two distinct gadgets.
\end{definition}

Definitions~\ref{def:fresh_aux}--\ref{def:occurrence_carriers} eliminate factor-level
collisions: gadgets may share values (through explicit copy constraints), but they
never compete to set the same auxiliary indices, and they never force two different
local constructions to assign incompatible vectors to the same factor row/column.

Next we recall the pinned gauge block and show that it induces a canonical gauge
for rank-$3$ factorizations.

\begin{lemma}[Gauge fixing and canonical factors]\label{lem:correctness:gauge}
The ARM$(3)$ instance includes affine constraints pinning a $3\times 3$ block
\begin{equation}\label{eq:correctness:gauge_block}
X[I_0,J_0] \;=\; B,
\end{equation}
where $B\in\mathbb{R}^{3\times 3}$ is fixed and $\det(B)\neq 0$
(as constructed in Lemma~\ref{def:designated_entries}). If $\rank(X)\le 3$ and \eqref{eq:correctness:gauge_block}
holds, then there exist factors
\[
U\in\mathbb{R}^{m\times 3},\qquad V\in\mathbb{R}^{3\times n}
\quad\text{such that}\quad X = U V
\]
with the \emph{canonical gauge}
\begin{equation}\label{eq:correctness:canonical_gauge}
U[I_0,:] = I_3,\qquad V[:,J_0] = B.
\end{equation}
Moreover, once \eqref{eq:correctness:canonical_gauge} is imposed, there is no residual
$GL(3)$ gauge freedom: any other rank-$3$ factorization of $X$ satisfying
\eqref{eq:correctness:canonical_gauge} must coincide with $(U,V)$ on all rows/columns
that participate in the constraints of the reduction.
\end{lemma}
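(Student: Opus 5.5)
The plan is a direct gauge-normalization argument: start from an arbitrary rank-$3$ factorization, multiply by the inverse of its pinned row block, and then read off the uniqueness claim from the fact that the canonical factors are determined entrywise by $X$. No gadget-specific reasoning is needed; only the pinned block \eqref{eq:correctness:gauge_block} and $\det(B)\neq 0$ are used.

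\emph{Existence.} Since $\rank(X)\le 3$, Lemma~\ref{lem:rank_factorization_equiv} gives $U'\in\mathbb{R}^{m\times 3}$ and $V'\in\mathbb{R}^{3\times n}$ with $X=U'V'$. Restricting to the pinned indices gives $U'[I_0,:]\,V'[:,J_0]=X[I_0,J_0]=B$. Both factors are $3\times 3$ and $B$ is invertible, so $P:=U'[I_0,:]$ is invertible; this is the same observation as in Lemma~\ref{lem:gauge_fixing}. Define
\[
U:=U'P^{-1},\qquad V:=PV'.
\]
Then $UV=U'V'=X$ and $U[I_0,:]=PP^{-1}=I_3$. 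Moreover $V[:,J_0]=P\,V'[:,J_0]=U'[I_0,:]\,V'[:,J_0]=B$, so \eqref{eq:correctness:canonical_gauge} holds. Incidentally, this shows $\rank(X)=3$ exactly.

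\emph{Uniqueness.} Rather than arguing through a gauge matrix $G$, I would show that any canonical pair is determined entirely by $X$. Let $(\tilde U,\tilde V)$ be any factorization $X=\tilde U\tilde V$ satisfying \eqref{eq:correctness:canonical_gauge}.
\begin{enumerate}
\item Restricting to rows $I_0$ gives $X[I_0,:]=\tilde U[I_0,:]\,\tilde V=\tilde V$. Hence $\tilde V=X[I_0,:]$ is forced.
\item Restricting to columns $J_0$ gives $X[:,J_0]=\tilde U\,\tilde V[:,J_0]=\tilde U B$. Hence $\tilde U=X[:,J_0]\,B^{-1}$ is forced.
\end{enumerate}
Therefore $(\tilde U,\tilde V)=(U,V)$ on every row and column, not merely on those touched by the reduction's constraints. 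This is strictly stronger than the statement, and it leaves no residual $GL(3)$ freedom. Consistency check: if $U'=UG$ and $V'=G^{-1}V$ both satisfy \eqref{eq:correctness:canonical_gauge}, then restricting to rows $I_0$ gives $I_3=I_3G$, so $G=I_3$.

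\emph{Main obstacle.} There is no real technical obstacle. The only point needing care is the bookkeeping of the canonical gauge. Remark~\ref{rem:uniform_gauge} writes $V[J_0,:]=B^\top$, whereas here the factor $V$ is $3\times n$ and the condition is $V[:,J_0]=B$. I would state the normalization in the $3\times n$ convention used in this lemma and note that the two conventions agree under transposition. I would also note explicitly that invertibility of $P$ uses only $\det(B)\neq 0$ together with the $3\times 3$ shape of the restricted factors.
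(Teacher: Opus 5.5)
Your proof is correct. The existence half is the paper's argument: your $P^{-1}$ is the paper's $G=\widetilde U[I_0,:]^{-1}$.

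For uniqueness you take a different and cleaner route. The paper assumes there is a change-of-basis matrix $H\in GL(3)$ with $U_2=U_1H$. It then uses $U_1[I_0,:]=U_2[I_0,:]=I_3$ to conclude $H=I_3$. This tacitly relies on two factorizations of $X$ through $\mathbb{R}^3$ being related by such an $H$. That is true here only because the gauge forces each $U_k$ to have full column rank and each $V_k$ full row rank, and the paper does not spell this out.

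You instead read the factors off directly: $\widetilde V=X[I_0,:]$ and $\widetilde U=X[:,J_0]B^{-1}$. This needs no auxiliary fact and gives an explicit formula for the canonical factors in terms of $X$. It also yields equality on every row and column, not just the paper's hedged ``rows/columns that participate in the constraints.''

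The paper's $H$-argument, which you include as a consistency check, mainly adds the conceptual link to the $GL(3)$ gauge action. Your version is self-contained and strictly stronger.
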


\begin{proof}
Since $\rank(X)\le 3$, write $X=\widetilde U \widetilde V$ for some
$\widetilde U\in\mathbb{R}^{m\times 3}$ and $\widetilde V\in\mathbb{R}^{3\times n}$.
Restricting to the pinned block gives
\begin{equation}
    B = X[I_0,J_0] = \widetilde U[I_0,:]\;\widetilde V[:,J_0].
\end{equation}
Because $B$ is invertible, $\widetilde U[I_0,:]$ is invertible. Define
$G := \widetilde U[I_0,:]^{-1}$ and set
$U := \widetilde U G$, $V := G^{-1}\widetilde V$.
Then $X=UV$ still holds and $U[I_0,:]=I_3$. Finally,
\begin{equation}
B = X[I_0,J_0] = U[I_0,:]\,V[:,J_0] = V[:,J_0],
\end{equation}
so $V[:,J_0]=B$. This establishes existence of a factorization in canonical gauge.

For uniqueness in canonical gauge, suppose $X=U_1V_1=U_2V_2$ are two factorizations
with $U_k[I_0,:]=I_3$ and $V_k[:,J_0]=B$ for $k=1,2$. Since $U_1[I_0,:]=U_2[I_0,:]=I_3$,
any change-of-basis matrix $H\in GL(3)$ relating the two factorizations must satisfy
$I_3 = U_2[I_0,:] = U_1[I_0,:]H = H$, hence $H=I_3$. Therefore $U_1=U_2$ and $V_1=V_2$
on every row/column that is constrained through the construction (in particular, every
row/column that appears in any gadget constraints is linked to the gauge block through the
affine equalities defining that gadget). Thus there is no residual gauge freedom once
\eqref{eq:correctness:canonical_gauge} is enforced on constrained supports.
\end{proof}

\subsection{A global embedding lemma} \label{sec:correctness:embedding}
We now formalize gadget composition in the strongest possible form: we never stitch
separately-computed matrices, only rows/columns of a \emph{single} global factor pair.

\begin{lemma}[Global embedding of gadget witnesses]\label{lem:correctness:global_embedding}
Fix the canonical gauge \eqref{eq:correctness:canonical_gauge}.
Consider any family of gadgets $\{\mathcal{G}_t\}_{t=1}^T$ produced by the reduction,
together with a consistent assignment of real values to all designated carriers that
respects every affine copy/constant constraint.

Assume that for each gadget $\mathcal{G}_t$ there exist local factors
\[
U^{(t)}\in\mathbb{R}^{m\times 3},\qquad V^{(t)}\in\mathbb{R}^{3\times n}
\]
such that:
\begin{enumerate}
\item\label{it:globalembed:gauge}
$U^{(t)}[I_0,:]=I_3$ and $V^{(t)}[:,J_0]=B$,
\item\label{it:globalembed:local_sat}
$X^{(t)}:=U^{(t)}V^{(t)}$ satisfies all affine constraints of $\mathcal{G}_t$,
\item\label{it:globalembed:carriers}
$X^{(t)}$ matches the prescribed designated carrier values used by $\mathcal{G}_t$.
\end{enumerate}
Then there exist global factors $U,V$ (with the same canonical gauge) such that
$X:=UV$ satisfies \emph{all} gadget constraints simultaneously and matches all
designated carrier values.
\end{lemma}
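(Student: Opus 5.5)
The plan is to build the global factors by gluing rows and columns taken from the local witnesses. The index discipline of Definitions~\ref{def:fresh_aux}--\ref{def:occurrence_carriers} means no row or column index is claimed by two different gadgets, so no conflicts arise. Concretely, I would define $U\in\mathbb{R}^{m\times 3}$ row by row and $V\in\mathbb{R}^{3\times n}$ column by column:
\begin{itemize}
\item set $U[I_0,:]:=I_3$ and $V[:,J_0]:=B$;
\item for each $t$ and each row $i\in\mathcal{I}(\mathcal{G}_t)$, set $U[i,:]:=U^{(t)}[i,:]$;
\item for each column $j\in\mathcal{J}(\mathcal{G}_t)$, set $V[:,j]:=V^{(t)}[:,j]$;
\item assign any remaining rows and columns arbitrarily (say, zero).
\end{itemize}
Because the sets $\mathcal{I}(\mathcal{G}_t)$ are pairwise disjoint and disjoint from $I_0$, and likewise for the $\mathcal{J}(\mathcal{G}_t)$, this is well defined. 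By hypothesis~\ref{it:globalembed:gauge}, the gauge assignments agree with every local witness. Then $X:=UV$ has $\rank(X)\le 3$ automatically, and it is in canonical gauge by construction.

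The key step is a locality claim. For each $t$, every affine constraint of $\mathcal{G}_t$ refers only to entries in $(\mathcal{I}(\mathcal{G}_t)\cup I_0)\times(\mathcal{J}(\mathcal{G}_t)\cup J_0)$, and this includes the designated carriers used by $\mathcal{G}_t$. On this index rectangle the global factor rows and columns coincide with those of $(U^{(t)},V^{(t)})$. Hence $X$ agrees entrywise with $X^{(t)}$ there, since $X_{ij}=U[i,:]V[:,j]=U^{(t)}[i,:]V^{(t)}[:,j]=X^{(t)}_{ij}$. Hypotheses~\ref{it:globalembed:local_sat} and~\ref{it:globalembed:carriers} then give that every constraint of $\mathcal{G}_t$ holds in $X$ and that every carrier of $\mathcal{G}_t$ carries its prescribed value. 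The pinned-block constraint holds because $X[I_0,J_0]=I_3B=B$. Entries mixing rows of one gadget with columns of another appear in no constraint, so their values are irrelevant.

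It remains to check the constraints that are not internal to any single gadget, namely the copy, constant and output constraints between carriers of different gadgets. Each such constraint is an affine equality among designated entries, and each carrier lies in its own gadget's rectangle (Definition~\ref{def:occurrence_carriers}). So each of these entries equals its prescribed value, and the constraint holds because the prescribed assignment respects all copy/constant constraints by assumption.

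I expect the locality claim to be the main obstacle. The definitions in Section~\ref{sec:gadgets} only explicitly assert freshness of auxiliary indices, not that each carrier's row and column belong to its gadget's auxiliary set. I would first make this explicit as part of the construction convention: carrier positions of a gadget are drawn from $\mathcal{I}(\mathcal{G}_t)\times\mathcal{J}(\mathcal{G}_t)$. I would then verify it gadget by gadget; for the determinant-forcing gadget, this means placing the $2\times 2$ block \eqref{eq:2x2_block} on $\{r_1,r_2\}\subseteq\mathcal{I}(\mathcal{G}_t)$ and $\{c_1,c_2\}\subseteq\mathcal{J}(\mathcal{G}_t)$. If some carrier had to share a row or column with another gadget, the fallback would be to route it through a fresh occurrence carrier joined by a copy constraint, which Definition~\ref{def:occurrence_carriers} permits.
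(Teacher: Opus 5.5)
Your proposal is correct and follows essentially the same route as the paper: fix the canonical gauge on $I_0,J_0$, copy each gadget's rows of $U^{(t)}$ and columns of $V^{(t)}$ onto its own disjoint support, and conclude from the entrywise agreement $X_{ij}=U^{(t)}[i,:]V^{(t)}[:,j]=X^{(t)}_{ij}$ on each gadget's index rectangle. The paper instead enlarges the rectangle to $\mathcal{R}_t\times\mathcal{C}_t$ so that it includes the carrier rows and columns, and simply asserts the needed disjointness (which Definition~\ref{def:fresh_aux} does not actually guarantee for carrier indices); this is exactly the carrier-locality convention you flag and make explicit, and your separate checks of the cross-gadget copy/constant constraints and of the pinned block $X[I_0,J_0]=I_3B=B$ spell out steps the paper leaves implicit.
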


\begin{proof}
By Definitions~\ref{def:fresh_aux}--\ref{def:occurrence_carriers}, each gadget
$\mathcal{G}_t$ touches a set of row indices and a set of column indices.
Define
\begin{equation}\label{eq:globalembed:rowset}
\mathcal{R}_t \;:=\; I_0 \;\cup\; \mathcal{I}(\mathcal{G}_t)\;\cup\;\{r\text{ of carriers in }\mathcal{G}_t\},
\end{equation}
and
\begin{equation}\label{eq:globalembed:colset}
\mathcal{C}_t \;:=\; J_0 \;\cup\; \mathcal{J}(\mathcal{G}_t)\;\cup\;\{c\text{ of carriers in }\mathcal{G}_t\}.
\end{equation}
Moreover, the construction ensures the disjointness property: for $t\neq t'$,
\begin{equation}\label{eq:globalembed:disjoint_rows}
(\mathcal{R}_t\setminus I_0)\cap(\mathcal{R}_{t'}\setminus I_0)=\emptyset,
\end{equation}
and
\begin{equation}\label{eq:globalembed:disjoint_cols}
(\mathcal{C}_t\setminus J_0)\cap(\mathcal{C}_{t'}\setminus J_0)=\emptyset.
\end{equation}

Define global $U\in\mathbb{R}^{m\times 3}$ and $V\in\mathbb{R}^{3\times n}$ as follows.
First impose the canonical gauge:
\begin{equation}\label{eq:globalembed:global_gauge}
U[I_0,:]=I_3,\qquad V[:,J_0]=B.
\end{equation}
Next, for each gadget $t$ and each $i\in \mathcal{R}_t\setminus I_0$, set
\begin{equation}\label{eq:globalembed:define_U_rows}
U[i,:] \;:=\; U^{(t)}[i,:].
\end{equation}
Similarly, for each gadget $t$ and each $j\in \mathcal{C}_t\setminus J_0$, set
\begin{equation}\label{eq:globalembed:define_V_cols}
V[:,j] \;:=\; V^{(t)}[:,j].
\end{equation}
Finally, set all remaining undefined rows of $U$ and columns of $V$ arbitrarily (e.g.\ $0$).
This construction is well-defined: by \eqref{eq:globalembed:disjoint_rows}, each $i\notin I_0$
belongs to at most one set $\mathcal{R}_t$, and by \eqref{eq:globalembed:disjoint_cols}, each
$j\notin J_0$ belongs to at most one set $\mathcal{C}_t$.

Now define the global matrix
\begin{equation}\label{eq:globalembed:global_X}
X \;:=\; U V.
\end{equation}

Fix any gadget $\mathcal{G}_t$. Every affine constraint appearing in $\mathcal{G}_t$
mentions only entries $X_{ij}$ with $(i,j)\in\mathcal{R}_t\times\mathcal{C}_t$.
For any such $(i,j)$ we have, by \eqref{eq:globalembed:define_U_rows} and
\eqref{eq:globalembed:define_V_cols},
\begin{equation}\label{eq:globalembed:entry_agreement}
U[i,:]=U^{(t)}[i,:]\quad\text{and}\quad V[:,j]=V^{(t)}[:,j],
\end{equation}
and therefore
\begin{equation}\label{eq:globalembed:entrywise_equal}
X_{ij} \;=\; U[i,:]\,V[:,j] \;=\; U^{(t)}[i,:]\,V^{(t)}[:,j] \;=\; X^{(t)}_{ij}.
\end{equation}
Thus, every affine constraint of $\mathcal{G}_t$ holds for $X$ because it holds for
$X^{(t)}$ by assumption~\ref{it:globalembed:local_sat}. Likewise, the designated carrier
values used by $\mathcal{G}_t$ match because $X$ and $X^{(t)}$ agree entrywise on
$\mathcal{R}_t\times\mathcal{C}_t$ by \eqref{eq:globalembed:entrywise_equal}, using
assumption~\ref{it:globalembed:carriers}.

Since $\mathcal{G}_t$ was arbitrary, $X$ satisfies all gadget constraints simultaneously
and matches all designated carrier values.
\end{proof}

\subsection{Forward direction (soundness)}\label{sec:correctness:soundness}
\begin{theorem}[Soundness]\label{thm:correctness:soundness}
If the arithmetic circuit instance is satisfiable, then the constructed ARM$(3)$
instance has a feasible solution $X$ satisfying all affine constraints and
$\rank(X)\le 3$.
\end{theorem}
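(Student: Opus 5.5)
We prove the statement by an explicit construction of a single rank-$3$ factorization. We do not stitch together separately built matrices: each gadget gets local factors in the canonical gauge \eqref{eq:correctness:canonical_gauge}, and these are merged by Lemma~\ref{lem:correctness:global_embedding}. Concretely, we fix a satisfying input $x$ with $p_C(x)=0$. By Lemma~\ref{lem:circuit_semantics_equiv} it yields gate values $(z_g)_{g\in G}$ satisfying $\mathrm{GateEq}(C)$ and $z_{g_{\mathrm{out}}}=0$. We give every wire occurrence the value of its source gate. Then every copy constraint \eqref{eq:copy}, every constant constraint \eqref{eq:const}, and the output constraint hold at the level of prescribed carrier values. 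The auxiliary carriers $s,t$ from the inequality encodings receive the witnesses of Lemmas~\ref{lem:nonneg}--\ref{lem:pos}.

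Next we build, for each gadget $\mathcal{G}_t$, local factors $U^{(t)},V^{(t)}$ with $U^{(t)}[I_0,:]=I_3$ and $V^{(t)}[:,J_0]=B$. The only data that matter are the rows of $U^{(t)}$ indexed by $\mathcal{R}_t\setminus I_0$ and the columns of $V^{(t)}$ indexed by $\mathcal{C}_t\setminus J_0$. The key observation is that, by Definitions~\ref{def:fresh_aux}--\ref{def:occurrence_carriers}, every carrier of $\mathcal{G}_t$ sits in a row and a column that no other gadget and no gauge constraint touches. Hence we may choose these factor rows and columns freely. Let $e_1$ denote the first standard basis vector of $\mathbb{R}^3$.
\begin{itemize}
\item \emph{Single carrier with value $v$} (constant, copy, addition and negation gadgets, where each carrier has its own fresh row and column): take $U[r,:]=e_1^\top$ and $V[:,c]=v\,e_1$. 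This gives $X_{r,c}=v$.
\item \emph{Multiplication gadget} \eqref{eq:2x2_block} with $c=ab$: take $U[r_1,:]=e_1^\top$, $U[r_2,:]=b\,e_1^\top$, $V[:,c_1]=e_1$, $V[:,c_2]=a\,e_1$. The block then equals $\begin{pmatrix}1&a\\ b&ab\end{pmatrix}$, as required.
\end{itemize}
Squares, inverses and positivity gadgets are all instances of the multiplication gadget, with a constant or copy constraint on one carrier. They are therefore covered by the same recipe. Every affine constraint of $\mathcal{G}_t$ only mentions entries in $\mathcal{R}_t\times\mathcal{C}_t$, and it holds because the carrier values satisfy the gate semantics.

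Finally we apply Lemma~\ref{lem:correctness:global_embedding}. It gives global $U\in\mathbb{R}^{m\times3}$ and $V\in\mathbb{R}^{3\times n}$ such that $X:=UV$ agrees with each $X^{(t)}$ on $\mathcal{R}_t\times\mathcal{C}_t$, so all gadget constraints hold. The gauge block satisfies $X[I_0,J_0]=I_3B=B$. The remaining constraints are the cross-gadget copy constraints, which mention two carriers from different gadgets. Each carrier entry equals its prescribed value, and copied occurrences were given equal values, so these hold as well. Since $X=UV$ with inner dimension $3$, we have $\rank(X)\le 3$.

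The step needing the most care is the disjointness hypothesis \eqref{eq:globalembed:disjoint_rows}--\eqref{eq:globalembed:disjoint_cols}. We must check that no carrier row or column of one gadget coincides with a carrier or auxiliary index of another gadget, or with the gauge indices $I_0,J_0$. Otherwise two local recipes would compete for the same factor row. We justify this from the occurrence-based carrier convention. A copy constraint links two gadgets only through an affine equation between entries, never through a shared index, so it imposes no factor-level coupling.
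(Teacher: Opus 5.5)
Your proposal is correct and follows the paper's proof. Both fix a satisfying circuit assignment, produce canonical-gauge local factors $U^{(t)},V^{(t)}$ for each gadget, merge them with Lemma~\ref{lem:correctness:global_embedding}, and read off $\rank(X)\le 3$ from the inner dimension of $X=UV$. The paper cites the gadget completeness statements of Sections~\ref{sec:rank_forcing}--\ref{sec:gadgets}; you instead write the local witnesses down explicitly (e.g.\ $U[r_2,:]=b\,e_1^\top$, $V[:,c_2]=a\,e_1$ for the multiplication block) and check cross-gadget copy constraints at the entry level, which makes the embedding lemma's hypotheses directly verifiable while relying on its disjointness assumption exactly as the paper does.
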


\begin{proof}
Fix a satisfying assignment for every circuit wire occurrence and gate output, giving
real values $\{z_i\}$. For each occurrence-carrier $(r,c)$ in the reduction, prescribe
the designated value $x:=z$ for that occurrence. By Definition~\ref{def:occurrence_carriers}
and consistency of the circuit assignment, all copy constraints equating occurrences are
respected.

Now consider any gadget $\mathcal{G}_t$. Since the circuit assignment satisfies the gate
semantics, the designated input/output values supplied to $\mathcal{G}_t$ satisfy the
gadget’s intended relation. Therefore, the corresponding gadget \emph{completeness}
statement established in Section~\ref{sec:gadgets} provides a rank-$3$ witness for that
gadget under the pinned block, i.e., local factors $U^{(t)},V^{(t)}$ satisfying the
canonical gauge \eqref{eq:correctness:canonical_gauge} and all gadget constraints with those
designated values. (For linear gadgets this holds without using the rank bound; for nonlinear
gadgets it holds by the determinant-forcing completeness results of
Section~\ref{sec:rank_forcing}.)

Applying Lemma~\ref{lem:correctness:global_embedding} to the family of all gadgets yields
global factors $U,V$ and hence $X=UV$ satisfying \emph{all} affine constraints.
Finally, $\rank(X)\le 3$ holds because $X$ is expressed as a product of an $m\times 3$
and a $3\times n$ factor.

\end{proof}

\subsection{Converse direction (completeness)}\label{sec:correctness:completeness}
\begin{theorem}[Completeness]\label{thm:correctness:completeness}
If the constructed ARM$(3)$ instance admits a feasible matrix $X$ satisfying all affine
constraints and $\rank(X)\le 3$, then the decoded values
\[
z := X_{r,c}
\]
on the designated occurrence-carriers form a satisfying assignment to the arithmetic
circuit instance (and hence induce a satisfying assignment on the original circuit
variables after identifying occurrences via the copy constraints).
\end{theorem}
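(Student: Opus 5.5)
The plan is to decode directly from the feasible matrix and verify the gate equations one by one. Let $X$ satisfy all affine constraints of the instance and $\rank(X)\le 3$. By Lemma~\ref{lem:correctness:gauge}, the pinned block $X[I_0,J_0]=B$ lets us fix a factorization $X=UV$ in canonical gauge. Strictly speaking this is not needed for decoding, because each designated value $z:=X_{r,c}$ is an entry of $X$ itself and so does not depend on the factorization. It is, however, the setting in which the gadget-level rank arguments of Section~\ref{sec:rank_forcing} are stated, so I will work in it. Define the decoded assignment by reading every occurrence-carrier. By Definition~\ref{def:occurrence_carriers} and Lemma~\ref{lem:const_copy}, all occurrences linked by copy constraints decode to the same real number. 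Hence the assignment descends to a well-defined value $z_g$ for each gate $g$ of the original circuit.

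Next I will check $\mathrm{GateEq}(C)$ gate by gate, splitting by gate type.

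\emph{Linear gates.} For constant, input, addition and negation gates, the compiled constraint is literally the gate equation written on carriers (Lemmas~\ref{lem:const_copy}, \ref{lem:add}, \ref{lem:neg}). These hold for any feasible $X$, independently of the rank bound.

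\emph{Multiplication gates.} Here I invoke the soundness half of Lemma~\ref{lem:det_forcing}, via Lemma~\ref{lem:mul}. If $z\neq xy$, then by Lemma~\ref{lem:rank_obstruction} some $4\times4$ submatrix is nonsingular, and Lemma~\ref{lem:rank_monotone} then contradicts $\rank(X)\le 3$.

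\emph{Inverse and inequality encodings.} These reduce to multiplication gadgets together with constant/copy constraints, via Lemma~\ref{lem:inv} and Lemmas~\ref{lem:nonneg}--\ref{lem:pos}.

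\emph{Output.} The output constraint $X_{r_{\mathrm{out}},c_{\mathrm{out}}}=0$ gives $z_{g_{\mathrm{out}}}=0$.

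Lemma~\ref{lem:circuit_semantics_equiv} then yields an input $x$ (the decoded input carriers) with $p_C(x)=0$.

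I expect the main obstacle to be the multiplication step: showing that the $4\times4$ submatrix invoked in Lemma~\ref{lem:rank_obstruction} really has the block-diagonal shape used there. That shape requires the cross entries between two gauge rows/columns and the gadget rows/columns to vanish, and the gadget construction \eqref{eq:2x2_block} does not pin those entries. I would first try to add affine constraints fixing them to zero, so that $\det Y=\det(B_{2\times2})(c-ab)$ holds literally.

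This fix has to be checked against the completeness side. If the gadget rows vanish on all of $J_0$, then in canonical gauge $U[i,:]B=0$, which forces $U[i,:]=0$ and contradicts the entry $1$. The zero pattern must therefore involve only two of the three gauge columns, with the third left free. In that case the correct obstruction is a $4\times4$ minor whose expansion is $(c-ab)$ times a fixed nonzero constant plus terms that the constraints kill. I would compute this minor via a Schur complement against the pinned $2\times2$ block, and confirm that the witness from Lemma~\ref{lem:det_forcing} (Completeness) still exists. Everything else in the proof is routine bookkeeping.
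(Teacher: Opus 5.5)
Your outline matches the paper's own proof. You read off the carriers; you are right that the canonical gauge of Lemma~\ref{lem:correctness:gauge} does no work here, since $X_{r,c}$ is an entry of $X$. You check linear gates directly, obtain products from a $4\times4$ obstruction, and finish with the output constraint and Lemma~\ref{lem:circuit_semantics_equiv}.

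You depart from the paper at the multiplication step. The paper simply invokes gadget soundness (Lemma~\ref{lem:det_forcing} via Lemma~\ref{lem:rank_obstruction}). Your suspicion is more than justified: for the gadget \eqref{eq:2x2_block} as constructed, that lemma is false, and so is the theorem. Take $B=I_3$, $U[I_0,:]=I_3=V[:,J_0]$, $U[r_1,:]=(1,a,0)$, $U[r_2,:]=(b,c,0)$, $V[:,c_1]=e_1$, $V[:,c_2]=e_2$, where $e_k$ is the standard basis of $\R^3$. This realizes $\bigl(\begin{smallmatrix}1&a\\ b&c\end{smallmatrix}\bigr)$ for arbitrary $a,b,c$ with $\rank X\le 3$. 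The pinned data are only $B$ and a $2\times2$ block on disjoint rows and columns, which is always completable to rank $3$. The remaining carriers, on their own fresh rows and columns, are realized the same way. So the circuit for $x\cdot x+1$ with output constraint $0$ compiles to a feasible instance (decoding $c=-1$), although it is unsatisfiable over $\R$. The extra zero constraints are therefore mandatory, not something to ``first try.'' What your argument proves is completeness of the amended reduction, which is the most that can be proved here.

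The amendment works, and it is simpler than you expect. Choose $i_1,i_2\in I_0$ and $j_1,j_2\in J_0$ with $B':=B[\{i_1,i_2\},\{j_1,j_2\}]$ invertible; this is always possible when $\det B\neq 0$. Add $X[\{r_1,r_2\},\{j_1,j_2\}]=0$ and $X[\{i_1,i_2\},\{c_1,c_2\}]=0$. Then $Y:=X[\{i_1,i_2,r_1,r_2\},\{j_1,j_2,c_1,c_2\}]$ is literally block diagonal, so $\det Y=\det(B')\,(c-ab)$, with no leftover terms and no Schur complement. In fact, zeroing only one of the two off-diagonal blocks already makes $Y$ block triangular with the same determinant. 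Hence $\rank X\le 3$ forces $c=ab$, which is all this theorem needs.

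Your observation that the third gauge index must stay free is exactly what keeps the forward direction alive. Take $B=I_3$ and let $i_1,i_2$ (resp.\ $j_1,j_2$) be the first two elements of $I_0$ (resp.\ $J_0$). The new constraints then force $U[r_k,:]\in\R e_3^\top$ and $V[:,c_k]\in\R e_3$. Whenever $c=ab$, a witness is $U[r_1,:]=e_3^\top$, $U[r_2,:]=b\,e_3^\top$, $V[:,c_1]=e_3$, $V[:,c_2]=a\,e_3$. This should replace the paper's completeness witness for Lemma~\ref{lem:det_forcing}, which ignores the cross entries. The new constraints involve only the gauge indices and the gadget's own fresh indices, so they do not disturb the gadget-by-gadget embedding of Lemma~\ref{lem:correctness:global_embedding}. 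With these zero constraints written into the construction, your proof is complete.
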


\begin{proof}
Let $X$ be feasible with $\rank(X)\le 3$. Since the instance includes the pinned block
constraints \eqref{eq:correctness:gauge_block}, Lemma~\ref{lem:correctness:gauge} applies;
in particular, we may work in the canonical gauge \eqref{eq:correctness:canonical_gauge}
and all designated carrier values are well-defined scalars $x=X_{r,c}$.

Fix any gadget $\mathcal{G}_t$. The matrix $X$ satisfies all affine constraints of
$\mathcal{G}_t$ by feasibility of the full instance. Moreover, any submatrix restriction of
$X$ has rank at most $\rank(X)\le 3$. Therefore the corresponding gadget \emph{soundness}
statement (proved in Section~\ref{sec:gadgets} using rank obstruction from
Section~\ref{sec:rank_forcing}) applies: under rank $\le 3$ and the pinned block, satisfaction
of the gadget’s affine constraints forces its designated carrier values to satisfy the
intended gate relation in $\mathbb{R}$.

Hence every gate relation of the compiled circuit holds on the decoded designated values.
Because copy/fan-out constraints explicitly identify equal occurrences, the decoded values
propagate consistently along the circuit DAG, and by topological induction the entire
arithmetic circuit is satisfied. Therefore the original instance is satisfiable.
\end{proof}

\begin{remark}[Exactly where rank is used]\label{rem:correctness:rank_used}
Affine wiring (constants, copies, and linear constraints) is enforced directly by the affine
equalities of the ARM instance. The rank bound $\rank(X)\le 3$ is invoked only through the
gadget soundness lemmas (proved in Section~\ref{sec:gadgets} via rank obstruction from
Section~\ref{sec:rank_forcing}), i.e., only to enforce the nonlinear semantics encoded by
the determinant-based gadgets.
\end{remark}

\section{Membership in $\exists\mathbb{R}$}
\label{sec:membership_etr}

\begin{theorem}[Membership]
\label{thm:arm_in_etr}
$\mathrm{ARM}(3)\in \exists\mathbb{R}$.
\end{theorem}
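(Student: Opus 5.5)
The plan is to specialize the factorization-witness argument of Theorem~\ref{thm:arm_in_existr} (equivalently Theorem~\ref{thm:arm-in-etr}) to $k=3$, and to make explicit the polynomial-time many-one reduction to ETR that the definition of $\exists\mathbb{R}$ requires. Given an instance $(A_\ell,b_\ell)_{\ell=1}^q$ with $A_\ell\in\mathbb{Q}^{m\times n}$, I would output the existential sentence
\[
\exists U\in\mathbb{R}^{m\times 3}\ \exists V\in\mathbb{R}^{3\times n}:\ \bigwedge_{\ell=1}^q\Bigl(\sum_{i=1}^m\sum_{j=1}^n (A_\ell)_{ij}\sum_{t=1}^3 U_{it}V_{tj}-b_\ell=0\Bigr).
\]
This has $3(m+n)$ variables and $q$ polynomial equalities of degree $2$.

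Correctness follows from Lemma~\ref{lem:rank_factorization_equiv} with $k=3$. If $X$ is feasible with $\rank(X)\le 3$, then the lemma gives $X=UV$, and the affine constraints become exactly the displayed equalities. Conversely, any satisfying $(U,V)$ yields $X:=UV$, which satisfies the affine constraints and has $\rank(X)\le 3$.

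Next comes the encoding and size accounting. Each constraint polynomial has at most $3mn$ monomials $U_{it}V_{tj}$, with coefficients taken directly from $A_\ell$. The size of the sentence is therefore $O(q\,mn)$ times the maximum coefficient bit-length, which is polynomial in the input size. If integer coefficients are required by the ETR syntax of Section~\ref{sec:prelim:etr}, I would multiply each equation by the least common multiple of its denominators. This at most multiplies the bit-length by the number of terms, so it stays polynomial. The translation is a straightforward loop over $(\ell,i,j,t)$, hence computable in polynomial time.

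The only point needing care is this bookkeeping: listing all $3mn$ monomials explicitly in sparse form still gives polynomial size. To be safe against dense inputs, I would note that monomials with $(A_\ell)_{ij}=0$ can be omitted. Either way the size is polynomial in $m$, $n$, $q$ and the bit-length of the input, since the input itself encodes $m\times n$ matrices. There is no genuine obstacle; the theorem is the $k=3$ case of Theorem~\ref{thm:arm-in-etr}.
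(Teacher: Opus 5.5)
Your proof is correct, but for this particular theorem the paper takes a different route. You reuse the factorization witness of Lemma~\ref{lem:rank_factorization_equiv}, which is exactly the argument of Theorems~\ref{thm:arm_in_existr} and~\ref{thm:arm-in-etr}. The paper's proof here instead quantifies only over the entries $X_{ij}$. It replaces $\rank(X)\le 3$ by the vanishing of all $\binom{m}{4}\binom{n}{4}$ quartic $4\times 4$ minors (Lemma~\ref{lem:rank_via_minors}), each expanded by the $24$-term Leibniz formula.

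Your encoding is considerably leaner. It uses $3(m+n)$ auxiliary variables and only $q$ equations of degree $2$, for total size $O(qmn)$ up to coefficient length. The paper's version needs $O(m^4n^4)$ degree-$4$ constraints. Your encoding also extends verbatim to every $k$, even when $k$ is part of the input: one may assume $k<\min(m,n)$, so the witness has $k(m+n)$ variables. The minor encoding is polynomial only because $k$ is a fixed constant. For general $k$ there are $\binom{m}{k+1}\binom{n}{k+1}$ minors, each with $(k+1)!$ Leibniz terms.

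What the determinantal route buys is a witness-free description. The feasible set is written directly as a real algebraic set in the original variables $X$, with no projection from $(U,V)$-space. This matches the minor-obstruction viewpoint used in the hardness gadgets.

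For the bare membership claim either argument suffices. Your size bookkeeping is adequate: you drop zero coefficients and clear denominators equation by equation, keeping everything polynomial in the dense input encoding.
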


\noindent
\textbf{Scope note.}
This section is solely the complexity-class membership argument (an encoding into an existential sentence over the reals). It is logically independent of the hardness reduction proved in the preceding sections.

\subsection{Determinantal encoding of $\rank(X)\le 3$}

We use the classical characterization of rank-bounded matrices via vanishing minors.

\begin{lemma}[Rank via vanishing minors]
\label{lem:rank_via_minors}
Let $X\in\mathbb{R}^{m\times n}$ and let $k\ge 1$. Then
\[
\rank(X)\le k
\quad\Longleftrightarrow\quad
\det\!\big(X[I,J]\big)=0\ \text{ for all } I\subseteq[m],~J\subseteq[n]\text{ with }|I|=|J|=k+1.
\]
\end{lemma}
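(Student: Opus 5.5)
We argue the two implications separately, using only elementary linear algebra over $\mathbb{R}$ together with Lemma~\ref{lem:rank_monotone}.

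For the forward direction, assume $\rank(X)\le k$ and fix $I\subseteq[m]$, $J\subseteq[n]$ with $|I|=|J|=k+1$. By Lemma~\ref{lem:rank_monotone}, $\rank(X[I,J])\le\rank(X)\le k<k+1$. Hence the square matrix $X[I,J]$ is singular, so its determinant vanishes. This is the same argument as Lemma~\ref{lem:rank3_minors}, with $3$ replaced by $k$.

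For the converse we prove the contrapositive. Suppose $\rank(X)=r\ge k+1$.
\begin{enumerate}
\item Since row rank equals $r$, choose $r$ linearly independent rows of $X$, and keep any $k+1$ of them. These form a set $I$ with $|I|=k+1$, and the submatrix $X[I,:]$ has full row rank $k+1$.
\item The column rank of $X[I,:]$ equals its row rank, which is $k+1$. So we can choose $k+1$ linearly independent columns of $X[I,:]$, indexed by a set $J$ with $|J|=k+1$.
\item The square matrix $X[I,J]$ then has linearly independent columns, so it is invertible and $\det(X[I,J])\neq 0$.
\end{enumerate}
This contradicts the hypothesis that all $(k+1)\times(k+1)$ minors vanish, so $\rank(X)\le k$.

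The only step needing care is the extraction of a nonsingular $(k+1)\times(k+1)$ submatrix. It is handled by applying row-rank/column-rank equality twice: first to $X$ to select $I$, then to $X[I,:]$ to select $J$.

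For the membership application we take $k=3$. There are $\binom{m}{4}\binom{n}{4}=O(m^4n^4)$ minor equations, each a degree-$4$ polynomial with $24$ terms and integer coefficients in $\{\pm1\}$. This is polynomial size, so conjoining these equations with the affine constraints gives an ETR sentence of polynomial size. (Alternatively, the factorization argument of Theorem~\ref{thm:arm_in_existr} already yields membership.)
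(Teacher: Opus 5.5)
Your proof is correct and follows the same route as the paper. The forward direction uses rank monotonicity of submatrices, and the converse argues by contradiction from the existence of a nonsingular $(k+1)\times(k+1)$ submatrix when $\rank(X)\ge k+1$; the only difference is that you prove this existence step explicitly (choosing $I$ by row rank, then $J$ via row rank $=$ column rank of $X[I,:]$), whereas the paper invokes it as the standard rank--minor equivalence with a citation.
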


\begin{proof}
$(\Rightarrow)$ If $\rank(X)\le k$, then every $(k+1)\times(k+1)$ submatrix $X[I,J]$ has rank at most $k$, hence $\det(X[I,J])=0$.

$(\Leftarrow)$ Suppose all $(k+1)\times(k+1)$ minors vanish. If $\rank(X)\ge k+1$, then by the standard rank--minor equivalence there exist index sets $I,J$ with $|I|=|J|=k+1$ such that $X[I,J]$ has full rank $k+1$, which forces $\det(X[I,J])\neq 0$, a contradiction. Hence $\rank(X)\le k$.
See, e.g., \cite{BrunsVetter88,Harris92} for background on determinantal varieties and the rank--minor characterization.
\end{proof}

Specializing Lemma~\ref{lem:rank_via_minors} to $k=3$, we obtain the polynomial system
\begin{equation}
\det\!\big(X[I,J]\big)=0
\qquad
\forall\, I\subseteq[m],~J\subseteq[n]\ \text{ with }\ |I|=|J|=4.
\label{eq:all_4x4_minors_vanish}
\end{equation}

\subsection{Encoding an $\mathrm{ARM}(3)$ instance as an ETR instance}

Recall that an instance of $\mathrm{ARM}(3)$ is specified by rational data
$A_1,\ldots,A_q\in\mathbb{Q}^{m\times n}$ and $b_1,\ldots,b_q\in\mathbb{Q}$, and asks whether there exists
$X\in\mathbb{R}^{m\times n}$ such that
\begin{equation}\label{eq:arm3_instance}
\langle A_\ell, X\rangle=b_\ell \quad (\ell\in[q]),
\qquad\text{and}\qquad
\rank(X)\le 3,
\end{equation}
where the Frobenius inner product is defined by
\begin{equation}\label{eq:frob_inner_product}
\langle A,X\rangle \;:=\; \sum_{i=1}^m\sum_{j=1}^n A_{ij}X_{ij}.
\end{equation}

\begin{proof}[Proof of Theorem~\ref{thm:arm_in_etr}]
We construct an existential sentence over the real variables $\{X_{ij}\}_{i\in[m],\,j\in[n]}$ in the equality-only ETR normal form:
\begin{equation}
\exists\, X\in\mathbb{R}^{m\times n}:\;
\Big(\ \bigwedge_{\ell=1}^{q}\big(\langle A_\ell, X\rangle-b_\ell=0\big)\ \Big)
\ \wedge\
\Big(\ \bigwedge_{\substack{I\subseteq[m],\,J\subseteq[n]\\ |I|=|J|=4}}
\det\!\big(X[I,J]\big)=0\ \Big).
\label{eq:etr_encoding_arm3}
\end{equation}

\noindent\textbf{(i) Syntactic validity as ETR equalities.}
Each constraint $\langle A_\ell, X\rangle-b_\ell=0$ is an affine polynomial equality in the variables $X_{ij}$ with rational coefficients.
Each constraint $\det(X[I,J])=0$ is a polynomial equality of degree $4$ with integer coefficients. Moreover, each determinant can be written explicitly
as a polynomial in the entries of $X[I,J]$ using the Leibniz formula
\begin{equation}
\det(Y)=\sum_{\pi\in S_{4}} \mathrm{sgn}(\pi)\prod_{t=1}^{4} Y_{t,\pi(t)},
\label{eq:leibniz_det_4}
\end{equation}
so every conjunct in \eqref{eq:etr_encoding_arm3} is a bona fide polynomial equality over $(\mathbb{R},+,\times)$.

Because the ETR encoding model uses integer-coefficient polynomials, we clear denominators in the affine constraints:
for each $\ell\in[q]$, multiply the equality $\langle A_\ell, X\rangle-b_\ell=0$ by the least common multiple of the denominators
appearing in the entries of $A_\ell$ and in $b_\ell$, yielding an equivalent integer-coefficient linear polynomial equality.
This normalization increases the bit-size by at most a polynomial factor in the input length (see, e.g., \cite{BochnakCosteRoy98,Renegar92}).

\noindent\textbf{(ii) Correctness (equisatisfiability).}
If $X$ satisfies \eqref{eq:arm3_instance}, then it satisfies all affine equalities in \eqref{eq:etr_encoding_arm3}, and
by Lemma~\ref{lem:rank_via_minors} with $k=3$ it satisfies all $4\times 4$ minor equalities \eqref{eq:all_4x4_minors_vanish}.
Conversely, if $X$ satisfies \eqref{eq:etr_encoding_arm3}, then all affine equalities hold and all $4\times 4$ minors vanish, so
Lemma~\ref{lem:rank_via_minors} implies $\rank(X)\le 3$; hence \eqref{eq:arm3_instance} holds.

\noindent\textbf{(iii) Polynomial-time construction and size bounds.}
The number of real variables is $mn$.
The number of affine equalities is $q$.
The number of determinantal equalities is
\begin{equation}
\binom{m}{4}\binom{n}{4}=O(m^4n^4).
\label{eq:count_minors}
\end{equation}
Each determinantal equality expands (via \eqref{eq:leibniz_det_4}) into a constant-size sum of $4!=24$ monomials of degree $4$.
Thus the quantifier-free conjunction in \eqref{eq:etr_encoding_arm3} has total encoding length polynomial in the input size
(in particular, polynomial in the bit-length of the rational input specifying the $A_\ell$ and $b_\ell$ and in $m,n,q$).
Therefore the mapping from \eqref{eq:arm3_instance} to \eqref{eq:etr_encoding_arm3} is computable in polynomial time, and we obtain
$\mathrm{ARM}(3)\in\exists\mathbb{R}$.
\end{proof}

\begin{remark}[Relation to the factorization witness]
Theorem~\ref{thm:arm_in_etr} proves $\mathrm{ARM}(k)\in\exists\mathbb{R}$ for every fixed $k$ via the existential
factorization witness $X=UV$. We include it here both to make the membership direction completely explicit in the
notation of this manuscript, and to align it with the rank-obstruction viewpoint used throughout the reduction
(Sections~\ref{sec:rank3_template}--\ref{sec:reduction_interface}). In particular, for $k=3$ it is convenient to
record the especially simple determinantal formulation \eqref{eq:all_4x4_minors_vanish}.
\end{remark}

\section{Conclusion}\label{sec:conclusion}
We established an exact real-algebraic complexity classification for \emph{Affine Rank Minimization} under a fixed constant rank bound. For every fixed $k\ge 1$, $\mathrm{ARM}(k)\in\exists\mathbb{R}$ via an explicit existential encoding of $\mathrm{rank}(X)\le k$. Our main result shows that this upper bound is tight already at rank $3$: we gave a polynomial-time many-one reduction from $\mathrm{ETR}$ to $\mathrm{ARM}(3)$ in which the output instance contains only rational affine equalities on the entries of a \emph{single} matrix variable $X$, together with the single global constraint $\mathrm{rank}(X)\le 3$.

The reduction confines all nonlinearity to the rank bound. Linear circuit semantics (constants, copying/fan-out, addition, negation) are enforced directly by affine constraints, while multiplication is enforced by a constant-size determinant-forcing primitive: combined with a pinned full-rank $3\times 3$ gauge block, a $2\times 2$ gadget yields a $4\times 4$ minor whose determinant is a nonzero constant multiple of $(c-ab)$, so any violation of $c=ab$ forces $\mathrm{rank}(X)\ge 4$. The pinned gauge also fixes the $\mathrm{GL}(3)$ ambiguity and enables a global composition argument, ensuring that all constant-size gadgets embed into one global factorization $X=UV$ without unintended coupling. This proves soundness and completeness of the encoding and preserves polynomial bounds on instance size and bit complexity.

Consequently, $\mathrm{ARM}(3)$ is $\exists\mathbb{R}$-complete. In particular, deciding feasibility of purely affine constraints under $\mathrm{rank}(X)\le 3$ is as hard as real algebraic feasibility, and thus (barring standard complexity collapses) admits no general polynomial-time algorithm despite the fact that the only nonlinearity is the rank constraint.

\bibliographystyle{elsarticle-num}
\bibliography{refs}  

\end{document}